\def\arXiv#1{\href{http://arxiv.org/abs/#1}{arXiv:#1}}
\newcommand{\ket}[1]{|#1\rangle} 
\newcommand{\bra}[1]{\langle#1|} 
\newcommand{\kb}[1]{|#1\rangle\langle#1|} 
\def\?[#1]{\textbf{[#1]}\marginpar{\Large{\textbf{??}}}}
\def\smallsection#1{\smallskip\noindent\textbf{#1}.}
\let\epsilon=\varepsilon 
\newcommand{\RR}{{\mathbb R}}
\newcommand{\NN}{{\mathbb N}}
\newcommand{\R}{\mathbbm{R}}
\newcommand{\cE}{\mathcal{E}}
\newcommand{\C}{\mathbbm{C}}
\newcommand{\N}{\mathbbm{N}}
\newcommand{\CC}{{\mathbb C}}
\newcommand{\cD}{\mathcal{D}}
\newtheorem{example}{Example}
\newcommand{\F}{\mathcal{F}}
\newcommand{\nn}{\nonumber}
\newcommand{\comment}[1]{}
\newcommand{\ran}{\operatorname{ran}}
\newtheorem{theorem}{Theorem}
\newtheorem{prop}{Proposition}[section]
\newtheorem{ass}{Assumption}
\newcommand{\eps}{\varepsilon}
\newtheorem{lemm}[prop]{Lemma}
\newtheorem{corr}[prop]{Corollary}
\newtheorem{rem}{Remark}
\numberwithin{equation}{section}
\newcommand{\1}{\mathbbm{1}}
\DeclareMathOperator{\Spec}{Spec}
\newcommand{\cB}{{\mathcal{B}}}
\newcommand{\cL}{\mathcal{L}}
\newcommand{\cH}{\mathcal{H}}
\newcommand{\cT}{\mathcal{T}}
\let\Re=\Real
\DeclareMathOperator{\tr}{tr}
\newcommand{\ro}[1]{{\color{BurntOrange} #1}}
\newcommand\reallywidehat[1]{\arraycolsep=0pt\relax%
\begin{array}{c}
\stretchto{
  \scaleto{
    \scalerel*[\widthof{\ensuremath{#1}}]{\kern-.5pt\bigwedge\kern-.5pt}
    {\rule[-\textheight/2]{1ex}{\textheight}} 
  }{\textheight} %
}{0.5ex}\\           
#1\\                 
\rule{-1ex}{0ex}
\end{array}
}
\title[Quantum Zeno effect for open quantum systems]{Quantum Zeno effect for open quantum systems}
\author{Simon Becker}
\email{simon.becker@damtp.cam.ac.uk}
\address{University of Cambridge,
DAMTP, Wilberforce Rd, Cambridge CB3 0WA, UK}
\author{Nilanjana Datta}
\email{n.datta@damtp.cam.ac.uk}
\address{University of Cambridge,
DAMTP, Wilberforce Rd, Cambridge CB3 0WA, UK}
\author{Robert Salzmann}
\email{rs2059@maths.cam.ac.uk}
\address{University of Cambridge,
DAMTP, Wilberforce Rd, Cambridge CB3 0WA, UK}
\begin{document}
\begin{abstract}
We prove the quantum Zeno effect in open quantum systems whose evolution, governed by quantum dynamical semigroups, is repeatedly and frequently interrupted by the action of a quantum operation. For the case of a quantum dynamical semigroup with a bounded generator, our analysis leads
to a refinement of existing results and extends them to a larger class of quantum operations. We also prove the existence of a novel strong quantum Zeno limit for quantum operations for which a certain spectral gap assumption, which all previous results relied on, is lifted. The quantum operations are instead required to satisfy a weaker property of strong power-convergence. In addition, we establish, for the first time, the existence of a quantum Zeno limit for the case of unbounded generators. We also provide a variety of physically interesting examples of quantum operations to which our results apply.
\comment{We then derive novel quantum Zeno dynamics for open systems whose time evolution is governed by unbounded generators. }
\end{abstract}
\maketitle
	\section{Introduction}
	The quantum Zeno effect describes the phenomenon that frequently measuring a quantum system slows down its time evolution and eventually freezes it completely. The effect has been named after the Greek philosopher Zeno who introduced an argument for the paradox that a flying arrow which is continuously observed cannot move and therefore never reaches its target.
	The quantum Zeno effect was theoretically predicted by Misra and Sudarshan in 1977 \cite{MisSud77} and experimental verification of the phenomenon was achieved in \cite{ItHeBoWi90,FiGuRai01}. 

	Besides its striking implications for fundamental physics, the quantum Zeno effect has many practical applications, for example in control of decoherence \cite{FJP04,HRBPK06}, quantum error correction \cite{EARV04,PSRDL12} and state preparation \cite{NTY03,NUY04,WYN08}.

    \medskip	
	
	Let us first consider the quantum Zeno effect for closed quantum systems. We associate to a closed quantum system a separable, possibly infinite-dimensional, complex Hilbert space, $\cH$, and the time evolution of the system is governed by Schrödinger's equation, which under suitable choice of units ($\hbar = 1$) can be written as 
	\begin{align}
\left\{
	\begin{aligned}
	i\partial_t\psi(t) &= H\psi(t)\nonumber \\
	\psi(0) &= \psi_0.
	\end{aligned}
	\right.
\end{align}
	Here $H$ denotes the Hamiltonian governing the dynamics of the system, and is a self-adjoint operator on $\cH$. The solution of Schrödinger's equation is given by $\psi(t)= e^{-itH}\psi_0$, with $\left(e^{-itH}\right)_{t\in\R}$ being the unitary group generated by $H$.
	
	In the simplest setup, the quantum Zeno effect for closed quantum systems can be formalized in the following way: The system starts in a pure state corresponding to some $\psi_0 \in\cH$ at time zero. For $t>0$ being the total time of the experiment and $n\in\N$, the system evolves for a time $t/n$ under Schrödinger's evolution and is then subjected to a binary von Neumann (i.e.~projective) measurement corresponding to the projections $\{\kb{\psi_0},\1-\kb{\psi_0}\}$. This process is repeated $n$ times. The quantum Zeno effect predicts that the probability, $p_n$, of always finding the system in the initial state $\psi_0$ (and thus with measurement outcome $\kb{\psi_0}$), converges to $1$ in the limit $n\to\infty$, i.e., 
	\begin{align*}
	p_n = \Big\|\big(\kb{\psi_0}e^{-itH/n}\big)^n\psi_0\Big\|^2 = \big|\langle\psi_0,e^{-itH/n}\psi_0\rangle\big|^{2n} \xrightarrow[n\to\infty]{}1.
	\end{align*}
	Hence, even though $\psi_0$ might not be an invariant state under Schrödinger's evolution, measuring the system frequently enough will freeze the system in the state $\psi_0$ in the limit of asymptotically many measurements.
	
	More generally, one can consider projective measurements $\{P,\1- P\}$ (where $P$ is a general projection operator) and mixed initial states. Given an initial state $\rho_0$, the probability of always obtaining the measurement outcome corresponding to $P$ when measuring the system repeatedly in time intervals of size $t/n$ is given by
	\begin{align}
	p_n = \tr\Big((Pe^{-itH/n})^n\rho_0(e^{itH/n}P)^n\Big).
	\end{align}
In this setup the quantum Zeno effect manifests itself in the convergence of this probability (which we call the {\em{survival probability}}) to the expectation value of $P$ in the initial state $\rho_0$ , i.e. $\lim_{n\to\infty}p_n = \tr(P\rho_0)$.

	Apart from the convergence of the survival probability, $p_n$, one might also be interested in the effective dynamics emerging from the process of repeatedly measuring the evolving system. Under the assumption that the quantum Zeno effect occurs and the initial state satisfies $\tr(P\rho_0)=1$, the only non-trivial part of the effective time evolution takes place in the invariant subspace of the projection $P$, which is referred to as the {\em{quantum Zeno subspace}}. In the following, we refer to $P$ as the {\em{quantum Zeno projection}}. Formally it is expected that the effective dynamics within this subspace (in the limit of asymptotically many periodic measurements) which is called the {\em{quantum Zeno dynamics}}, is given by
	\begin{align}
	\label{eq:UnitaryZenoDynamics}
	(Pe^{-itH/n})^n \xrightarrow[n\to\infty]{}e^{-itPHP}P.
	\end{align}
	Here, for $H$ being an unbounded operator, the expression $PHP$ can only be understood in the formal sense  and one needs to find a rigorous definition of the right self-adjoint operator which is the generator of the effective unitary time evolution. It is important to note that by unitarity of the effective time evolution in the quantum Zeno subspace one can infer the quantum Zeno effect from the quantum Zeno dynamics, i.e.
	\begin{align*}
	\lim_{n\to\infty}p_n = \tr\Big(e^{-itPHP}P\rho_0Pe^{itPHP}\Big) = \tr(P\rho_0).
	\end{align*}
	\medskip
	
	While the quantum Zeno effect has been thoroughly studied for closed quantum systems (i.e.~on Hilbert spaces), for both bounded and unbounded Hamiltonians (see \cite{ExIch05} or \cite{FP} and references therein for a review), the results on semigroups on Banach spaces, which is the right setup for open quantum systems, have been mostly restricted to semigroups generated by bounded operators \cite{MaShvi03,Ma04,G,MobWolf19}.  In this paper we extend the study of the quantum Zeno effect to open quantum systems whose dynamics is generated by unbounded operators. We also make a more refined analysis for the case of bounded generators, thus improving on existing results.
	
	\noindent
\subsection{Quantum Zeno effect for open quantum systems:} As mentioned above, in this work we extend and analyse the quantum Zeno effect and its associated dynamics for open quantum systems.
	As in this case the physical system has unavoidable interactions with its environment, for example a thermal bath in which it is placed, the time evolution of the system is no longer governed by a unitary group on its Hilbert space, $\cH$. If the coupling between the system and its environment is weak, the time evolution can be approximately described by a dynamical semigroup of completely positive, trace-preserving maps on the Banach space, $\cT(\cH)$, of trace-class operators which we denote by $\left(e^{t\cL}\right)_{t\ge0}$, with $\cL$ being the generator; a general discussion of dynamical semigroups can be found in Section~\ref{sec:semigroups}. As in the case of closed systems, we refer to the effective dynamics arising from the process of frequently performing projective measurements on the open system, while letting it evolve under the dynamics given by the semigroup $\left(e^{t\cL}\right)_{t \geq 0}$, as the {\em{quantum Zeno dynamics}}. 
	
	\medskip
	
	For a general Banach space, $X$, Matolcsi and Shvidkoy proved in 2003~\cite{MaShvi03} that for $\cL$ being a bounded linear operator and $P$ being a bounded projection on $X$,
	\begin{align}
	\label{eq:ProductFormulaProjection}
	(Pe^{t\cL/n})^n \xrightarrow[n\to\infty]{} e^{tP\cL P}P.
	\end{align}
	Their motivation for analyzing this limit was to investigate general features of dynamical semigroups, especially as \eqref{eq:ProductFormulaProjection} can be viewed as a degenerated version of the Lie-Trotter product formula $\lim_{n\to\infty}\left(e^{tA/n}e^{tB/n}\right)^n = e^{t(A+B)}$.
	
	\medskip
	
	For an open quantum system governed by a quantum dynamical semigroup $\left(e^{t\cL}\right)_{t\ge0}$, with bounded generator $\cL$, the limit in \eqref{eq:ProductFormulaProjection} yields the desired quantum Zeno dynamics. However, unlike the case of unitary dynamics of closed quantum systems, the effective dynamics given by $e^{tP\cL P}$ is in general not trace-preserving. Hence, we cannot infer the quantum Zeno effect (i.e.~convergence of the survival probabilities), as for closed systems, from the quantum Zeno dynamics itself. This can be seen from the so-called GKLS form~\cite{Lin76,GKS76} for bounded generators $\cL$ of completely positive trace-preserving semigroups, according to which, for any $\rho\in\cT(\cH)$,
	\begin{align}
	\label{eq:GKLS}
	\cL(\rho) = K\rho + \rho K^* + \sum_{l} L_l\rho L_l^*, 
	\end{align}
	\begin{align}
	\label{eq:GKLSsum}
\text{ under the constraint }	K^* + K +\sum_{l}L^*_lL_l  = 0.
	\end{align} 
	The identity \eqref{eq:GKLSsum} ensures that for any $t\ge 0$, $e^{t\cL}$ is trace-preserving.
	Here, the index $l$ might range over an infinite set and the sums in the above equations converge in suitable operator topologies (see~\cite{Chang} for details). 
	
	Consider now a specific form of the quantum Zeno projection operator $P$ on $\cT(\cH)$, which is given by $P(\rho) = \pi \rho \pi$ for some projector $\pi$ on $\cH$. Using \eqref{eq:GKLS} we can find a similar expression for the effective generator, $P\cL P$, of the quantum Zeno dynamics, which acts on any state $\rho$ in the quantum Zeno subspace $P\cT(\cH)$ as follows:
	\begin{align}\label{eq:p-gen}
	P\cL P(\rho) &= \pi K\pi \rho+ \rho (\pi K\pi)^* + \sum_{l} (\pi L_l\pi)\rho (\pi L_l\pi)^*\nn \\ &=  P(K) \rho+ \rho P(K)^* + \sum_{l} P(L_l)\rho P(L_l)^*.
	\end{align}
	By comparing \eqref{eq:GKLS} with \eqref{eq:p-gen} we see that the operators $K$ and $L_l$ in the former are replaced by $P(K)$ and $P(L_l)$ in the latter, and hence $P(K)$ and $P(L_l)$ can be viewed as the corresponding operators in the GKLS form of the generator, $P\cL P$, of the effective dynamics. However, making these replacements on the left hand side of \eqref{eq:GKLSsum} yields an expression which is negative semidefinite. 
	In fact, one can convince oneself that this resulting expression might not be equal to zero by considering the following example: $\cH = \C^2$, $L_0 = \ket{0}\bra{0}$, $L_1 = \ket{0}\bra{1}$ (with $\{\ket{0},\ket{1}\}$ being an orthonormal basis of $\C^2$), $K = -\1/2$ and $\pi = \kb{1}$. Thus we infer that the effective dynamics generated by $P\cL P$ is not trace-preserving. Instead, $P\cL P$ is the generator of a completely positive, {\em{trace non-increasing}} semigroup on $P\cT(\cH)$.
	
	This example shows that for open quantum systems the survival probabilities will not be frozen as for closed systems, i.e. we might have
	\begin{align*}
	\lim_{n\to\infty}p_n = \lim_{n\to\infty}\tr\Big(\left(Pe^{t\cL/n}\right)^n(\rho)\Big) < \tr(P(\rho)).
	\end{align*}
However, in the limit of the number measurements ($n$) tending to infinity, the only non-trivial contribution to the survival probability arises if all successive measurement outcomes are identical -- either all of them corresponding to $P$ or all of them corresponding to $\1-P$. In order to see this, consider for example the probability of the first measurement yielding an outcome corresponding to $P$ and all subsequent ones corresponding to $\1-P$. Let us denote this probability by $p'_n$. We see that
	\begin{align*}
	p_n' = \tr\Big(\left((\1-P)e^{t\cL/n}\right)^{n-1}Pe^{t\cL/n}(\rho)\Big) = \mathcal{O}(1/n),
	\end{align*} 
	where we have used that for $\cL$ bounded $e^{t\cL/n} =\1 + \mathcal{O}(1/n)$ and $(\1-P)P=0$.
	
	\medskip
	
	For open quantum systems, one can not only perform projective measurements but also {\em{generalized measurements}}. These can be described by a collection  $\{M_j\}_{j}$ of quantum operations, i.e.~completely positive, trace non-increasing maps on $\cT(\cH)$, with the subscripts $j$ labelling the outcomes, such that their sum is trace-preserving. Here, the probability of measuring an outcome $j$ given a state $\rho$ is given by $q_j = \tr\left(M_j(\rho)\right)$ and the corresponding post-measurement state is given by $M_j(\rho)/q_j$ for non-zero $q_j$.
	
	For open quantum systems, a more general framework for studying the quantum Zeno effect is one in which the projective measurements are replaced by repeated actions of a fixed quantum operation $M$. The latter acts between individual time intervals of length $t/n$ over which the system evolves under the action of a generator $\cL$ of a dynamical semigroup. This is given by the {\em{quantum Zeno product}}
	\begin{equation}
	\label{eq:ZenoProduct}
	\left(Me^{t\cL/n}\right)^n.
	\end{equation}
	In the sequel, we refer to the asymptotic behaviour of the quantum Zeno product as $n \to \infty$ as the \emph{quantum Zeno limit.\footnote{Henceforth, we often suppress the word 'quantum' for simplicity.}} Recently Möbus and Wolf~\cite{MobWolf19} studied the quantum Zeno effect in this framework, thus extending the general semigroup results of~\cite{MaShvi03,Ma04,G}. 
	They proved convergence of the quantum Zeno product $\left(Me^{t\cL/n} \right)^n$ to an effective quantum Zeno dynamics in infinite dimensions in the case in which $M$ satisfies a certain spectral gap assumption and $\cL$ is bounded. We discuss their result in more detail in Section~\ref{sec:main-results}.
	Independently, Burgath et al.~proved convergence of the quantum Zeno product for general quantum operations $M$ in finite dimensions \cite{BuFaNaPaYu18}.
	\medskip
	
	In this article, for sake of generality, instead of simply focussing on the space $\cT(\cH)$ of trace-class operators (as in~\cite{MobWolf19}), we consider arbitrary Banach spaces $X$. Denoting the set of bounded linear operators on $X$ by $\cB(X)$, we assume that $M \in \cB(X)$ is a contraction, i.e.~its operator norm satisfies the bound $||M|| \leq 1$, and that $\left(e^{t\cL}\right)_{t\ge0}$ generates a contraction semigroup on $X$ which is only assumed to be strongly continuous (see Section~\ref{sec:semigroups} for details on dynamical semigroups).
	
\smallsection{New contributions of this article:}	
	 In this article, we extend the analysis of the quantum Zeno effect for open quantum systems in \cite{MobWolf19} in multiple ways: we provide a quantitative version of the quantum Zeno limit derived in~\cite{MobWolf19} and identify a more general condition on the spectrum of the quantum operation $M$ which is both necessary and sufficient for the Zeno product to be norm convergent to an effective Zeno dynamics. 
	 This is given in Proposition~\ref{lem:SpectralGap}. In particular, our condition shows that apart from a spectral gap condition, there must be no (quasi)-nilpotent contribution to the eigenspaces of the quantum operation $M$ on the unit circle. Such an assumption was missing in~\cite{MobWolf19}. Currently it is not known 
	 whether such an assumption always holds for general quantum channels acting on infinite-dimensional quantum systems, as is the case for finite-dimensional quantum systems~\cite{Wolf12}, or whether it has to be additionally imposed. In our framework for general Banach spaces and contraction maps $M$, we show that this condition cannot be omitted.
	 
	 \medskip
	 
	 In addition, we derive, for the first time, a quantum Zeno limit for unbounded generators by combining the quantitative result for bounded generators with bounded \emph{Yosida approximations} of the unbounded generator.
	 
	 \medskip
	 
	  We also go beyond the ubiquitous spectral gap assumption for $M$ and show that there still exists a strong quantum Zeno limit if the spectral gap assumption in  \cite{MobWolf19} is omitted and replaced by a strong power-convergence property of $M$. This relies on an entirely new perturbation series approach towards the quantum Zeno effect. We complete our new approach by identifying a variety of sufficient conditions and physical examples of quantum channels $M$ that satisfy the strong power-convergence property.
	 
Finally, we illustrate our findings by studying various concrete examples of quantum channels.

\smallsection{Outline of the article:}	
	 This article is organized in the following way: 
	 \begin{itemize}
	\item In Section~\ref{sec:Mathprelim} we review facts about dynamical semigroups on Banach spaces and spectral projections. 
	 \item In Section \ref{sec:main-results} we present our main results, given by  Theorem~\ref{thm:QuantitativQuantumZeno}, Theorem~\ref{thm:StrongZenoGeneral} and  Theorem~\ref{thm:UnboundedQuantumZenoProjector}. 
	 \item In Section \ref{sec;Prop-examples}, we review some basic facts about operator ergodic theory which we employ in our proofs, and state the proof of Proposition~\ref{lem:SpectralGap}.
	 \item In Section~\ref{sec:QuantitativeBoundedZeno} we prove Theorem~\ref{thm:QuantitativQuantumZeno}, namely, the convergence of the quantum Zeno product $\left(Me^{t\cL/n}\right)^n$ for bounded generators. 
	 \item In Section \ref{sec:UnboundedQuantumZenoProjector} we prove Theorem~\ref{thm:UnboundedQuantumZenoProjector}, namely, the convergence of the quantum Zeno product $\left(Me^{t\cL/n}\right)^n$ for unbounded generators. 
	 \item In Section \ref{sec:withoutspecgap} we prove Theorem \ref{thm:StrongZenoGeneral} which states that the spectral gap condition on $M$ can be replaced by a strong power-convergence property, for a strong quantum Zeno limit to hold.  
	 \item In Section~\ref{sec:strongpowconv}, we discuss two ergodic methods to prove strong power-convergence to an invariant state for quantum dynamical semigroups. This provides a plethora of further examples for Theorem \ref{thm:StrongZenoGeneral}.
	 \item Finally, we\comment{provide a summary of our results and} state some open problems in Section~\ref{sec:open}.
	 \end{itemize}
	\section{Mathematical Preliminaries}
	\label{sec:Mathprelim}
	\smallsection{Notation}
	Let $X$ denote a Banach space, and $\cB(X)$ be the set of bounded linear operators on it.  In particular, let $\1 \in \cB(X)$ denote the identity operator acting on $X.$  For a bounded linear operator $T\in\cB(X)$ we write $\ker(T)$ to denote the nullspace and $\operatorname{ran}(T)$ to denote the range or image. We call $T$ a contraction if $\|T\|\le1$, where $\|\cdot\|$ denotes the operator norm on $\cB(X).$
	
	\medskip
	
	A complex number $\lambda \in {\mathbb{C}}$ is said to be in the {\em{resolvent set}}, $\rho(T)$, if $(\lambda \1 - T) \in \cB(X)$ is a bijection. For $\lambda \in \rho(T)$, the operator $R_\lambda(T) := (\lambda - T)^{-1}\in\cB(X)$ is called the {\em{resolvent}} and is well-defined. Here and henceforth, $(\lambda - T)$ denotes $(\lambda \1 - T)$. The {\em{spectrum}} of $T$, denoted as $\Spec(T)$ is the complement of the resolvent set. The {\em{spectral radius}} of $T$ is the radius of the smallest disc centered at the origin which contains $\Spec(T)$: $r(T) := \sup \{ |\lambda| \,: \, \lambda \in \Spec(T)\}$. In addition, Gelfand's formula holds
	\begin{equation}\label{spec-radius}
	    r(T) = \lim_{n \to \infty} ||T^n||^{1/n}.
	\end{equation}
	The spectrum $\Spec(T)$ of an operator $T\in\cB(X)$ can be decomposed into three disjoint parts:
	
	$(i)$ Point spectrum:
	\begin{align*}
	    \Spec_p(T) := \left\{\lambda \in \C \, :\, Tx = \lambda x \text{ for some } 0\neq x\in X\right\}.
	\end{align*}
	Each $\lambda$ in $\Spec_p(T)$ is said to be an eigenvalue of $T$ and each $0\neq x\in X$ with $Tx=\lambda x$ is called eigenvector corresponding to $\lambda$.
	
	$(ii)$ Continuous spectrum: The continuous spectrum consists of all $\lambda \notin \Spec_p(T) $ such that $\lambda - T$ is not surjective and $\ran(\lambda - T)$ is dense in $X$.
	
	$(iii)$ Residual spectrum: If $\lambda \not\in \Spec_p(T)$ and $\ran(\lambda - T)$ is not dense, then $\lambda$ is said to be in the residual spectrum of $T$.
	
	\medskip
	
In the context of the quantum Zeno effect, the most relevant Banach space is that of trace-class operators $\cT(\cH)$ on some separable Hilbert space $\cH$. Density operators (or quantum states) $\rho \in \cT(\cH)$ are positive trace-class operators of unit trace. An operator $T\in\cB(\cT(\cH))$ is completely positive if 
	\begin{align*}
	\left(T\otimes\1_d\right)\left(\rho\right)\ge 0,\quad\quad \forall d\in\N,\, \rho\in\cT(\cH)\otimes\C^{d\times d},\,\text{with }\rho\ge 0,\quad
	\end{align*}
	where we have denoted the identity map on the $d$-dimensional complex square matrices $\C^{d\times d}$ by $\1_d$. Moreover, $T\in\cB(\cT(\cH))$ is trace-preserving if for all $x\in\cT(\cH)$ we have $\tr(T(x)) = \tr(x)$ and trace non-increasing if $\tr(T(x)) \le \tr(x)$ for all $x\ge 0$. We call a linear, completely positive operator $T\in\cB(\cT(\cH))$ a {\em{quantum operation}} if it is trace non-increasing, and a {\em{quantum channel}} if it is trace-preserving. Note that every quantum operation is a contraction.
	Further, we denote by $\operatorname{HS}(\mathcal H)$ the Hilbert space of Hilbert-Schmidt operators acting on $\mathcal{H}$.
	
\comment{We now recall some general concepts from semigroup theory, which we employ in our proofs.}

	\subsection{Dynamical semigroups}\label{sec:semigroups}
	In the following we recall some general concepts from semigroup theory (see~\cite{EngNag00} for more details).
	Let $X$ be a Banach space: we say $(T(t))_{t\ge0} \subset \cB(X)$ is a one-parameter semigroup if
	\begin{enumerate}
	    \item $T(t)T(s) = T(t+s),$ for all $t,s\ge0,$
	    \item $T(0) = \1$.
	\end{enumerate}
The one-parameter semigroup is said to be {\em{uniformly- or norm continuous}} if $\lim_{t\downarrow 0}\|T(t) - \1\| = 0$. On the other hand, a semigroup is {\em{strongly continuous}} if for all $x\in X$ we find $\lim_{t\downarrow 0}\|(T(t) - \1)x\| = 0. $
	For any such semigroup, we can define the densely-defined and closed generator $\cL$ by
	\begin{align}
	\label{eq:DefGenerator}
	\cL x = \lim_{t\downarrow 0} \frac{T(t) - \1}{t\,}x
	\end{align}
	for all $x$ in the domain $\cD(\cL)\subseteq X$, which is the set of $x$ for which the strong limit on the right hand side of \eqref{eq:DefGenerator} exists. The generator is bounded if and only if the semigroup is uniformly continuous, in which case $T(t) = e^{t\cL}$.
	For contraction semigroups, i.e.~semigroups satisfying $\|T(t)\|\le 1$ for all $t\ge0$, we can recover the semigroup from its generator as follows. The spectrum of $\cL$ is contained in the left half plane of $\CC$ and in addition the resolvents satisfy the bound \cite[Theo $3.5$]{EngNag00}
	\begin{equation}
	\label{eq:ResolventBound}
	 \|\lambda\left(\lambda -\cL\right)^{-1}\| \le 1 \text{ for all }\lambda >0.
	\end{equation}
    Hence, for each $k\in\N$ we can define the $k^{th}$ \emph{Yosida approximant} of the generator by
	\begin{equation}
	\label{eq:YosidaOperator}
	\cL_k = k\cL\left(k-\cL\right)^{-1},
	\end{equation}
	which are bounded operators satisfying $\|\cL_k\| \le k$ and in addition
	\begin{align*}
	   \cL_kx \xrightarrow[k\to\infty]{} \cL x \text{ for all }x\in\cD(\cL).
	\end{align*}
         From the Yosida approximants, the semigroup can be recovered as the strong limit
	\[ \lim_{k\to\infty} e^{t\cL_k}x = T(t)x=: e^{t\cL}x \quad \forall \, x \in \cB(X).\]

	As mentioned earlier in this article, we mainly consider the Banach space $X$ to be the space of trace-class operators $\cT(\cH)$ on some separable Hilbert space $\cH$, and each $T(t)$ for $t\ge0$ to be a quantum channel. In this case we call $\left(T(t)\right)_{t\ge0}$ a {\em{quantum dynamical semigroup}}. 

\subsection{Spectral Projections}
Consider an operator $M \in \cB(X)$ whose spectrum has a finite number of isolated points, $\lambda_j$, of magnitude $|\lambda_j| =1$, with $j \in \{1,2, \ldots, J\}$ for some $J \in \mathbb{N}$. Using the holomorphic functional calculus we can define the spectral projections corresponding to $\lambda_j$ by
	\begin{align}
	\label{eq:SpecProjector}
	P_j = \frac{1}{2\pi i}\oint_{\Gamma_j} \left(z-M\right)^{-1} dz,
	\end{align}
	where $\Gamma_j$ is any curve in $\CC$ enclosing only $\lambda_j$ but no other element of $\Spec(M)\backslash\{\lambda_j\}$. 
	
\comment{	Further, let $\tilde{P}_j$ denote the projection onto the eigenspace corresponding to the spectral point $\lambda_j$ which is defined through Yosida's Mean Ergodic Theorem for the rotated operator $\bar{\lambda_j}M$. See Section~\ref{ingredients}.}
	
	Note that in general $P_j$ will not be the projector onto the eigenspace $\operatorname{ker}(M-\lambda_j)$, since the quasi-nilpotent part 
	\begin{equation}
	\label{eq:Nilpotent}
	N_j := \left(\lambda_j - M\right)P_j  = \frac{1}{2\pi i}\oint_{\Gamma_j} (\lambda_j-z) \left(z- M\right)^{-1} dz ,
	\end{equation}
	is in general not equal to zero. In finite dimensions, $N_j$ is precisely the nilpotent part corresponding to the Jordan block of the eigenvalue $\lambda_j$. More precisely, since in finite dimensions the spectrum $\Spec(M)$ is finite and therefore discrete, we can write $M$ in its Jordan normal decomposition as 
	\begin{align*}
	M = \sum_{\lambda\in\Spec(M)}\lambda P_\lambda + N_\lambda.
	\end{align*}
	Here $P_\lambda$ and $N_\lambda$ are the spectral projectors and nilpotent parts corresponding to the eigenvalue $\lambda\in\Spec(M)$ defined analogously to \eqref{eq:SpecProjector} and \eqref{eq:Nilpotent}. In addition, in finite dimensions the spectral projectors and nilpotent parts satisfy
	\begin{align}
	P_\lambda P_\mu = \delta_{\mu\lambda}P_\mu,\quad
	N_\lambda P_\lambda = N_\lambda\\   
	N^{d_\lambda}_\lambda=0,\quad\text{with }d_\lambda = \tr(P_\lambda).
	\label{nil}
	\end{align}
	 Therefore, the question whether the nilpotent parts are zero or not is related to the diagonalizability of $M$.

	 In infinite dimensions $N_j$ is in general only quasi-nilpotent, i.e.~$\Spec(N_j) = \{0\}$. This can be seen by considering for any $\eps>0$ a closed curve $\Gamma_{j,\eps}$ with distance at most $\eps$ from $\lambda_j$ and not intersecting $\Spec(M)$. This yields
	for any $k\in\N$
	\begin{equation*}
	\left\|N_j^k\right\| \le \frac{1}{2\pi}\oint_{\Gamma_{j,\eps}} |\lambda_j-z|^k \left\|\left(z-M\right)^{-1}\right\| dz \le C_\eps \eps^k,
	\end{equation*}
	where $C_\eps>0$ is a constant dependent on $\eps$ but independent of $k$. Using the fact that $\eps>0$ is arbitrary we see that the spectral radius of $N_j$ vanishes:$$r(N_j)=\lim_{k\to\infty} \|N_j^k\|^{1/k} = 0,$$ and thus $\Spec(N_j) =\{0\}$.
	
	Under the assumption that the range of $P_j$ is finite-dimensional, which holds in particular if the underlying Banach space is finite dimensional, $M$ being a contraction implies $N_j=0$ \cite[Prop. 6.2]{Wolf12}. In infinite dimensions it is however possible to find contraction operators $M$ with only isolated spectral points on the unit circle and non-trivial quasi-nilpotent part as shown in the following example.
	\begin{example}
	\label{ex:VoltNeil}
		Let $V:L^2[0,1]\to L^2[0,1]$ be the Volterra operator 
		\[(Vf)(x):=\int_0^x f(t) \  dt \text{ with adjoint }(V^*f)(x)=\int_x^1 f(t) \ dt.\] 
		It is well-known that this operator has empty point spectrum $\Spec_p(V)=\emptyset$ and $\Spec(V)=\{0\}.$ This implies that $M:=(I+V)^{-1}$ exists and has spectrum $\Spec(M)=\{1\}$ and $\Spec_p(M)=\emptyset$ such that $\Vert M \Vert \ge 1.$
		On the other hand, 
		\[ \Vert M^{-1}f  \Vert^2 = \langle f+Vf,f+Vf \rangle = \Vert f \Vert^2 + 2\Re \langle Vf,f \rangle + \Vert V f \Vert^2 \ge \Vert f \Vert^2.\]
		Here, we used the fact that $2\Re \langle Vf,f \rangle = \langle f,V +V^* f\rangle \ge 0.$ To see this note that $Q:=V+V^*$ is a projection and therefore $Q\ge0$ which follows from 
		\begin{equation*}
		\begin{split}
		Q^2f = Q(Vf+V^*f) = Q  \int_0^1 f(t) \ dt = Qf = \int_0^1 f(t) \ dt.
		\end{split}
		\end{equation*}
		Hence, $\Vert M\Vert = 1$ due to $\|M\| = \sup_{f \neq 0}\frac{\|Mf\|}{\|f\|} = \sup_{f \neq 0}\frac{\|f\|}{\|M^{-1}f\|} \le \sup_{f \neq 0}\frac{\|f\|}{\|f\|} = 1.$
		Now, for $P$ being the spectral projector (defined through \eqref{eq:SpecProjector}) corresponding to the spectral point 1, this operator $M$ cannot satisfy $MP =PM = P.$ To see this, note that since 1 is the only point in the spectrum of $M$, we have that $P$ is equal to the identity of $L^2[0,1]$, and therefore $$MP = M \neq P = \1.$$
		Hence, the quasi-nilpotent operator  \eqref{eq:Nilpotent} corresponding to the isolated spectral point 1, i.e. $N=(1-M)P$, is not equal to zero.

	\end{example}
			
		\section{Main Results}\label{sec:main-results}	
In this section we state our main results on the quantum Zeno effect and the resulting quantum Zeno dynamics. 

\subsection{Case I: Uniformly continuous contraction semigroup}\label{sec:con-sg}
We start with the case in which the dynamics of the system is governed by a uniformly continuous contraction semigroup, and hence by a bounded generator. In the following, $X$ denotes a Banach space and $M \in \cB(X)$ denotes a contraction. Moreover, in  Theorem~\ref{thm:QuantitativQuantumZeno} (given below) we consider contractions $M$ which satisfy the following condition:
\smallskip

\noindent
\begin{ass}[Spectral gap assumption on $M$]
	$M$ has a finite number $J\in\N$ of points of modulus one in its spectrum such that the rest of the spectrum is contained in a disk of radius $0<\delta<1$, i.e.
	\begin{align} 
	\label{eq:SpectralGap0}
	\Spec(M) \subset B_\delta \cup \{\lambda_j\}_{j=1}^{J} \text{ with } |\lambda_j|=1,
	\end{align}
	where $B_{\delta}:=\{ z \in \CC; \vert z \vert \le \delta\}.$ An illustration of this spectral gap assumption is given in Figure~\ref{fig:SpectralGap}.
	\begin{figure}[t]
	 	\centering
	 		\includegraphics[width=0.55\linewidth]{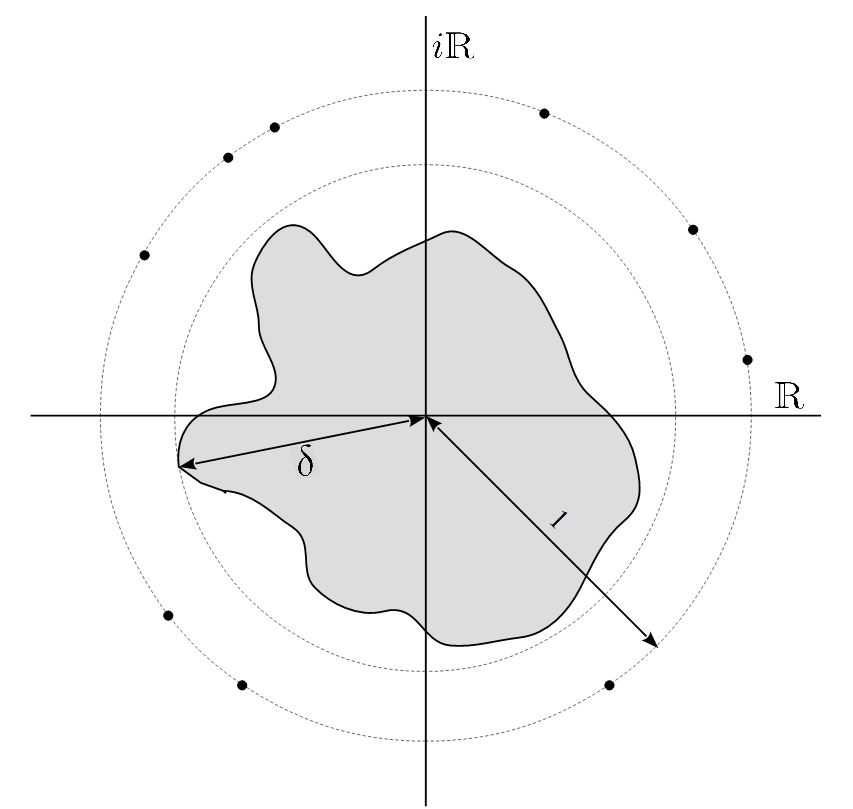}
	 	\caption{Illustration of the spectral gap assumption on $M$. The grey colored region as well as the dots on the unit circle form $\Spec(M).$}
	 	\label{fig:SpectralGap}
	 \end{figure}
	 \label{ass1}
\end{ass}
Our first main result is Theorem~\ref{thm:QuantitativQuantumZeno} stated below, which pertains to an open system whose evolution, governed by a contraction semigroup, is interrupted repeatedly and periodically by the action of a quantum operation $M$ which satisfies the following assumptions: 
\begin{enumerate}
\item the spectral gap assumption (Assumption~\ref{ass1}), and 
\item all the corresponding quasi-nilpotent parts are equal to zero. 
\end{enumerate} 
We establish a quantitative bound on the convergence rate of the Zeno product to the corresponding quantum Zeno dynamics.

Before stating the theorem, we would first like to discuss the assumptions above. In the setting of Theorem~\ref{thm:QuantitativQuantumZeno}, in order to study the quantum Zeno effect one needs to prove convergence (in operator norm) of the Zeno product $(Me^{t\cL/n})^n$ to the operator corresponding to the effective evolution within the quantum Zeno subspace. As a first step, one needs to find the condition on the spectrum of $M$ which would ensure such a convergence even in the trivial case in which $\cL=0$. This condition is precisely the spectral gap assumption (Assumption~\ref{ass1}) of $M$, along with the assumption that all the corresponding quasi-nilpotent parts are equal to zero. This is rigorously stated in Proposition~\ref{lem:SpectralGap} (see in particular point \ref{specNil}).


	\begin{theorem} \label{thm:QuantitativQuantumZeno}
	Let $\cL\in\cB(X)$ be a generator of a contraction semigroup and let $M\in\cB(X)$ be a contraction which satisfies the spectral gap assumption~\eqref{eq:SpectralGap0} with all corresponding quasi-nilpotent operators \eqref{eq:Nilpotent} being zero. Then for projections $P_j$ (defined through \eqref{eq:SpecProjector}) and any $n\in\mathbb{N}$,  $0<\delta<\tilde\delta<1$
	\begin{align}
	\label{eq:QuantitativeQuantumZeno}
	\left\|\left(Me^{t\cL/n}\right)^n - \sum_{j=1}^J e^{tP_j\cL P_j}\,\lambda_j^nP_j\right\| \le C\left(\frac{\|{\cL}\|}{n^{2/3}} + \frac{\|{\cL}\|^2}{n} + \tilde\delta^{n+1}\right),
	\end{align}
 with $C>0$ being a constant independent of $\cL$ and $n$.
\end{theorem}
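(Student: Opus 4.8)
The plan is to use the spectral gap to separate a geometrically decaying part of the dynamics from a peripheral part that carries the Zeno dynamics. Write $P:=\sum_{j=1}^J P_j$ and $Q:=\1-P$. Since the quasi-nilpotents $N_j=(\lambda_j-M)P_j$ in \eqref{eq:Nilpotent} vanish, $MP_j=P_jM=\lambda_j P_j$, so $M$ is block diagonal, $MP=\sum_j\lambda_j P_j$ and $MQ=QMQ$, and by Assumption~\ref{ass1} together with Proposition~\ref{lem:SpectralGap} the part $MQ$ has spectral radius at most $\delta$. The holomorphic functional calculus on a circle of radius $\tilde\delta$ then gives, for every $\ell\in\N$,
\[
\Big\| M^\ell-\sum_{j=1}^J\lambda_j^\ell P_j\Big\|=\big\|(MQ)^\ell\big\|\le C\,\tilde\delta^{\,\ell}.
\]
Two facts will keep every constant independent of $\cL$. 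First, because $M$ is a contraction, applying the mean ergodic theorem to the contractions $\bar\lambda_j M$ identifies $P_j$ with the ergodic projection and yields $\|P_j\|\le 1$. Second, by the product formula \eqref{eq:ProductFormulaProjection} and $\|P_j\|\le1$,
\[
\big\|e^{sP_j\cL P_j}P_j\big\|=\lim_{n\to\infty}\big\|(P_je^{s\cL/n})^n\big\|\le 1,\qquad s\ge0,
\]
so the effective Zeno generators $P_j\cL P_j$ generate contraction semigroups on $\ran P_j$. This contractivity is essential: a naive Dyson expansion of $(Me^{t\cL/n})^n$ in powers of $e^{t\cL/n}-\1$ is only summable with an $e^{t\|\cL\|}$ prefactor, so the whole argument must be run with the genuine contraction bounds $\|M\|,\|e^{t\cL/n}\|\le1$.

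\textbf{The decaying part.} Set $T_n:=Me^{t\cL/n}$, so that $\|T_n-M\|\le C\|\cL\|/n$. For $n$ large relative to $t\|\cL\|$ the spectrum of $T_n$ splits, and the Riesz projection $Q_n$ of $T_n$ onto the spectral part inside $\{|z|\le\tilde\delta\}$ commutes with $T_n$ and is close to $Q$. Writing $T_n^nQ_n=(T_nQ_n)^n$ and using the uniform resolvent bound $\sup_n\sup_{|z|=\tilde\delta}\|(z-T_nQ_n)^{-1}\|<\infty$ (by perturbation from $MQ$), the Cauchy representation $(T_nQ_n)^n=\tfrac{1}{2\pi i}\oint_{|z|=\tilde\delta}z^n(z-T_nQ_n)^{-1}\,dz$ produces the bound $\|T_n^nQ_n\|\le C\tilde\delta^{\,n+1}$, which is the last term in \eqref{eq:QuantitativeQuantumZeno}.

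\textbf{The peripheral part and the rate.} On the complementary range $\ran\Pi_n$, $\Pi_n:=\1-Q_n$, it remains to compare $T_n^n\Pi_n$ with the target, which factorises as $\sum_j\lambda_j^n e^{tP_j\cL P_j}P_j=\beta_m^{\,n/m}$, where $\beta_m:=\sum_j\lambda_j^m e^{tmP_j\cL P_j/n}P_j$ and $m\mid n$ is a block length. I would coarse-grain the product into $q=n/m$ identical blocks and prove a one-block estimate for $T_n^m$ against $\beta_m$. Expanding $T_n^m$ to first order in $\cL$ and inserting $M^\ell=\sum_j\lambda_j^\ell P_j+(MQ)^\ell$, the diagonal first-order term rebuilds the generator of $\beta_m$; the off-diagonal terms $P_j\cL P_k$ with $j\neq k$ carry phase sums $\sum_i\lambda_j^{\,i}\lambda_k^{\,m-i}$ bounded by $2/\min_{j\neq k}|\lambda_j-\lambda_k|$, and the terms coupling into $\ran Q$ carry geometric $\tilde\delta$-sums, so both are suppressed independently of $m$; the Taylor remainder of $e^{t\cL/n}$ and the multiple-insertion terms contribute $O(m^2\|\cL\|^2/n^2)$. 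Telescoping $T_n^n\Pi_n-\beta_m^{\,q}=\sum_i T_n^{mi}(T_n^m-\beta_m)\beta_m^{\,q-1-i}$ with $\|T_n^{mi}\|\le1$ and $\|\beta_m^{\,\ell}\|\le C$ converts the per-block error into a factor $q=n/m$; optimising the block length (the balance occurs near $m\sim n^{1/3}$) trades the residual first-order contribution against the second-order accumulation and reproduces the terms $\|\cL\|/n^{2/3}+\|\cL\|^2/n$ of \eqref{eq:QuantitativeQuantumZeno}.

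\textbf{The main obstacle.} The crux is exactly the suppression of the couplings that are first order in $\cL$: a direct telescoping of $n$ factors whose per-step discrepancy is $O(\|\cL\|/n)$ would only give $O(\|\cL\|)$, so one must show that the off-diagonal couplings $P_j\cL P_k$ ($j\neq k$) and the couplings $P\cL Q$, $Q\cL P$ into the decaying subspace do not accumulate linearly. Capturing their cancellation—oscillation among distinct peripheral eigenvalues and exponential decay into $\ran Q$—through summation-by-parts and geometric-series estimates, and then trading the resulting residual against the block count via $m\sim n^{1/3}$, is the delicate step and the origin of the nonstandard exponent $2/3$; the remaining contributions are routine bookkeeping with the Trotter remainder and the spectral-gap decay.
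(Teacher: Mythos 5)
Your first two steps are sound and in fact coincide with the paper's own: the separation of the decaying part via the perturbed Riesz projection and a uniform resolvent bound on $\{|z|=\tilde\delta\}$ is exactly Lemma~\ref{lem:SpectralSep} and Lemma~\ref{lem:ExponVanPart}, and your observation that $\|P_j\|\le 1$ (via Ces\`aro averages of $\bar\lambda_jM$) and that $e^{sP_j\cL P_j}P_j$ is a contraction is correct. The gap is in the peripheral estimate: the block/telescoping scheme cannot produce the rate \eqref{eq:QuantitativeQuantumZeno}, and your claimed optimisation is miscalculated. Your one-block error has three parts: the zeroth-order piece $O(\tilde\delta^m)$, the first-order off-diagonal and $Q$-coupling piece, which (as you correctly argue via geometric sums) is $O(\|\cL\|/n)$ \emph{independently of $m$} but is genuinely of that size, and the second-order piece $O(m^2\|\cL\|^2/n^2)$. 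Telescoping with contraction bounds multiplies each by $q=n/m$, giving
\begin{equation*}
\left\|T_n^n-\beta_m^{\,q}\right\|\le C\left(\frac{n}{m}\,\tilde\delta^{\,m}+\frac{\|\cL\|}{m}+\frac{m\|\cL\|^2}{n}\right).
\end{equation*}
The minimum over $m$ of the last two terms is $2\|\cL\|^{3/2}n^{-1/2}$, attained at $m\sim\sqrt{n/\|\cL\|}$; with your stated choice $m\sim n^{1/3}$ one gets $\|\cL\|n^{-1/3}+\|\cL\|^2n^{-2/3}$. Either way the result is strictly weaker than $\|\cL\|n^{-2/3}+\|\cL\|^2n^{-1}$ (take $\|\cL\|=1$: $n^{-1/2}$, resp.\ $n^{-1/3}$, versus $n^{-2/3}$), so the theorem as stated is out of reach of this argument. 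The natural repairs fail: pushing $m$ up to $n^{2/3}$ makes the second-order accumulation $m\|\cL\|^2/n$ too large; recovering cross-block cancellation by a second summation by parts would require lower bounds on $|1-(\lambda_j\bar\lambda_k)^m|$, which are unavailable in general (they fail along subsequences of $m$ whenever some ratio $\lambda_j\bar\lambda_k$ is a root of unity or has irrational angle); and abandoning blocking altogether ($m=n$) forces the full Dyson expansion with the $e^{t\|\cL\|}$ prefactor that, as you yourself note, must be avoided.

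The exponent $2/3$ in the paper does not come from oscillation bookkeeping at all, but from Zagrebnov's refinement of Chernoff's lemma (Lemma~\ref{lem:strongChernoff}): $\|K^n-e^{n(K-\1)}\|\le 2\sqrt[3]{n}\,\|K-\1\|$ for a contraction $K$. The paper works on the range of the \emph{perturbed} spectral projection $P_j(n^{-1})$ of $Me^{\cL/n}$ (your $\Pi_n$, split into its $J$ pieces), sets $K_{j,n}=\bar\lambda_jP_j(n^{-1})Me^{\cL/n}P_j(n^{-1})$, and shows via the first-order Taylor expansion of $t\mapsto P_j(t)$ (Lemma~\ref{lem:DerivativeProjector}), together with $P_jMP_j=\lambda_jP_j$ and the projector identity $P_j'=P_jP_j'+P_j'P_j$, the key cancellation $\|K_{j,n}-P_j(n^{-1})\|=O(\|\cL\|/n)$. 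Chernoff then yields $\sqrt[3]{n}\cdot O(\|\cL\|/n)=O(\|\cL\|/n^{2/3})$ in one stroke, and replacing the generator $n(K_{j,n}-P_j(n^{-1}))$ by $P_j\cL P_j$ costs only $O(\|\cL\|^2/n)$ (Lemma~\ref{lem:Remainder}). If you wish to keep a Chernoff-free argument, you must weaken the claim to the rate $\|\cL\|^{3/2}n^{-1/2}$, which is the true output of your method.
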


Theorem~ \ref{thm:QuantitativQuantumZeno} is a quantitative version of a result by M\"obus and Wolf \cite[Theorem 1]{MobWolf19}. In \cite{MobWolf19} the role of quasi-nilpotent operators was not discussed and this extra assumption on $M$ for which the quasi-nilpotent parts vanishes, is missing. However, Example~\ref{ex:VoltNeil} shows that, unlike in finite dimensions, in infinite dimensions this assumption is not satisfied in general and Proposition~\ref{lem:SpectralGap} shows that it is necessary for the uniform convergence of the quantum Zeno limit. Furthermore, Theorem~\ref{thm:QuantitativQuantumZeno} extends existing results to more general operators $M$. This includes the quantum harmonic oscillator in Example~\ref{ex:HarmonicOsc}.
\begin{rem}
Theorem~\ref{thm:QuantitativQuantumZeno} states that frequent application of the quantum operation $M$ restricts the evolution of the system to the quantum Zeno subspace $\bigoplus_{j=1}^J\ran(P_j)$
	with the effective (i.e.~quantum Zeno) dynamics on each of the individual subspaces $\ran(P_j)$ given by $e^{tP_j\cL P_j}$. 
	We also note that the inequality \eqref{eq:QuantitativeQuantumZeno} can be alternatively stated as follows:
    \begin{equation}
    \begin{split}
    \left\|\left(Me^{t\cL/n}\right)^n - e^{t\sum_{j=1}^JP_j\cL P_j}M^n\right\| 
    &\le C\left(\frac{\|{\cL}\|}{\sqrt[3]{n^2}} + \frac{\|{\cL}\|^2}{n} + \tilde\delta^{n+1}\right)\\
    &=\mathcal O(n^{-2/3} \Vert \cL \Vert^2),
    \end{split}
    \end{equation}
    which closely resembles the form of the result in \cite{BuFaNaPaYu18} (compare Theorem 1 and Corollary 1 therein). This can be seen in the following way: Firstly, we note that a quantum operation $M$ which satisfies the spectral gap assumption \eqref{eq:SpectralGap}, and whose quasi-nilpotent operators are equal to zero, can be written as $M = \sum_{j=1}^J \lambda_j P_j + S,$
	where $S$ corresponds to the part of the spectrum with magnitude strictly smaller than 1, i.e.
	\begin{align*}
	S = \frac{1}{2\pi i}\oint_\gamma z \left(z-M\right)^{-1} dz,
	\end{align*}
	where $\gamma$ is a closed curve which encloses all parts of $\Spec(M)$ other than the isolated points $\lambda_j$ (for $j=1,2,\ldots, J$) on the unit circle.
	Using $P_jP_k = P_jS =S P_j = 0$ for all $j\neq k$, we get $M^n = \sum_{j=1}^J \lambda_j^n P_j +S^n$
	and $\|S^n\|\le C\tilde\delta^{n+1}$. Hence, 
	\begin{align*}
	e^{t\sum_{j=1}^JP_j\cL P_j}M^n &= \sum_{k=1}^J e^{t\sum_{j=1}^JP_j\cL P_j} \,\lambda_k^nP_k + \mathcal{O}(\tilde\delta^{n+1}) \\&=\sum_{k=1}^J\, \prod_{j=1}^Je^{tP_j\cL P_j}\,\lambda_k^nP_k + \mathcal{O}(\tilde\delta^{n+1})\\ &=\sum_{k=1}^J e^{tP_k\cL P_k}\,\lambda_k^nP_k + \mathcal{O}(\tilde\delta^{n+1}).
	\end{align*}
	In the second line, we have used the fact that the operators $P_j\cL P_j$ for $j \in \{1,2,\ldots, J\}$ commute with each other. In the third line we have used the fact that  for each fixed $k$ in the sum and $j\neq k$, the only term in the series expansion of the exponential $e^{tP_j\cL P_j}$ which makes a non-trivial contribution to the sum is the zeroth order term.
\end{rem} 

Our next result, given by Theorem~\ref{thm:StrongZenoGeneral} below, shows convergence of the Zeno product under weaker assumptions than the ones used in Theorem~\ref{thm:QuantitativQuantumZeno}. It establishes a novel strong quantum Zeno limit for quantum operations which do not satisfy the spectral gap assumption (Assumption~\ref{ass1}) but instead satisfy a weaker property of strong power-convergence (see (\ref{eq:StrongPowerConv}) below). 
More precisely, we prove strong convergence of the Zeno product for contractions $M$ which are strongly power-convergent to the projection onto the corresponding invariant subspace. To our knowledge this is first result on the quantum Zeno effect for general quantum operations which does not rely on a spectral gap assumption, 
and it applies to the (bosonic quantum-limited) attenuator channel, discussed in Example~\ref{ex:Atten}, which is an important example of a quantum channel arising in quantum optics.
\begin{theorem}
	\label{thm:StrongZenoGeneral}
Let $\cL\in\cB(X)$ be a generator of a contraction semigroup and $M\in\cB(X)$ a contraction which satisfies for all $x\in X$
\begin{align}
\label{eq:StrongPowerConv}
\lim_{n\to\infty}M^n x =Px 
\end{align}
for some operator $P\in\cB(X)$. Then
\begin{align}
\lim_{n\to\infty}\left(Me^{t\cL/n}\right)^n x = e^{tP\cL P}Px
\end{align}
for all $x\in X$.
\end{theorem}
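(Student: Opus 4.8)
The plan is to make rigorous a Dyson-type expansion of the Zeno product $\left(Me^{t\cL/n}\right)^n$ in powers of $\cL$ and to isolate, order by order, the single combinatorial pattern that survives as $n\to\infty$, using nothing beyond the strong power-convergence \eqref{eq:StrongPowerConv}.

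\emph{First} I would record the elementary consequences of \eqref{eq:StrongPowerConv}. Passing to the limit in $M\cdot M^n x=M^{n+1}x=M^n\cdot Mx$ gives $MP=PM=P$; since $MPx=Px$, the vector $Px$ is $M$-fixed, so $M^nPx=Px$ and hence $P^2x=\lim_n M^nPx=Px$, i.e.\ $P^2=P$. Moreover $\|Px\|=\lim_n\|M^nx\|\le\|x\|$, so $P$ is a contraction projection. In particular $P\cL P\in\cB(X)$, the series $e^{tP\cL P}=\sum_{p\ge0}\frac{t^p}{p!}(P\cL P)^p$ converges in norm, and the right-hand side $e^{tP\cL P}Px$ is well-defined.

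\emph{Second}, since $\cL$ is bounded I would expand $Me^{t\cL/n}=\sum_{m\ge0}\frac1{m!}\left(\frac{t}{n}\right)^m M\cL^m$ and multiply out the $n$-fold product, grouping terms by the total number $p=m_1+\cdots+m_n$ of $\cL$-insertions:
\[
\left(Me^{t\cL/n}\right)^n=\sum_{p=0}^{\infty}\left(\frac{t}{n}\right)^p\sum_{m_1+\cdots+m_n=p}\frac{M\cL^{m_1}\cdots M\cL^{m_n}}{m_1!\cdots m_n!}.
\]
Using $\|M\|\le1$ and the multinomial identity $\sum_{m_1+\cdots+m_n=p}\frac1{m_1!\cdots m_n!}=\frac{n^p}{p!}$, the $p$-th summand has operator norm at most $\frac{(t\|\cL\|)^p}{p!}$, \emph{uniformly in $n$}. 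This dominates the series by $e^{t\|\cL\|}$ and lets me interchange $\lim_n$ with $\sum_p$, so it suffices to find the limit of each fixed-order term applied to a fixed $x\in X$. For fixed $p$ only finitely many $m_i$ are nonzero; writing the nonzero values as $a_1,\dots,a_r$ at positions $i_1<\cdots<i_r$, a given pattern carries combinatorial weight $\left(\frac{t}{n}\right)^p\binom{n}{r}\prod_j\frac1{a_j!}$, which tends to $0$ whenever $r<p$ and to $\frac{t^p}{p!}$ exactly when $r=p$, forcing all $a_j=1$. Thus only the ``all-ones'' pattern survives, and the $p$-th order term reduces to
\[
\left(\frac{t}{n}\right)^p\sum_{\substack{g_0+\cdots+g_p=n\\ g_0,\dots,g_{p-1}\ge1,\ g_p\ge0}} M^{g_0}\cL M^{g_1}\cL\cdots\cL M^{g_p}\,x,
\]
a Riemann sum over the standard $p$-simplex under $g_j\approx n s_j$; I claim it converges to $\frac{t^p}{p!}(P\cL P)^pPx$, and summing over $p$ yields $\sum_p\frac{t^p}{p!}(P\cL P)^pPx=e^{tP\cL P}Px$.

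\emph{Third --- and this is the crux ---} I must justify the last limit using only strong convergence $M^g\to P$, for which no estimate of $\|M^g-P\|$ is available. I would split the gap-sum into a bulk part, where every $g_j\ge\eps n$, and a boundary part where some $g_j<\eps n$. On the boundary the normalized combinatorial weight is $\mathcal O(p\,\eps)$ while the operator norms are bounded by $\|\cL\|^p$, so its contribution is $\mathcal O(\eps\,\|\cL\|^p\|x\|)$ uniformly in $n$. On the bulk every exponent satisfies $g_j\to\infty$, so I can peel the factors off one at a time from the right: $M^{g_p}x\to Px$, then $\cL M^{g_p}x\to\cL Px$, then $M^{g_{p-1}}(\cL M^{g_p}x)\to P\cL Px$ via the two-term estimate $\|M^{g}z_n-Pz\|\le\|z_n-z\|+\|M^{g}z-Pz\|$ valid whenever $z_n\to z$ and $g\to\infty$, and so on through all $p+1$ factors, giving the pointwise limit $P\cL P\cL\cdots\cL Px=(P\cL P)^pPx$. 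Dominated convergence over the bulk lattice sum, whose normalized weight converges to the simplex volume $\frac1{p!}$, then produces the factor $\frac{t^p}{p!}$, and letting $\eps\downarrow0$ removes the boundary error. The genuine obstacle is precisely this interchange: one must control a product of $p$ strongly- but not norm-convergent factors across a sum of $\sim n^p$ terms, which is exactly why the bulk/boundary splitting (rather than any uniform operator bound) is essential, and why the argument is robust enough to dispense with the spectral gap.
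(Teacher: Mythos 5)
Your proposal is correct and is essentially the paper's own proof: both expand the Zeno product as a perturbation series in $1/n$, identify the limit of the order-$p$ term as $\frac{t^p}{p!}(P\cL P)^pPx$ by peeling the strongly convergent powers $M^g\to P$ through the product one factor at a time (which works uniformly over the simplex because each peeling step involves only a fixed vector, hence a fixed, $n$-independent threshold) and by counting lattice points in a discrete $p$-simplex whose normalized cardinality tends to $1/p!$, and then conclude by dominated convergence in the order $p$. The only differences are bookkeeping: the paper writes $e^{\cL/n}=\1+\cL_n/n$ with $\cL_n\to\cL$ in operator norm, so its expansion is exact and finite and your \emph{all-ones pattern} reduction is never needed (its key lemma is instead stated for a norm-convergent sequence $\cL_n$), and it controls the boundary of the simplex by fixed index thresholds together with the cardinality limit $|I_{n,k}(\mathbf N)|/n^k\to 1/k!$ rather than by your proportional $\eps n$ cutoff.
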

Note that~\eqref{eq:StrongPowerConv} implies that the operator $P$ is the projection onto the invariant subspace of $M$.
\begin{rem}
For the special case that $X$ is the space of trace-class operators over some Hilbert space, it is known \cite{A81} that $\| M^n x -Px \|_1 \xrightarrow[n \to \infty]{} 0$ if and only if $\| M^n x \|_1 \rightarrow \| Px \|_1$ and $M^n x$ is weakly convergent to $Px.$ Therefore, often (e.g.~when $M$ is a quantum channel) it is enough to just assume a \emph{weak power-convergence} in the above theorem.
\end{rem}

\subsection{Case II: Strongly continuous semigroups}
Our third result, stated in Theorem~\ref{thm:UnboundedQuantumZenoProjector}, pertains to open systems whose evolution is governed by a strongly continuous quantum dynamical semigroup (and hence by an unbounded generator). Once again the evolution is interrupted by repeated and periodic actions of a quantum operation $M$ satisfying the assumptions $(i)$ and $(ii)$ stated in Section~\ref{sec:con-sg}. In this case we obtain a bound on the speed of convergence to the quantum Zeno dynamics in the strong topology.

\begin{theorem}
	\label{thm:UnboundedQuantumZenoProjector}
	Let $\cL$ with domain $\cD(\cL)$ be a generator of a strongly continuous contraction semigroup $\left(e^{t\cL}\right)_{t\ge0}$ and $M\in\cB(X)$ be a contraction satisfying 
	the spectral gap assumption~\eqref{eq:SpectralGap0} with all corresponding quasi-nilpotent operators (defined through~\eqref{eq:Nilpotent}) being zero. Moreover, assume that $M\cL, \cL M$, are both densely defined and bounded. Then for all $x\in\cD(\cL)$, $n\in\mathbb{N}$, and $0<\delta<\tilde\delta<1$
	\begin{equation}
	\begin{split}
	\label{eq:UnboundedAsymptotic}
	\left\|\left(\left(Me^{t\cL/n}\right)^n - \sum_{j=1}^J e^{tP_j\cL P_j}\lambda^n_jP_j\right)x\right\|&\le C\Bigg(\left(\frac{1}{\sqrt[3]{n}} + \tilde\delta^{n+1}\right)\|x\| + \frac{\|\cL x\|}{\sqrt[3]{n^4}}\Bigg)\\
	&=\mathcal O(\Vert x \Vert_{\mathcal D(\mathcal L)} n^{-1/3}),
	\end{split}
	\end{equation}
    where $\|\cdot\|_{\cD(\cL)}$ denotes the graph norm, i.e. $\|x\|_{\cD(\cL)} = \|x\| + \|\cL x\|$. Consequently, we have for all $x \in X$ 
	\begin{equation}
	\label{eq:UnboundedLimit}
	\left\|\left(\left(Me^{t\cL/n}\right)^n - \sum_{j=1}^J e^{tP_j\cL P_j}\lambda^n_jP_j\right)x\right\|\xrightarrow[n\to\infty]{} 0. 
\end{equation}
\end{theorem}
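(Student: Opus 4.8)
The plan is to transfer the quantitative bound of Theorem~\ref{thm:QuantitativQuantumZeno} from the bounded-generator setting to the strongly continuous case by approximating the unbounded generator $\cL$ with its Yosida approximants $\cL_k = k\cL(k-\cL)^{-1}$, which are bounded operators that satisfy $\|\cL_k\|\le k$ and converge strongly to $\cL$ on $\cD(\cL)$. The naive attempt---apply Theorem~\ref{thm:QuantitativQuantumZeno} to $\cL_k$ and let $k\to\infty$---fails directly because the right-hand side of \eqref{eq:QuantitativeQuantumZeno} blows up: with $\|\cL_k\|\sim k$ the bound behaves like $C(k/n^{2/3}+k^2/n+\tilde\delta^{n+1})$. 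The whole game is therefore to choose $k=k(n)$ growing slowly enough with $n$ that these terms stay controlled, while simultaneously controlling the error incurred by replacing $e^{t\cL/n}$ with $e^{t\cL_k/n}$ inside the $n$-fold Zeno product.

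\textbf{Step 1: Reduce to a comparison between two Zeno products.} First I would write the total error as a triangle inequality through the intermediate object $\big(Me^{t\cL_k/n}\big)^n$:
\begin{align*}
\left\|\left(\left(Me^{t\cL/n}\right)^n - \sum_{j} e^{tP_j\cL P_j}\lambda^n_jP_j\right)x\right\|
&\le \left\|\left(\big(Me^{t\cL/n}\big)^n - \big(Me^{t\cL_k/n}\big)^n\right)x\right\|\\
&\quad + \left\|\left(\big(Me^{t\cL_k/n}\big)^n - \sum_{j} e^{tP_j\cL_k P_j}\lambda^n_jP_j\right)x\right\|\\
&\quad + \left\|\sum_{j}\left(e^{tP_j\cL_k P_j} - e^{tP_j\cL P_j}\right)\lambda^n_jP_j\, x\right\|.
\end{align*}
The middle term is bounded by Theorem~\ref{thm:QuantitativQuantumZeno} applied to $\cL_k$, giving $C(\|\cL_k\|/n^{2/3}+\|\cL_k\|^2/n+\tilde\delta^{n+1})$; the third term is an effective-generator error that I would control using strong convergence $\cL_k P_j x\to\cL P_j x$ together with the assumption that $\cL$ maps $\ran(P_j)$ boundedly (here the hypothesis that $M\cL,\cL M$ are bounded becomes essential, since it forces $\ran(P_j)\subset\cD(\cL)$ with a uniform bound on $\cL P_j$).

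\textbf{Step 2: Control the first term by a telescoping estimate.} For the first term I would telescope the difference of the two $n$-fold products using the standard identity $A^n-B^n=\sum_{i=0}^{n-1}A^i(A-B)B^{n-1-i}$ with $A=Me^{t\cL/n}$ and $B=Me^{t\cL_k/n}$, bounding each factor $\|M\|\le1$ and each semigroup factor by $1$ (contraction property). The single-step difference $\big(e^{t\cL/n}-e^{t\cL_k/n}\big)$ acting on a vector in $\cD(\cL)$ is estimated by the integral formula $e^{t\cL/n}-e^{t\cL_k/n}=\int_0^{t/n} e^{(t/n-s)\cL_k}(\cL-\cL_k)e^{s\cL}\,ds$, which gives a bound of order $(t/n)\,\|(\cL-\cL_k)x'\|$ on suitable vectors $x'$; summing $n$ such terms loses the $1/n$ and leaves a quantity governed by $\|(\cL-\cL_k)x\|$, which tends to $0$ as $k\to\infty$ but not uniformly in $n$.

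\textbf{The main obstacle} is precisely this tension: the telescoping error wants $k$ large (to make $\|(\cL-\cL_k)x\|$ small), whereas the Theorem~\ref{thm:QuantitativQuantumZeno} error wants $k$ small (since it scales like $k$ and $k^2$). I would resolve it by optimizing the coupling $k=k(n)$---a natural choice being $k\sim n^{1/3}$, which balances $\|\cL_k\|/n^{2/3}\sim n^{1/3}/n^{2/3}=n^{-1/3}$ against the $\|\cL_k\|^2/n\sim n^{2/3}/n=n^{-1/3}$ term and produces exactly the $n^{-1/3}$ rate and the $\|\cL x\|/\sqrt[3]{n^4}$ contribution appearing in \eqref{eq:UnboundedAsymptotic}. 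The delicate point requiring care is making the telescoping single-step estimate quantitative \emph{in the graph norm}: one must track how $e^{s\cL}$ and the resolvents in $\cL_k$ interact with $\cL$, and here the boundedness of $\cL M$ and $M\cL$ is what keeps the intermediate vectors in $\cD(\cL)$ with controlled graph norm through all $n$ applications. Once the quantitative bound \eqref{eq:UnboundedAsymptotic} is established for $x\in\cD(\cL)$, the limit \eqref{eq:UnboundedLimit} for all $x\in X$ follows by a standard density-plus-uniform-boundedness (``$2\epsilon$'') argument, since every operator appearing is a contraction (or a finite sum of such) and $\cD(\cL)$ is dense in $X$.
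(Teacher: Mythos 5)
Your overall strategy coincides with the paper's: the same three-term triangle inequality through the intermediate product $\big(Me^{t\cL_k/n}\big)^n$, Theorem~\ref{thm:QuantitativQuantumZeno} applied to $\cL_k$ for the middle term, and the coupling $k\sim n^{1/3}$. However, there is a genuine gap in how you estimate the other two terms, and as written your argument cannot produce the quantitative bound \eqref{eq:UnboundedAsymptotic}. In the telescoping step you bound each factor of $M$ by $\|M\|\le 1$ and reduce the single-step error to $(t/n)\,\|(\cL-\cL_k)x'\|$. Since $(\cL_k-\cL)y=\cL(k-\cL)^{-1}\cL y$ on $\cD(\cL)$ and $\|\cL(k-\cL)^{-1}\|=\|k(k-\cL)^{-1}-\1\|\le 2$, this quantity is merely bounded by $2\|\cL x'\|$: it decays in $k$ only pointwise for a \emph{fixed} vector, with no rate. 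Here the vectors $x'$ vary with $n$, $k$ and the telescoping index, so strong convergence of the Yosida approximants does not even secure the qualitative limit \eqref{eq:UnboundedLimit} along this route, let alone the $n^{-1/3}$ rate; and your proposed resolution, choosing $k=n^{1/3}$, balances only the middle term---it does nothing for a telescoping term that carries no quantitative decay in $k$. The same problem affects your third term, where "strong convergence $\cL_kP_jx\to\cL P_jx$" likewise comes with no rate.

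The missing idea---the content of the paper's Lemma~\ref{lem:QuantitativeBoundYosida} and Lemma~\ref{lem:QuantitativeBoundSemigroup}---is that you must \emph{not} discard the copies of $M$ adjacent to the difference. Keeping them, the identity $M(\cL_k-\cL)=M\cL(k-\cL)^{-1}\cL$ together with the hypothesis $\|M\cL\|<\infty$ and the Hille--Yosida bound $\|(k-\cL)^{-1}\|\le 1/k$ yields $\|M(\cL_k-\cL)y\|\le (C/k)\|\cL y\|$, and sandwiching between two copies of $M$ (using also $\|\cL M\|<\infty$) gives the \emph{uniform} bound $\|M\big(e^{\cL/n}-e^{\cL_k/n}\big)M\|\le C/(nk)$. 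These rates are exactly what make the telescoping sum $O\big(\|x\|/k+\|\cL x\|/(nk)\big)$: every step except the first is of the sandwiched form $M(\cdots)M$ applied to a vector of norm at most $\|x\|$, and only the first step needs the graph norm of $x$. The same mechanism with $B=P_j$ (legitimate because $P_j=\lambda_j^{-1}MP_j$ when the quasi-nilpotent parts vanish, so $P_j\cL$ and $\cL P_j$ inherit boundedness) gives $\|P_j\cL_kP_j-P_j\cL P_j\|\le C/k$ and hence $\|e^{tP_j\cL_kP_j}-e^{tP_j\cL P_j}\|\le C/k$ for your third term. In short, you correctly identified that boundedness of $M\cL$ and $\cL M$ is essential, but you deployed it only to control graph norms of intermediate vectors; its real role is to convert the Yosida error into a resolvent factor $(k-\cL)^{-1}$ of norm at most $1/k$, which is where the entire quantitative rate comes from.
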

It is important to determine the rate, topology and set of states for which the quantum Zeno product $(Me^{t\cL/n})^n$ converges to the quantum Zeno dynamics. We recall that we refer to the asymptotic of
$(Me^{t/n
\cL})^n$
in a certain topology, as the {\em{quantum Zeno limit}}. 
In the setting of Theorem~\ref{thm:QuantitativQuantumZeno}, the limit is in the uniform topology, whereas in the setting of Theorem~\ref{thm:StrongZenoGeneral} and Theorem~\ref{thm:UnboundedQuantumZenoProjector} the limit is in the strong topology.

\smallskip
As mentioned earlier, Proposition~\ref{lem:SpectralGap}, given below, shows the requirement of the spectral gap condition (Assumption~\ref{ass1}) on the quantum operation $M$, to obtain a quantum Zeno limit in operator norm. In the trivial case in which there is no additional quantum dynamics, i.e.~$\mathcal L=0$, the quantum Zeno product reduces simply to $\left(Me^{t\cL/n}\right)^n=M^n.$ \comment{This expression, if the quantum Zeno effect holds for the channel $M$, is then supposed to approximate \ro{$\sum_{j=1}^JM^n\tilde P_j$}, where $\tilde P_j$ is the projection onto the eigenvalue $\lambda_j$ on the unit circle with $j=1,..,J$. In particular, if the quasi-nilpotent part \eqref{eq:Nilpotent} associated with eigenvalues $\lambda_j$ is zero, it follows that 
\[  \sum_{j=1}^J M^n\tilde P_j = \sum_{j=1}^J\lambda_j^n \tilde P_j. \]
Thus, if we are concerned about the existence of the quantum Zeno limit in operator norm, we are left to study the existence of the limit
\[\lim_{n \rightarrow \infty}\left\Vert \left(Me^{t\cL/n}\right)^n - \sum_{j=1}^J M^n\tilde P_j \right\Vert=\lim_{n \rightarrow \infty} \left\Vert M^n -\sum_{j=1}^J\lambda_j^n \tilde P_j\right\Vert=0, \]
where the first identity holds because of the choice $\cL=0$.
The following proposition shows that the existence of this limit, stated as case (\ref{uniformconv}), in operator norm is equivalent to exponentially fast convergence of $\left\Vert M^n -\sum_{j=1}\lambda_j^n \tilde P_j \right\Vert$ to zero (\ref{uniformconvExpDec}), and finally that $M$ satisfies a spectral gap condition \eqref{three} which will be assumed throughout this article.}

\noindent
\begin{prop}
		\label{lem:SpectralGap}
		Let $M\in\cB(X)$ be a contraction, $J\in\N$, $\{\lambda_j\}_{j=1}^J\subset\C$ with $|\lambda_j|=1$. Then the following are equivalent:
		\begin{enumerate}
		\item $\lim_{n\to\infty}\| M^n - \sum_{j=1}^J \lambda_j^n  K_j\| = 0,$ for some $0\neq K_j\in\cB(X)$.\label{uniformconv}
		\item $\| M^n - \sum_{j=1}^J \lambda_j^n K_j\| \le C  \tilde\delta^{n+1}$ for some $0\neq K_j\in\cB(X)$, $0\le\tilde\delta<1$ and $C>0.$ \label{uniformconvExpDec}
		\item \label{three} For some $0\le\delta<1$ the contraction $M$ satisfies the spectral gap condition given by
		\begin{align}
		\label{eq:SpectralGap}
		\{\lambda_j\}_{j=1,\dots,J}\subset\Spec(M) \subset B_\delta \cup \{\lambda_j\}_{j=1}^J,
		\end{align}
		where $B_{\delta}:=\{ z \in \CC; \vert z \vert \le \delta\},$ and its quasi-nilpotent parts $N_j$ (defined through  \eqref{eq:Nilpotent}) are equal to zero
		for all $j =1,\cdots, J$. \label{specNil}
		\end{enumerate}
		If either of the above condition holds, then the spectral projectors $P_j$ (defined through~\eqref{eq:SpecProjector}) are well-defined, we have $K_j=P_j$ and each $P_j$ is the projector onto the eigenspace corresponding to the eigenvalue $\lambda_j$.
	\end{prop}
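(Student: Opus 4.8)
The plan is to prove the cycle $(\ref{uniformconvExpDec})\Rightarrow(\ref{uniformconv})\Rightarrow(\ref{three})\Rightarrow(\ref{uniformconvExpDec})$ and to identify the $K_j$ with the spectral projections only at the very end. Throughout I would assume, without loss of generality, that the $\lambda_j$ are pairwise distinct, since they are meant to be distinct isolated spectral points. The implication $(\ref{uniformconvExpDec})\Rightarrow(\ref{uniformconv})$ is immediate because $\tilde\delta^{n+1}\to0$. The workhorse of the argument is an elementary averaging lemma which I would isolate first: if $\lambda_1,\dots,\lambda_J$ are pairwise distinct and unimodular and $\sum_{j=1}^J\lambda_j^n A_j\to0$ in operator norm for some $A_j\in\cB(X)$, then $A_j=0$ for all $j$. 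To prove it I would multiply by $\bar\lambda_k^n$ and pass to the Ces\`aro mean over $n=1,\dots,N$; since $\frac1N\sum_{n=1}^N(\lambda_j\bar\lambda_k)^n\to\delta_{jk}$ (the geometric sum stays bounded for $j\neq k$) while the Ces\`aro mean of a norm-null sequence is norm-null, this forces $A_k=0$.

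For the central implication $(\ref{uniformconv})\Rightarrow(\ref{three})$ I set $R_n:=M^n-\sum_j\lambda_j^nK_j\to0$. Comparing $M^{n+1}=M\cdot M^n$ yields $\sum_j\lambda_j^n(\lambda_jK_j-MK_j)=R_{n+1}-MR_n\to0$, so the lemma gives $MK_j=\lambda_jK_j$; comparing $M^{n+1}=M^n\cdot M$ gives $K_jM=\lambda_jK_j$. Substituting the consequence $M^nK_k=\lambda_k^nK_k$ into $M^n=\sum_j\lambda_j^nK_j+R_n$ and applying the lemma once more yields the orthogonality relations $K_jK_k=\delta_{jk}K_k$. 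Hence each $K_j$ is a nonzero idempotent, the family is mutually orthogonal, and $K:=\sum_jK_j$ is a projection commuting with $M$.

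Next I would define $S:=M-\sum_j\lambda_jK_j$. From the identities just established one checks $SK_j=K_jS=0$ and, inductively, $M^n=\sum_j\lambda_j^nK_j+S^n$, so hypothesis $(\ref{uniformconv})$ says exactly that $\|S^n\|\to0$. Picking $N$ with $\|S^N\|<1$ and using $r(S)^N=r(S^N)\le\|S^N\|$ shows $r(S)=:\delta<1$. Splitting $X=\ran K\oplus\ker K$ into $M$-invariant parts, $M$ acts as $\lambda_j$ on $\ran K_j$ and as $S$ on $\ker K$, so $\Spec(M)=\{\lambda_j\}_{j=1}^J\cup\Spec(S|_{\ker K})\subset\{\lambda_j\}_{j=1}^J\cup B_\delta$ with each $\lambda_j$ isolated; this is the spectral gap \eqref{eq:SpectralGap}. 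Since each $\lambda_j$ is isolated, the Riesz projection $P_j$ of \eqref{eq:SpecProjector} is well-defined, and by uniqueness of the Riesz decomposition (the invariant pieces $\ran K_j$ and its complement carry spectra $\{\lambda_j\}$ and $\Spec(M)\setminus\{\lambda_j\}$) I obtain $P_j=K_j$; then $N_j=(\lambda_j-M)P_j=\lambda_jK_j-MK_j=0$ from \eqref{eq:Nilpotent}, which completes $(\ref{three})$. To close the cycle, $(\ref{three})\Rightarrow(\ref{uniformconvExpDec})$ reuses $M^n=\sum_j\lambda_j^nP_j+S^n$ together with Gelfand's formula \eqref{spec-radius}: since $r(S)=\delta<\tilde\delta$ one gets $\|S^n\|\le C\tilde\delta^{n+1}$, which is $(\ref{uniformconvExpDec})$ with $K_j=P_j$. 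Finally $\ran P_j=\ker(M-\lambda_j)$: the inclusion $\supseteq$ follows because $Mx=\lambda_jx$ forces $K_lx=0$ for $l\neq j$ and $(\1-K)x=0$ (the latter since $\lambda_j\notin\Spec(S|_{\ker K})$), while $\subseteq$ is clear from $MK_j=\lambda_jK_j$.

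The hard part will be the forward direction $(\ref{uniformconv})\Rightarrow(\ref{three})$, and within it the passage from mere norm-convergence of $M^n-\sum_j\lambda_j^nK_j$ to the exact algebraic identities $MK_j=\lambda_jK_j$ and $K_jK_k=\delta_{jk}K_k$; the averaging lemma is precisely the tool that makes this extraction rigorous, after which everything downstream (the spectral gap via $r(S)<1$, and the identifications $P_j=K_j$, $N_j=0$) follows from standard spectral decomposition arguments.
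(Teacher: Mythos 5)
Your proof is correct, and it reaches Proposition~\ref{lem:SpectralGap} by a genuinely different route than the paper's. The common germ is Ces\`aro averaging: the paper likewise observes that \eqref{uniformconv} forces the averages $\tfrac{1}{n}\sum_{k=0}^{n-1}(\overline{\lambda_j}M)^k$ to converge in norm to $K_j$. But from there the paper outsources the work to operator ergodic theory: Yosida's mean ergodic theorem identifies $K_j$ as the projection onto $\ker(M-\lambda_j)$, Lloyd's theorem on Lin's uniform ergodic theorem \cite{Llyod81} yields that each $\lambda_j$ is an isolated point of $\Spec(M)$ at which the resolvent has a first-order pole (hence $N_j=0$ by a contour estimate), and the gap itself is completed via Kato's spectral decomposition theorem \cite{Kato} and the spectral mapping theorem applied to $MQ$, with $Q$ the complementary spectral projection. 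You instead extract the full algebraic structure directly from your elementary averaging lemma --- $MK_j=K_jM=\lambda_jK_j$ and $K_jK_k=\delta_{jk}K_k$ --- then set $S=M-\sum_j\lambda_jK_j$, prove $M^n=\sum_j\lambda_j^nK_j+S^n$, and read the whole spectral picture off $r(S)<1$ together with the $M$-invariant splitting $X=\ran K\oplus\ker K$. Your route is more self-contained (nothing beyond Gelfand's formula \eqref{spec-radius} and basic Riesz-projection facts), and the identity $M^n=\sum_j\lambda_j^nK_j+S^n$ hands you \eqref{three}$\Rightarrow$\eqref{uniformconvExpDec} essentially for free, whereas the paper proves that implication by a separate contour-integral bound over $\partial B_{\tilde\delta}$; what the paper's route buys is brevity on the page, given the cited ergodic literature, and a formulation that ties the proposition to the ergodic-theoretic framework it deploys later for strong power-convergence. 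Two points you should make explicit in a final write-up: pairwise distinctness of the $\lambda_j$ is not merely convenient but necessary for the equivalence to hold at all (otherwise cancellations such as $K_1=-K_2\neq0$ reduce \eqref{uniformconv} to a statement about $\|M^n\|$ alone and \eqref{three} can fail), and in \eqref{three}$\Rightarrow$\eqref{uniformconvExpDec} the expansion $M^n=\sum_j\lambda_j^nP_j+S^n$ cannot literally be ``reused'' from the forward direction but must be re-derived from $N_j=0$ (which gives $MP_j=\lambda_jP_j$) and the mutual orthogonality $P_jP_k=\delta_{jk}P_k$ of Riesz projections attached to disjoint spectral sets --- a short, standard argument, but it belongs in the proof.
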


In particular from Proposition~\ref{lem:SpectralGap} we can immediately infer the following corollary, which shows the equivalence of the uniform convergence of the powers $M^n$ to a spectral gap condition on $M$.
\begin{corr}
\label{cor:uniformpowerconv}
Let $M\in\cB(X)$ be a contraction. Then the following are equivalent:
\begin{enumerate}
\item $\left(M^n\right)_{n\in\N}$ converges uniformly.
\item $\left(M^n\right)_{n\in\N}$ converges uniformly with exponential convergence rate.
\item For some $0\le\delta<1$ the contraction $M$ satisfies the spectral gap condition given by
	\begin{align}
		\Spec(M)\subset B_\delta \cup \{1\},
\end{align}
where $B_{\delta}:=\{ z \in \CC; \vert z \vert \le \delta\},$ and in the case $1\in\Spec(M)$ that the corresponding quasi-nilpotent part is equal to zero.
\end{enumerate}
\end{corr}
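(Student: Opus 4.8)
The plan is to read this corollary off from Proposition~\ref{lem:SpectralGap} specialised to $J=1$ and $\lambda_1=1$, with one extra case to dispose of: the possibility that the uniform limit of $M^n$ is the zero operator (equivalently, that $M$ has no spectrum of modulus one, which is the $J=0$ situation not directly covered by the Proposition). I would organise the argument as the cycle $(1)\Rightarrow(3)\Rightarrow(2)\Rightarrow(1)$, invoking the equivalence of conditions (\ref{uniformconv}), (\ref{uniformconvExpDec}) and (\ref{specNil}) in Proposition~\ref{lem:SpectralGap} whenever there is a unit-modulus spectral point, and handling the degenerate case by a direct spectral-radius estimate.

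For $(1)\Rightarrow(3)$, suppose $M^n\to K$ uniformly for some $K\in\cB(X)$. If $K\neq0$, I would apply Proposition~\ref{lem:SpectralGap} with $J=1$ and $\lambda_1=1$: then $\|M^n-\lambda_1^nK\|=\|M^n-K\|\to0$ is precisely hypothesis (\ref{uniformconv}), so the equivalent condition (\ref{specNil}) holds, giving $1\in\Spec(M)\subset B_\delta\cup\{1\}$ with vanishing quasi-nilpotent part~\eqref{eq:Nilpotent}; this is exactly item (3) of the corollary, and the Proposition further identifies $K=P_1$ via~\eqref{eq:SpecProjector}. If instead $K=0$, I would argue that $r(M)<1$: by the spectral mapping theorem $r(M^n)=r(M)^n$, and since the spectral radius is dominated by the operator norm, $r(M)^n=r(M^n)\le\|M^n\|\to0$, forcing $r(M)<1$. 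Then $1\notin\Spec(M)$ and $\Spec(M)\subset B_\delta$ for any $\delta\in(r(M),1)$, so item (3) holds with its unit-circle clause vacuous.

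For $(3)\Rightarrow(2)$, I would split along whether $1\in\Spec(M)$. If $1\in\Spec(M)$, then item (3) of the corollary is verbatim hypothesis (\ref{specNil}) of Proposition~\ref{lem:SpectralGap} with $J=1$, $\lambda_1=1$, so the equivalent condition (\ref{uniformconvExpDec}) yields $\|M^n-P_1\|\le C\tilde\delta^{n+1}$, the claimed exponential convergence. If $1\notin\Spec(M)$, then $\Spec(M)\subset B_\delta$ gives $r(M)\le\delta<1$, and Gelfand's formula~\eqref{spec-radius} yields $\|M^n\|\le C\tilde\delta^{n}$ for any $\tilde\delta\in(\delta,1)$, i.e.\ exponential convergence of $M^n$ to $0$. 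Finally $(2)\Rightarrow(1)$ is immediate, since exponential convergence is a special case of uniform convergence.

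The only genuinely non-routine point is the bridge between the Proposition, which presupposes at least one spectral value of modulus one (so that the nonzero limit operators $K_j$ make sense), and the case in which $M^n$ converges uniformly to $0$. I expect this to be the sole obstacle, and it is resolved cleanly by the observation $r(M)^n=r(M^n)\le\|M^n\|$, which converts uniform convergence of the powers to $0$ into the strict bound $r(M)<1$, and hence into the empty-unit-circle instance of the spectral gap condition.
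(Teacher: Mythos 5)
Your proposal is correct and takes essentially the same route as the paper, which states that the corollary follows immediately from Proposition~\ref{lem:SpectralGap} specialised to $J=1$, $\lambda_1=1$. Your only addition is the explicit treatment of the degenerate case $M^n\to 0$ uniformly (equivalently $1\notin\Spec(M)$, which the Proposition's hypothesis $K_j\neq 0$ excludes), handled via $r(M)^n=r(M^n)\le\|M^n\|$ and Gelfand's formula~\eqref{spec-radius}; this fills in a step the paper leaves implicit.
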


We now give two examples of quantum channels $M$ which satisfy the condition \eqref{specNil} in Proposition~\ref{lem:SpectralGap} and hence the assumption in Theorem~\ref{thm:QuantitativQuantumZeno}.
\begin{example}[Generalised depolarising channel] 
\label{ex:GDC}
	Consider for $X = \cT(\cH)$, $\sigma\in\cT(\cH)$ and $p\in[0,1)$ the contraction $M$ being the generalised depolarising channel $\Phi_p$,
	which acts on any state $\rho \in \cT(\cH)$ as follows:
\[\Phi_p(\rho) = (1-p)\rho + p\tr(\rho)\sigma.\]
	We can directly construct the resolvent for any complex number $\lambda \notin \{1-p,1\}$ by
	\begin{align}
	\label{eq:PolResolvevnt}
	\left(\lambda - \Phi_p\right)^{-1}(\rho) =  \frac{\rho + \frac{p}{\lambda-1}\tr(\rho)\sigma}{\lambda+ p -1},
	\end{align}
	for any $\rho\in\cT(\cH)$, which shows that $\Spec(\Phi_p) \subset \{(1-p),1\}$. Moreover, using the explicit form of the resolvent (given in~\eqref{eq:PolResolvevnt}) we can directly compute the projector $P$ corresponding to the spectral point $1$ (cf~\eqref{eq:SpecProjector}):
	\begin{align*}
	P(\rho) &= \frac{1}{2\pi i}\oint_\Gamma \left(z - \Phi_p\right)^{-1}(\rho) dz = \frac{1}{2\pi i}\oint_\Gamma \frac{\rho+ \frac{p}{z-1}\tr(\rho)\sigma}{z+ p -1} dz \\ &=\frac{1}{2\pi i}\oint_\Gamma\frac{p}{(z-1)(z+ p -1)}dz\,\tr(\rho)\sigma = \tr(\rho)\sigma,
	\end{align*}
	where $\Gamma$ encloses the spectral point $1$ but not $1-p$.
	Note that the spectral projection $ P$ coincides with the projection onto the invariant subspace  $\F = \operatorname{span}\{\sigma\}$. Hence, we have explicitly shown that the quasi-nilpotent part of the generalised depolarising channel is zero and hence assumption \eqref{specNil} in Proposition~\ref{lem:SpectralGap} and the assumption of Theorem~\ref{thm:QuantitativQuantumZeno} holds. Moreover, as $P \neq \1$ we have also shown the equality $\Spec(\Phi_p) = \{(1-p),1\}$.
\end{example}

\begin{example}[Schrödinger evolution of the harmonic oscillator] \label{ex:HarmonicOsc}
	We consider the Hamiltonian of a one-dimensional quantum harmonic oscillator $H = -\Delta + \omega^2 x^2$ defining a strongly continuous group $(U(t))_{t \in {\mathbb{R}}}$ on $L^2(\R)$, where $U(t) = e^{-itH}$. Let $E_n := \omega\left(n+1/2\right)$ denote the energy eigenvalues and let $\{\ket{n}\}_{n \in \mathbb{N}}$ denote the energy eigenbasis of $H$. Then $U(t) = \sum_{n=0}^\infty e^{-itE_n} |n\rangle\langle n|$, where the series converges strongly in $L^2(\R)$. Consider  corresponding quantum channel $\Phi_{U(t)}$ on $\cT(L^2(\R))$ given by conjugating with $U(t)$, i.e.
	\begin{align*}
	\Phi_{U(t)}(\rho) &= U(t)\rho U(t)^* = \sum_{n,m=0}^\infty e^{-it(E_n-E_m)}\langle n|\rho|m\rangle\ket{n}\bra{m} \\ &= \sum_{n,m=0}^\infty e^{-it\omega(n-m)}\langle n|\rho|m\rangle\ket{n}\bra{m},
	\end{align*}
	where the convergence of the series is in trace norm. We now see that for all $\lambda\notin\overline{\{e^{-it\omega k}\}}_{k\in\mathbbm{Z}}$ we can explicitly write down the resolvent of $\Phi_{U(t)}$ at $\lambda$, which is 
	\begin{align*}
	\left(\lambda - \Phi_{U(t)}\right)^{-1}(\rho) = \sum_{n,m=0}^\infty \frac{\langle n|\rho|m\rangle\ket{n}\bra{m}
}{\lambda -e^{-it\omega(n-m)}}	\end{align*}
	and hence $\Spec(\Phi_{U(t)}) \subset \overline{\{e^{-it\omega k}\}}_{k\in\mathbbm{Z}}$.
	
	Now consider a fixed time $t$ satisfying $t\omega = 2\pi/k$ for some $k\in\N$ and define the contraction $M:=\Phi_{U(t)}$. In that case we see that $\Spec(\Phi_{U(t)})$ consists at most of $k$ points, which are hence all isolated and therefore $\Phi_{U(t)}$ satisfies \eqref{eq:SpectralGap}. For any $j = 0,\cdots,k$ let $\Gamma_j$ be a closed curve surrounding the spectral point $\lambda_j=e^{-\frac{2\pi i j}{k}}$ and separating this point from the rest of $\Spec(\Phi_{U(t)})$. We can then compute the spectral projector corresponding to $\lambda_j$ 
	which is
	\begin{align*}
	P_j(\rho) = \frac{1}{2\pi i }\oint_{\Gamma_j} \left(z-\Phi_{U(t)}\right)^{-1}(\rho) dz 
	&=  \sum_{n,m=0}^\infty\frac{1}{2\pi i }\oint_{\Gamma_j}\frac{\,\langle n|\rho|m\rangle\ket{n}\bra{m} }{z -e^{-\frac{2\pi i(n-m)}{k}}}dz \\
	&=  \sum_{\substack{n,m=0\\n-m= j \text{mod}\, k}}^\infty \langle n|\rho|m\rangle\ket{n}\bra{ m}.
	\end{align*}
	Hence, as for all $\rho\in\cT(L^2(\R))$ the image of the spectral projector $P_j(\rho)$, if non-zero, is an eigenvector corresponding to the eigenvalue $\lambda_j = e^{-\frac{2\pi i j}{k}}$ of $\Phi_{U(t)}$. Thus, all quasi-nilpotent parts are equal to zero, which shows that $M=\Phi_{U(t)}$ fulfills the condition \eqref{specNil} in Proposition~\ref{lem:SpectralGap}. 
\end{example}
In the following example we see that the (bosonic quantum-limited) attenuator channel does not satisfy the spectral gap assumption used in Theorem~\ref{thm:QuantitativQuantumZeno}. However, it is still strongly power-convergent to its invariant subspace, i.e.~it satisfies the condition \eqref{eq:StrongPowerConv}. Hence, Theorem~\ref{thm:StrongZenoGeneral} applies for the choice of $M$ being the attenuator channel.
\begin{example}[Attenuator Channel]
\label{ex:Atten}
Let $\Phi^{att}_t$ be the attenuator channel with attenuation parameter $\eta(t) = e^{-t}$ which acts on an arbitrary state $\rho$ as
\begin{align*}
\Phi^{att}_t(\rho) = \sum_{l=0}^\infty \frac{(1-e^{-t})}{l!} e^{-tN/2}a^l\rho (a ^*)^le^{-tN/2} = \sum_{l=0}^\infty K_l(t)\rho K^*_l(t),
\end{align*}
with 
\begin{align*}
K_l(t) =  \frac{(1-e^{-t})}{l!}e^{-tN/2}a^l = \sum_{m=0}^\infty \sqrt{\binom{m+l}{m}}(1-e^{-t})^{l/2}e^{-tm/2}\ket{m}\bra{m+l} 
\end{align*}
(cf. \cite[Lemma II.12]{DTG16}).
From the above one can see that the  attenuator channel has a unique invariant state given by $\kb{0}$ and converges strongly to the projector of this invariant state in the limit $t\to\infty$, i.e. for all states $\rho$
\begin{align}
\label{eq:attconv}
\lim_{t\to\infty} \Phi^{att}_t(\rho) =\tr(\rho)\kb{0} = P(\rho),
\end{align}
where we defined the projector $P(\cdot) = \tr(\cdot)\kb{0}$.
Consider the quantum operation $$M=\Phi^{att}_{t_0},$$
where $t_0>0$ is any fixed time. 
Using the fact that $\left(\Phi^{att}_t\right)_{t\ge 0}$ is a semigroup,
\eqref{eq:attconv} immediately gives that $M$ is strongly power-convergent, i.e.
\begin{align*}
\lim_{n\to\infty}M^n(\rho) = P(\rho). 
\end{align*}
However, $M$ is not uniformly power-convergent. This can be seen by using the fact that for coherent states, $$\ket{\alpha}= e^{-|\alpha|^2/2} \sum_{m=0}^\infty\frac{\alpha^m}{m!}\ket{m},$$ the attenuator channel acts as
\begin{align*}
\Phi^{att}_t(\kb{\alpha}) = \kb{e^{-t}\alpha}.
\end{align*}
Hence, for all $n\in\N$
\begin{align*}
\|M^n-P\|&=\sup_{\|x\|_1=1}\left\|M^n(x) - P(x)\right\|_1 \ge \sup_{\kb{\alpha}} \left\|M^n(\kb{\alpha}) - P(\kb{\alpha})\right\|_1 \\&= \sup_{\kb{\alpha}} \left\| \kb{e^{-nt_0}\alpha} - \kb{0}\right\|_1 = 2.
\end{align*}
Hence, we see that $M$ violates the assumptions in Corollary~\ref{cor:uniformpowerconv} and Theorem~\ref{thm:QuantitativQuantumZeno}, i.e. either 1 is not an isolated point in the spectrum of $M$ or its associated quasi-nilpotent part is not equal to zero. 
\end{example}
In Section~\ref{sec:strongpowconv} we show  strong power-convergence for a variety of other quantum channels, which provides more examples to which our Theorem~\ref{thm:StrongZenoGeneral} can be applied. These include quantum channels related to the quantum Ornstein-Uhlenbeck semigroup (Example~\ref{ex:qOU}), the Jaynes-Cummings model (Example~\ref{ex:Jaynes-Cumming}) and photon absorption and emission processes (Example~\ref{ex:EmAbs} and \ref{ex:photonabs}). To prove that these examples of quantum channels satisfy the strong power-convergence property required in Theorem~\ref{thm:StrongZenoGeneral}, we use an embedding technique into the Hilbert space of Hilbert-Schmidt operators, developed in \cite{CF00}, and the results on ergodic theory of quantum Markov semigroups in~\cite{FV82,DFR10}.

\medskip

For the following example, we investigated numerically the speed of convergence towards the Zeno subspace and compared it to our analytical bound \eqref{eq:QuantitativeQuantumZeno}.
\begin{figure}[t]
	\centering
		\includegraphics[width=0.65\linewidth]{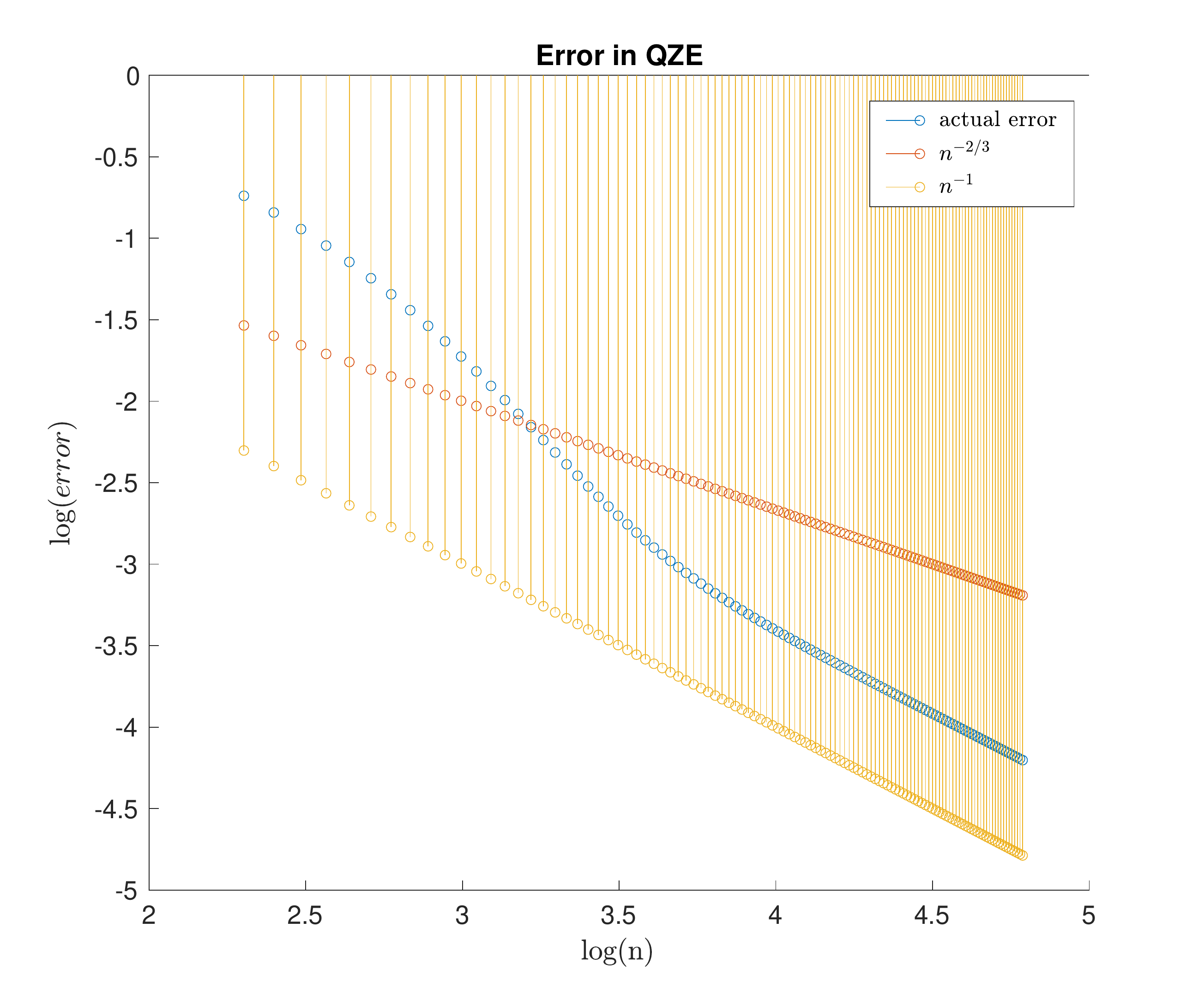}
		\caption{We compare the numerically computed error in the quantum Zeno limit and find a speed of convergence $\propto n^{-1}$. This is to be compared with the analytically obtained decay rate $\propto n^{-2/3}$ predicted by \eqref{eq:QuantitativeQuantumZeno}. }
		\label{fig:error}
\end{figure}

	\begin{example}
	    We consider for $M$ the generalized depolarising channel, introduced in Example \ref{ex:GDC}, with 
	   \begin{equation*}
	    \sigma=\frac{1}{3}\left(\ket{0}\bra{0}+\ket{1}\bra{1}+\ket{2}\bra{2}\right)+ \frac{1}{10}\left( \ket{0}\bra{1}+\ket{1}\bra{0}\right),
	    \end{equation*} The dynamics is given by $U(t)\rho U(t)^\dagger=e^{-iHt} \rho e^{iHt}$ where $H=-\Delta+ x^2$ is the Hamiltonian of the harmonic oscillator. For an initial state $\rho=\ket{0}\bra{0},$ analysing the quantum Zeno limit reduces to studying the norm $\left\|\left(Me^{t\cL/n}\right)^n(\rho) - \sigma \right\|_1.$ The analytical error ($\propto n^{-3/2}$) and numerical error ($\propto n^{-1}$) are both illustrated in Figure \ref{fig:error}.
	\end{example}

The following example shows that the boundedness assumption in Theorem \ref{thm:UnboundedQuantumZenoProjector} is strictly necessary. Moreover, in this example both the pointwise quantum Zeno limit as well as the quantum Zeno dynamics do not exist, if this assumption is not satisfied.
\begin{example}
Consider the state 
\[ \ket{\varphi} = \sum_{n=1}^{\infty} 2^{-n/2} \ket{2^n}\]
and $P=\kb{\varphi}$ be the projection onto that state. Let $\mathcal L:=iN$ with $N$ being the number operator. Then the $k^{th}$ Yosida approximant $\cL_k$ acts on the state $\ket{\varphi}$ as follows:
\[ \cL_k \ket{\varphi} \equiv k\cL \left(k-\cL\right)^{-1} \ket{\varphi} = \sum_{n=1}^{\infty} i2^{n/2} k(k-i2^{n})^{-1} \ket{2^n}.\]
This implies that 
\[ P\cL_k P = \ket{\varphi}\bra{\varphi}  \sum_{n=1}^{\infty} \frac{ik}{k-i2^{n}}\]
and hence 
\[ e^{t P\cL_k P} = e^{ t\sum_{n=1}^{\infty} \frac{ik}{k-i2^{n}}} P.\]
Thus, if $x \in \operatorname{span}(\varphi)^{\perp}$ it follows that 
\begin{align}
 e^{t P\cL_k P}x =0 \quad \forall k, \label{ortho}   
\end{align} 
whereas for $x \notin \operatorname{span}(\varphi)^{\perp}$, the limit as $k$ tends to infinity, of the left hand side of the above equation, does not exist. This shows the non-existence of the limit of the Yosida approximation of the Zeno dynamics given $e^{tP\cL_k P}$.

Turning now to the Zeno product, we start by observing that,
\[ \bra{\varphi} e^{t/n \mathcal L} \ket{\varphi} = \sum_{k=1}^{\infty} 2^{-k}e^{it 2^{k}/n}\]
so that
\[\left( Pe^{t/n \mathcal L}\right)^n \ket{\varphi}  = \ket{\varphi} \left(\sum_{k=1}^{\infty} 2^{-k}e^{it 2^{k}/n}\right)^n.\]
An elementary calculation (that we leave to the reader) shows that the limit of the Zeno product $\left( Pe^{t/n \mathcal L}\right)^n$ as $n$ tends to infinity also does not exist.
\end{example}


	\section{Proof of Proposition~\ref{lem:SpectralGap}}\label{sec;Prop-examples}
	We start this section by first introducing certain elements of ergodic theory which we employ as ingredients of the proof.
	The invariant subspace of a contraction $M \in \cB(X)$ shall be denoted by $\F:=\Big\{x\in X\Big| Mx = x\Big\}.$ Consider for $n\in\N$ the average operator
	\begin{align}
	\label{eq:AverageOp}
	A_n := \frac{1}{n}\sum_{k=0}^{n-1}M^k.
	\end{align}
	The {\em{mean ergodic subspace}} of $M$, which is the subspace of $X$ on which \eqref{eq:AverageOp} has a strong limit, shall be denoted by
\[	X_{\text{me}} = \Big\{x\in X\Big| \lim_{n\to\infty}A_n x \text{ exists}\Big\}.\]

	{\em{Yosida's Mean Ergodic Theorem}} (cf. \cite[Chapter 2]{Krengel85} or \cite[Chapter VIII. 3.]{Yos80}) gives the following complete characterisation of the mean ergodic subspace,
	\begin{align*} 
	X_{\text{me}} = \F \oplus\overline{\operatorname{ran}\left(\1 -M\right)},
	\end{align*}
	and in addition states that for all $x\in X_{\text{me}}$ the average operator converges to some operator $\tilde P$
	\begin{equation}
	\label{eq:tildeP}
	\lim_{n\to\infty}A_nx = \tilde Px\quad \text{ with }\quad \tilde P^2 = \tilde P,
	\end{equation} 
	defined on the subspace $X_{\text{me}}$. Here $\tilde P$ is the projection onto the invariant subspace $\F$, i.e.~$\operatorname{ran}(\tilde P) = \F, \ \operatorname{ker}(\tilde P) = \overline{\left(\1 -M\right)X}$ and $\tilde P M = M\tilde P = \tilde P$. We call the operator $M$ \emph{mean ergodic}, if $X = X_{\text{me}}.$  
	
With these preliminaries in hand, we are now ready to state the proof of Proposiion~\ref{lem:SpectralGap}:
	
\begin{proof}[Proof of Prop. \ref{lem:SpectralGap}]
The direction  $\eqref{uniformconvExpDec}\implies\eqref{uniformconv}$ is trivial. We continue by showing the implication $\eqref{specNil}\implies\eqref{uniformconvExpDec}$. As the quasi-nilpotent parts are all equal to zero, we can pick $K_j=P_j$, with spectral projector $P_j$ as defined in \eqref{eq:SpecProjector}, and get for $0\le\delta<\tilde\delta<1$ the estimate for the expression in \eqref{uniformconvExpDec}:
\[\left\|M^n - \sum_{j=1}^J \lambda_j^n P_j\right\| = \left\|\frac{1}{2\pi i}\oint_{\partial B_{\tilde\delta}} z^n \left(z-M\right)^{-1} dz\right\|\le C \tilde\delta^{n+1},\]
where we used the fact that $\|\left(z-M\right)^{-1}\|$ is uniformly bounded for $z\in\partial B_{\tilde\delta}$. 

We complete the proof by showing $\eqref{uniformconv} \implies\eqref{specNil}$. Note that defining for each $\lambda_j$ the rotated operator $M_j = \overline{\lambda_j}M$ and considering the corresponding average operator
\begin{equation}
\label{eq:RotatedAverage}
A_n(M_j):=\frac{1}{n}\sum_{k=0}^{n-1} M_j^k, 
\end{equation}
\eqref{uniformconv} implies that $A_n(M_j)$ converges uniformly to $K_j$ and hence $M_j$ is uniformly mean ergodic. Hence, 
Yosida's mean ergodic theorem implies that $K_j=\tilde P_j$, with $\tilde P_j$ being the projector onto the invariant subspace of $M_j$.
Now using~\cite[Theorem \@1]{Llyod81} and the arguments therein we see that the restriction $M_j|_{\operatorname{ran}(\1- \tilde P_j)}$ does not contain $1$ in its spectrum. Since $\Spec(M_j) = \Spec\left(M_j|_{\operatorname{ran}(\1- \tilde P_j)}\right) \cup \Spec\left(M_j|_{\operatorname{ran}(\tilde P_j)}\right)$ and $\Spec\left(M|_{\operatorname{ran}(\tilde P_j)}\right) \subset \{1\}$, we see $1\notin\overline{\Spec(M_j)\setminus\{1\}}\subset \Spec\left(M|_{\operatorname{ran}(\1- \tilde P)}\right)$, which shows that $1$ is an isolated point in $\Spec(M_j)$. By rotating $M_j$ back to $M$, we see that each $\lambda_j$ is an isolated point in $\Spec(M)$. Moreover, again using~\cite[Theorem \@1]{Llyod81}, we see that all poles of the resolvent are of first order at each of the isolated spectral points $\lambda_j$, i.e.~$(z -\lambda_j)\|\left(z-M\right)^{-1}\|$ is bounded in $z$. Now consider for an arbitrary $\eps>0$ a closed curve $\Gamma_{j,\eps} \subset \CC$ with distance at most $\eps$ to $\lambda_j$ and not intersecting $\Spec(M)$. Then we can bound
\[\|N_j\| \le \frac{1}{2\pi} \oint_{\Gamma_{j,\eps}} |\lambda_j - z| \left\|\left(z-M\right)^{-1}\right\| dz \le C \,\eps\]
for some $C>0$ independent of $\eps$. Since $\eps>0$ is arbitrary, we see that all quasi-nilpotent operators $N_j=0$ and hence the spectral projectors \eqref{eq:SpecProjector} are equal to the projections onto the corresponding eigenspaces $P_j = \tilde P_j =K_j$.

 In order to conclude the spectral gap condition \eqref{eq:SpectralGap} and hence complete the proof, it suffices to show that apart from the $\lambda_j$ there are no other points lying in the intersection of $\Spec(M)$ and the unit circle in the complex plane. To show this, let $\gamma$ be a closed curve in the complex plane enclosing $\Spec(M)\setminus\{\lambda_j\}_{j=1}^J$ and separating it from $\{\lambda_j\}_{j=1}^J$. Then, 
 \begin{align*}
 Q = \frac{1}{2\pi i}\oint_\gamma \left(z-M\right)^{-1} dz
 \end{align*}   
is the spectral projector corresponding to $\Spec(M)\setminus\{\lambda_j\}_{j=1}^J$. From \cite[Theorem 6.17, Chapter III §6.4.]{Kato} it follows that $\Spec(MQ) = \Spec(M)\setminus\{\lambda_j\}_{j=1}^J$.  Moreover, since $\tilde P_j Q = P_j Q= 0$ for all $j$, we can conclude from~\eqref{uniformconv} that $\lim_{n\to\infty}\|(MQ)^n\| = 0.$
 By the spectral mapping theorem applied to polynomials (see e.g.~\cite[Corollary 1, Chapter VIII 7]{Yos80}) we have $$\Spec((MQ)^n) =(\Spec(MQ))^n =(\Spec(M)\setminus\{\lambda_j\}_{j=1}^J)^n $$ for each $n\in\N$. Hence, we see that there is no point in $\Spec(M)\setminus\{\lambda_j\}_{j=1}^J$ on the unit circle, which gives the spectral gap condition \eqref{eq:SpectralGap} and completes the proof.
\end{proof}

\section{Proof of Theorem \ref{thm:QuantitativQuantumZeno}} \label{sec:QuantitativeBoundedZeno}
In this section we give the proof of Theorem~\ref{thm:QuantitativQuantumZeno}.
 As in the previous section, we consider closed curves $\Gamma_j$ enclosing the isolated spectral points $\lambda_j$ of the quantum operation $M$ on the unit circle and separating them from $\Spec(M)\backslash \{\lambda_j\}$. We choose the $\Gamma_j$ in such a way that their distance from $\lambda_j$ is small, say at most $1/2$. Moreover, for $0<\delta<\tilde\delta<1$ we consider a closed curve $\gamma \subset B_{\tilde\delta}$ that satisfies $\gamma \cap B_\delta=\emptyset$. For convenience we denote by $\Omega$ the open set which lies in the interior of all of the curves $\Gamma_j$ and $\gamma$. For an illustration of this construction see Figure~\ref{fig:SpecCurvy}. 
\medskip

\begin{figure}[t]
	\centering
		\includegraphics[width=0.55\linewidth]{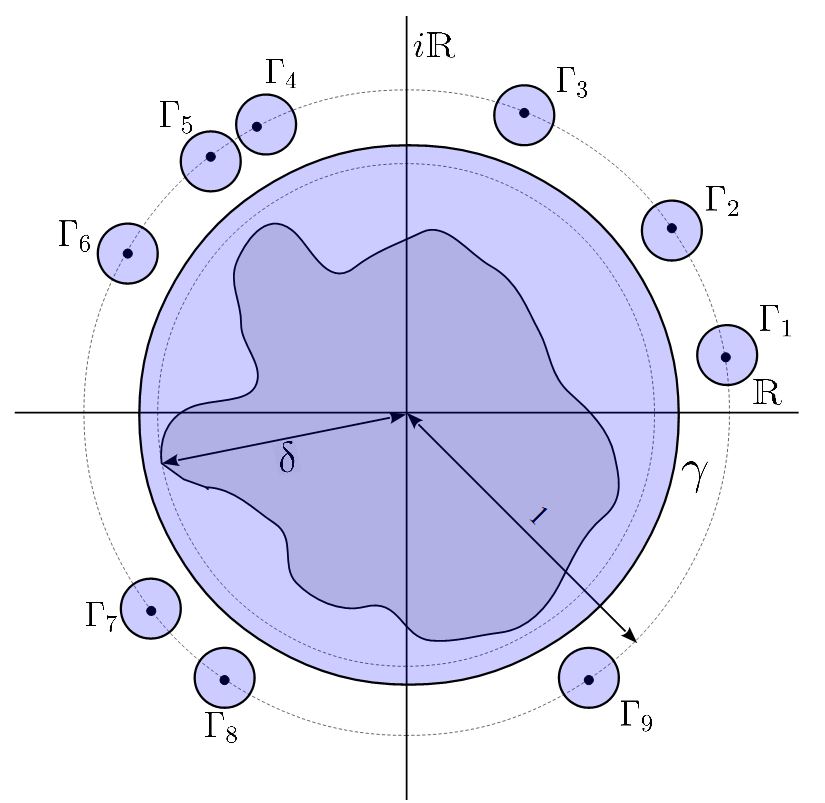}
		\caption{$\Spec(M)$ with curves $\Gamma_j$ and $\gamma$. The spectrum of $M$ consists of the dark region in the middle with maximal distance from the origin equal to $\delta$ and the dots on the unit circle. The violet region in the interior of all curves is equal to the open set $\Omega$. }
		\label{fig:SpecCurvy}

	\end{figure}

\begin{rem} In the sequel, $C$ denotes a generic positive constant independent of $\cL$ and $n\in\mathbb{N}$ which might change from line to line.
\label{rem:C}
\end{rem}

By uniform continuity of the semigroup $(e^{t\cL})_{t \geq 0}$ and the upper semicontinuity of the spectrum of $M e^{t\cL}$ ~\cite[Chapter IV §3.4]{Kato}, we know that the curves $\Gamma_j$ and $\gamma$ separate parts of the spectrum of $M e^{t\cL}$  for $t>0$ small enough. We will prove this explicitly and establish a uniform bound on the resolvent of $M e^{t\cL}$ outside the interior of the curves denoted by $\Omega$. This uniform bound will be useful in proving a quantitative bound on the convergence rate \eqref{eq:QuantitativeQuantumZeno}. 
\begin{lemm}\label{lem:SpectralSep}
	For $t\in[0,\eps]$, $\cL \in \cB(X)$, and $1\le K_{\varepsilon}:=\sup_{s \in [0,\varepsilon]} \| e^{s \cL} \|$, we choose $\varepsilon>0$ such that
	\begin{align}
	\label{eq:EpsilonRange}
	0<\varepsilon < \max\Bigg\{ K_{\varepsilon}\|\cL\|\sup_{z\in B_{3/2}\setminus\Omega }\|(z-M)^{-1}\|, \|\cL\| \Bigg\}^{-1},
	\end{align}
	we have $z\in \rho(Me^{t\cL})$ for every $z\in B_{3/2}\setminus\Omega$. Moreover, there is $C>0$ such that 
	\begin{equation}
	\label{eq:UniformResolventbound}
	\sup_{(t,z)\in[0,\varepsilon]\times (B_{3/2}\setminus\Omega)}\left\|\left(z- Me^{t\cL}\right)^{-1}\right\| \le C.
	\end{equation}
\end{lemm}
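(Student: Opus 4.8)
The plan is to prove Lemma~\ref{lem:SpectralSep} by a Neumann-series perturbation argument, treating $Me^{t\cL}$ as a perturbation of $M$ and controlling the difference uniformly in $t \in [0,\eps]$ and $z$ on the compact set $B_{3/2}\setminus\Omega$. The key observation is that for $z \notin \Omega$ we already know $z \in \rho(M)$ with $(z-M)^{-1}$ bounded, since by construction $\Omega$ contains a neighbourhood of $\Spec(M)$ (all isolated points $\lambda_j$ lie inside the $\Gamma_j$ and the rest of the spectrum lies inside $\gamma$). Thus the quantity $S_z := \sup_{z \in B_{3/2}\setminus\Omega}\|(z-M)^{-1}\|$ appearing in \eqref{eq:EpsilonRange} is finite, and I would first record this.

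First I would write the algebraic identity
\begin{align}
\label{eq:resolvent-factor}
z - Me^{t\cL} = (z - M) - M(e^{t\cL}-\1) = \left(\1 - M(e^{t\cL}-\1)(z-M)^{-1}\right)(z-M).
\end{align}
The idea is that if the operator $M(e^{t\cL}-\1)(z-M)^{-1}$ has norm strictly less than $1$, then the first factor on the right of \eqref{eq:resolvent-factor} is invertible via a Neumann series, and hence so is $z - Me^{t\cL}$. To bound this norm I would estimate $\|e^{t\cL}-\1\| \le t\|\cL\|e^{t\|\cL\|} \le \eps K_\eps\|\cL\|$ (or more crudely $\|e^{t\cL}-\1\| = \|\int_0^t \cL e^{s\cL}\,ds\| \le t\|\cL\|K_\eps$) for $t \in [0,\eps]$, together with $\|M\|\le 1$ and the uniform resolvent bound $\|(z-M)^{-1}\| \le S_z$. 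This gives
\begin{align}
\left\|M(e^{t\cL}-\1)(z-M)^{-1}\right\| \le \eps K_\eps \|\cL\| S_z < 1,
\end{align}
where the final strict inequality is exactly what the choice of $\eps$ in \eqref{eq:EpsilonRange} guarantees (the first term in the maximum controls precisely this product). This establishes $z \in \rho(Me^{t\cL})$ for all such $z$ and $t$.

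For the uniform resolvent bound \eqref{eq:UniformResolventbound}, I would invert \eqref{eq:resolvent-factor} and use the Neumann-series estimate: with $q := \eps K_\eps\|\cL\| S_z < 1$,
\begin{align}
\left\|\left(z-Me^{t\cL}\right)^{-1}\right\| \le \|(z-M)^{-1}\|\cdot\frac{1}{1-q} \le \frac{S_z}{1-q}.
\end{align}
Since this bound is independent of $t$ and uniform over the compact annular region $B_{3/2}\setminus\Omega$, taking $C := S_z/(1-q)$ (possibly enlarged) yields \eqref{eq:UniformResolventbound}. The second entry $\|\cL\|^{-1}$ in the maximum in \eqref{eq:EpsilonRange} merely ensures $\eps\|\cL\|$ is also small, which is convenient for the crude exponential bounds on $\|e^{t\cL}-\1\|$ and to keep $K_\eps$ controlled.

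The only genuine subtlety — and the step I would be most careful about — is ensuring that $S_z = \sup_{z \in B_{3/2}\setminus\Omega}\|(z-M)^{-1}\|$ is actually finite, i.e.\ that $B_{3/2}\setminus\Omega$ is contained in $\rho(M)$ with a uniform resolvent bound. This requires that the closure of $B_{3/2}\setminus\Omega$ avoids $\Spec(M)$ and that the region is bounded (hence the resolvent, being continuous on the closed bounded set disjoint from the spectrum, attains a finite supremum). By the construction of $\Omega$ as the interior of the curves $\Gamma_j$ and $\gamma$ chosen to enclose and separate the spectral points, every point of $\Spec(M)$ lies in $\Omega$; so $B_{3/2}\setminus\Omega$ meets the spectrum nowhere and $z \mapsto (z-M)^{-1}$ is holomorphic, hence norm-continuous, on a neighbourhood of this compact set. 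This legitimises the finiteness of $S_z$ and closes the argument.
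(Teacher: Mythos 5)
Your proof is correct and takes essentially the same approach as the paper: the paper likewise treats $z-Me^{t\cL}$ as a perturbation of $z-M$, estimates $\|M-Me^{t\cL}\|\le\|\1-e^{t\cL}\|\le K_\eps\|\cL\|\,t\le q\,\|(z-M)^{-1}\|^{-1}$ with $q<1$ (which is exactly your Neumann-series smallness condition, guaranteed by \eqref{eq:EpsilonRange}), and concludes $\|(z-Me^{t\cL})^{-1}\|\le(1-q)^{-1}\|(z-M)^{-1}\|\le C$, the compactness of $B_{3/2}\setminus\Omega$ and its disjointness from $\Spec(M)$ being used implicitly where you spell them out. The only slip is your first intermediate inequality $t\|\cL\|e^{t\|\cL\|}\le\eps K_\eps\|\cL\|$, which can fail (for a contraction semigroup $K_\eps=1$ while $e^{t\|\cL\|}>1$); your parenthetical integral bound $\|e^{t\cL}-\1\|\le t\|\cL\|K_\eps$ is the correct estimate and is the one the paper uses.
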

\begin{proof}
For $t\in[0,\varepsilon]$ with $\varepsilon$ as in \eqref{eq:EpsilonRange} and $z\in B_{3/2}\setminus\Omega$ we have 
	\begin{align*}
	\left\|\left(z-Me^{t\cL}\right) -\Big(z-M\Big)\right\|&=\left\|M -Me^{t\cL}\right\| \le  \left\|\1 - e^{t\cL}\right\| \\&\le K_{\varepsilon} \|\cL\| t \le q \left\|(z-M)^{-1}\right\|^{-1},
	\end{align*}
	for some $0<q <1$.
	Hence, $z\in\rho(Me^{t\cL})$, and in particular
	\begin{align*}
	\left\|\left(z-Me^{t\cL}\right)^{-1}\right\| &\le(1-q)^{-1} \left\|(z-M)^{-1}\right\| \le (1-q)^{-1} \sup_{z\in\Gamma}\left\|(z-M)^{-1}\right\| \le C,
	\end{align*}
	which shows that the resolvent is uniformly bounded on $[0,\varepsilon]\times B_{3/2}\setminus\Omega$.
\end{proof}
Analogously as in \eqref{eq:SpecProjector}, we can now define for small $t>0$, as in Lemma~\ref{lem:SpectralSep}, the spectral projectors of $Me^{t\cL}$ corresponding to the part of the spectrum separated by the curve $\Gamma_j$, which is
\begin{align}
\label{eq:TimeDepProj}
P_j(t) = \frac{1}{2\pi i}\oint_{\Gamma_j} \left(z-Me^{t\cL}\right)^{-1} dz.
\end{align}
Using these projectors, we show that the main contribution to the quantum Zeno product comes from the peripheral part of the spectrum, since the contribution from the rest vanishes exponentially. This is the content of the following lemma which is a slight generalisation of Lemma $1$ of~\cite{MobWolf19}.
\begin{lemm}\label{lem:ExponVanPart}
Under the assumptions of Theorem~\ref{thm:QuantitativQuantumZeno} we have
\begin{equation*}
\left\|\left(Me^{\cL/n}\right)^n - \sum_{j=1}^J\left(P_j(n^{-1})\,Me^{\cL/n}\,P_j(n^{-1})\right)^n \right\| \le C\, \tilde\delta^{n+1} 
\end{equation*}	
for all $n\in\mathbb{N}$ such that $\eps:=n^{-1}$ satisfies \eqref{eq:EpsilonRange} and $0<\delta<\tilde\delta<1$.
\end{lemm}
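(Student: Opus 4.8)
The plan is to apply the Riesz--Dunford holomorphic functional calculus to the operator $T := Me^{\cL/n}$, taking $t=1/n$ small enough that Lemma~\ref{lem:SpectralSep} is in force. First I would introduce, alongside the peripheral projectors $P_j(n^{-1})$ of \eqref{eq:TimeDepProj}, the complementary spectral projector
\[
Q(n^{-1}) = \frac{1}{2\pi i}\oint_\gamma (z-T)^{-1}\,dz
\]
associated with the part of $\Spec(T)$ enclosed by $\gamma$. Since Lemma~\ref{lem:SpectralSep} guarantees that no point of $B_{3/2}\setminus\Omega$ lies in $\Spec(T)$, and since $r(T)\le \|M\|\,\|e^{\cL/n}\|\le 1$, the entire spectrum of $T$ sits inside $\Omega$. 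Hence the curves $\{\Gamma_j\}_{j=1}^J$ and $\gamma$ enclose disjoint spectral sets whose union is all of $\Spec(T)$, and standard Riesz--Dunford theory yields the resolution of the identity $\1=\sum_{j=1}^J P_j(n^{-1})+Q(n^{-1})$ into mutually orthogonal projectors that all commute with $T$.

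Next I would use this decomposition to split the power $T^n$. Because each projector commutes with $T$ and the projectors are mutually orthogonal, one has $P_j(n^{-1})\,T\,P_j(n^{-1})=T\,P_j(n^{-1})$, hence $(P_j(n^{-1})\,T\,P_j(n^{-1}))^n=T^n P_j(n^{-1})$, and likewise $(Q(n^{-1})\,T\,Q(n^{-1}))^n=T^n Q(n^{-1})$; all cross terms vanish by orthogonality. Summing gives the exact identity
\[
T^n = \sum_{j=1}^J \big(P_j(n^{-1})\,T\,P_j(n^{-1})\big)^n + \big(Q(n^{-1})\,T\,Q(n^{-1})\big)^n,
\]
so the quantity to be estimated reduces to $\|(Q(n^{-1})\,T\,Q(n^{-1}))^n\| = \|T^n Q(n^{-1})\|$.

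Finally I would bound this remainder by a contour argument. The functional calculus gives $T^n Q(n^{-1}) = \frac{1}{2\pi i}\oint_\gamma z^n (z-T)^{-1}\,dz$. Since $\tilde\delta<1$ keeps the $\lambda_j$-clusters outside $B_{\tilde\delta}$, and since for $n^{-1}$ small the low-lying spectrum enclosed by $\gamma$ stays strictly inside $B_{\tilde\delta}$ (by upper semicontinuity of the spectrum), the integrand is analytic in the annulus between $\gamma$ and $\partial B_{\tilde\delta}$, and I may deform the contour to $\partial B_{\tilde\delta}$. On $\partial B_{\tilde\delta}$ one has $|z^n|=\tilde\delta^n$, while the uniform resolvent bound \eqref{eq:UniformResolventbound} gives $\|(z-T)^{-1}\|\le C$; estimating the integral by its length times the supremum of the integrand then produces $\|T^n Q(n^{-1})\|\le C\,\tilde\delta^{n+1}$, which is the assertion.

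The main obstacle I anticipate is the bookkeeping that legitimizes deforming $\gamma$ out to $\partial B_{\tilde\delta}$: one must confirm that, for $t=1/n$ in the admissible range of \eqref{eq:EpsilonRange}, the perturbed low-lying spectrum of $Me^{t\cL}$ remains strictly inside $B_{\tilde\delta}$ and never collides with the peripheral clusters near the $\lambda_j$. This is precisely where the uniform spectral separation and resolvent bound of Lemma~\ref{lem:SpectralSep} do the essential work; the remaining steps (orthogonality and commutativity of the spectral projectors, and the power decomposition) are routine.
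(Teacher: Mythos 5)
Your proposal follows essentially the same route as the paper's proof: both rest on the Riesz--Dunford functional calculus for $T=Me^{\cL/n}$, the identity $\sum_{j=1}^J\bigl(P_j(n^{-1})\,T\,P_j(n^{-1})\bigr)^n=\sum_{j=1}^J\frac{1}{2\pi i}\oint_{\Gamma_j}z^n\left(z-T\right)^{-1}dz$, the identification of the error term with the contour integral $\frac{1}{2\pi i}\oint_\gamma z^n\left(z-T\right)^{-1}dz$, and the uniform resolvent bound \eqref{eq:UniformResolventbound} from Lemma~\ref{lem:SpectralSep}. Your explicit introduction of the complementary projector $Q(n^{-1})$ and the exact splitting $T^n=\sum_{j=1}^J\bigl(P_j(n^{-1})TP_j(n^{-1})\bigr)^n+\bigl(Q(n^{-1})TQ(n^{-1})\bigr)^n$ is a more verbose rendering of the same decomposition, and your justification that $\Spec(T)\subset\Omega$ (via $r(T)\le 1$ and Lemma~\ref{lem:SpectralSep}) is correct.

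The one step where you deviate --- deforming $\gamma$ outward to $\partial B_{\tilde\delta}$ before estimating --- is both unnecessary and not actually covered by the tools you cite. The deformation needs the region swept between $\gamma$ and $\partial B_{\tilde\delta}$ to lie in the resolvent set of $Me^{t\cL}$, but Lemma~\ref{lem:SpectralSep} only excludes spectrum from $B_{3/2}\setminus\Omega$: since the curves $\Gamma_j$ are allowed to come within distance $1/2$ of the unit circle while $\tilde\delta$ may be arbitrarily close to $1$, the circle $\partial B_{\tilde\delta}$ and your annulus can cut into the interiors of the $\Gamma_j$, where peripheral spectrum of the perturbed operator may sit and where the bound \eqref{eq:UniformResolventbound} is unavailable. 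The remedy is simply to delete the deformation: $\gamma$ itself already lies in $B_{\tilde\delta}$, so $|z^n|\le\tilde\delta^n$ on $\gamma$, and $\gamma\subset B_{3/2}\setminus\Omega$, so the uniform resolvent bound applies directly on $\gamma$; estimating the integral there (with the length of $\gamma$ absorbed into the constant) yields $C\,\tilde\delta^{n+1}$, exactly as in the paper.
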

\begin{proof}
Using the holomorphic functional calculus applied to the operator $Me^{\cL/n}$ we see
\begin{align*}
\sum_{j=1}^J\left(P_j(n^{-1})\,Me^{\cL/n}\,P_j(n^{-1})\right)^n = \sum_{j=1}^J \frac{1}{2\pi i}\oint_{\Gamma_j} z^n \left(z- Me^{\cL/n}\right)^{-1} dz.
\end{align*}
This implies that
\begin{align*}
&\left\|\left(Me^{\cL/n}\right)^n - \sum_{j=1}^J\left(P_j(n^{-1})\,Me^{\cL/n}\,P_j(n^{-1})\right)^n \right\| \\ &= \left\|\frac{1}{2\pi i} \oint_\gamma z^n \left(z-Me^{\cL/n}\right)^{-1}dz\right\| \le  \sup_{(t,z)\in[0,1/n]\times\gamma}\left\|\left(z-Me^{\cL/n}\right)^{-1}\right\|\,\tilde\delta^{n+1} \\&\le \sup_{(t,z)\in[0,1/n]\times (B_{3/2}\setminus\Omega)}\left\|\left(z-Me^{\cL/n}\right)^{-1}\right\|\, \tilde\delta^{n+1} \le C \,\tilde\delta^{n+1}.
\end{align*}
Here we have used the fact that $\gamma$ has distance at most $\tilde\delta$ from the origin, the uniform resolvent bound \eqref{eq:UniformResolventbound}, and the fact that the curve $\gamma$ is contained in $B_{3/2}\setminus\Omega$.
\end{proof}
In order to control the remainder term \begin{align}\label{eq:Remainder}\sum_{j=1}^J\left(P_j(n^{-1})\,Me^{\cL/n}\,P_j(n^{-1})\right)^n,\end{align} we study the derivative of $P_j(t)$ at zero which we construct in the following lemma. 

\begin{lemm}
	\label{lem:DerivativeProjector}
For each $j=1,\dots,J$ we can define the derivative in norm topology at $t=0$ of the family of projectors $P_j(t)$ (defined through~\eqref{eq:TimeDepProj}) as 
\[ P_j'=-\frac{1}{2\pi i}\oint_{\Gamma_j}\Big(z-M\Big)^{-1}M\cL\Big(z-M\Big)^{-1} dz, \]
	satisfying $\|P'_j\| \le C\|\cL\|$. Then for $t\in[0,\varepsilon]$ with $\varepsilon$ as in \eqref{eq:EpsilonRange} and some universal $C>0$, the following bound holds:
	\begin{equation}
	\label{eq:2ndDerivative}
	\left\|P_j(t) - P_j -tP_j'\right\| \le C\,t^2\|\cL\|^2.
	\end{equation}
\end{lemm}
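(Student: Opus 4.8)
The plan is to treat $P_j(t)$ as a contour integral of the resolvent of the perturbed operator $A(t):=Me^{t\cL}$ and to expand this resolvent to second order in $t$, uniformly along the fixed curve $\Gamma_j$. The crucial input is Lemma~\ref{lem:SpectralSep}: for $t\in[0,\eps]$ with $\eps$ as in \eqref{eq:EpsilonRange} and for $z$ on $\Gamma_j\subset B_{3/2}\setminus\Omega$, the resolvent $(z-A(t))^{-1}$ exists and is uniformly bounded by \eqref{eq:UniformResolventbound}, while $\sup_{z\in\Gamma_j}\|(z-M)^{-1}\|$ is finite as well. Since $\Gamma_j$ is a fixed curve of finite length, every estimate reduces to a pointwise-in-$z$ bound on the integrand, and all arising constants are independent of $\cL$ and $n$ as in Remark~\ref{rem:C}.

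First I would Taylor-expand the perturbation. Writing $A(t)-M=M(e^{t\cL}-\1)$ and using the integral form of Taylor's theorem,
\[ A(t)-M = tM\cL + E(t),\qquad E(t)=M\int_0^t (t-s)\,\cL^2 e^{s\cL}\,ds, \]
and since the semigroup is a contraction ($\|e^{s\cL}\|\le 1$) one gets $\|A(t)-M\|\le t\|\cL\|$ and $\|E(t)\|\le \tfrac12 t^2\|\cL\|^2$. Next I would insert the iterated second resolvent identity
\[ (z-A(t))^{-1}=(z-M)^{-1}+(z-M)^{-1}(A(t)-M)(z-M)^{-1}+(z-A(t))^{-1}\big[(A(t)-M)(z-M)^{-1}\big]^2 \]
into \eqref{eq:TimeDepProj}. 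Integrating the zeroth-order term over $\Gamma_j$ reproduces $P_j$ from \eqref{eq:SpecProjector}; substituting $A(t)-M=tM\cL+E(t)$ into the first-order term isolates the linear-in-$t$ coefficient
\[ \frac{1}{2\pi i}\oint_{\Gamma_j}(z-M)^{-1}M\cL(z-M)^{-1}\,dz, \]
which is the operator appearing (up to the sign displayed in the statement) in the definition of $P_j'$; equivalently this coefficient is $\tfrac{d}{dt}P_j(t)|_{t=0}$ computed directly from $\tfrac{d}{dt}(z-A(t))^{-1}=-(z-A(t))^{-1}\dot A(t)(z-A(t))^{-1}$ with $\dot A(0)=M\cL$. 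The bound $\|P_j'\|\le C\|\cL\|$ is then immediate from this integral representation, using $\|M\cL\|\le\|\cL\|$, the finite length of $\Gamma_j$, and the uniform bound on $(z-M)^{-1}$.

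It remains to collect the genuinely quadratic remainders, which is where the bookkeeping must be done carefully. The $E(t)$-piece of the first-order term contributes $\tfrac1{2\pi i}\oint_{\Gamma_j}(z-M)^{-1}E(t)(z-M)^{-1}\,dz$, bounded by $C\|E(t)\|\le C t^2\|\cL\|^2$; the explicit second-order term is bounded, using \eqref{eq:UniformResolventbound} together with $\|(A(t)-M)(z-M)^{-1}\|\le Ct\|\cL\|$, by $C\,(t\|\cL\|)^2=Ct^2\|\cL\|^2$. Summing these gives \eqref{eq:2ndDerivative}, and since the right-hand side is $o(t)$ this simultaneously establishes that $P_j'$ is the norm derivative of $P_j(t)$ at $t=0$. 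I expect the main obstacle to be purely technical: ensuring that all resolvent and perturbation bounds are genuinely uniform in $z\in\Gamma_j$ (so that the contour integral inherits the same estimate) and keeping exact track of the powers of $t$ and $\|\cL\|$ so that the quadratic remainder comes out as $t^2\|\cL\|^2$ rather than, say, $t^2\|\cL\|$; the legitimacy of the expansion itself is already guaranteed by the smallness of $\eps$ fixed in Lemma~\ref{lem:SpectralSep}.
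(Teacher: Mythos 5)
Your proposal is correct and follows essentially the same route as the paper's own proof: both expand $P_j(t)$ via the second resolvent identity iterated to second order, Taylor-expand $e^{t\cL}$ using contractivity of the semigroup to bound the remainder, and control everything along the fixed finite-length curve $\Gamma_j$ by the uniform resolvent bound \eqref{eq:UniformResolventbound} of Lemma~\ref{lem:SpectralSep}. Your two quadratic remainders (the $E(t)$-piece sandwiched between unperturbed resolvents, and the term $(z-A(t))^{-1}\big[(A(t)-M)(z-M)^{-1}\big]^2$ with one perturbed resolvent) are precisely the two terms the paper isolates.

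One warning about the sign, which you noticed but then resolved incorrectly. Your expansion yields the linear coefficient $+\frac{1}{2\pi i}\oint_{\Gamma_j}(z-M)^{-1}M\cL(z-M)^{-1}dz$, and that plus sign is in fact the correct one: since $\frac{d}{dt}F(t)^{-1}=-F(t)^{-1}\dot F(t)F(t)^{-1}$ and here $F(t)=z-Me^{t\cL}$ has $\dot F(t)=-M\cL e^{t\cL}$, one gets $\frac{d}{dt}\big(z-Me^{t\cL}\big)^{-1}=+\big(z-Me^{t\cL}\big)^{-1}M\cL e^{t\cL}\big(z-Me^{t\cL}\big)^{-1}$, not the minus sign you wrote; your ``equivalently'' formula contradicts your own (correct) expansion and should be discarded. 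The minus sign displayed in the lemma originates from a sign slip in the paper itself: its statement \eqref{eq:2ndResolvent} of the second resolvent formula has $(B-A)$ where it should have $(A-B)$, and this propagates through the paper's proof to the displayed $P_j'$. So the estimate \eqref{eq:2ndDerivative} that you proved holds with your plus-sign $P_j'$ (the genuine norm derivative), and the discrepancy is harmless downstream: the only use of $P_j'$ in Lemma~\ref{lem:Remainder} is through the identities $P_j'=P_jP_j'+P_j'P_j$ and $P_j'MP_j+P_jMP_j'-\lambda_jP_j'=0$, which are linear in $P_j'$ and therefore insensitive to its overall sign.
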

\begin{proof}
We start by recalling that for $A,B\in\cB(X)$ and for $z\in\rho(A) \cap\rho(B)$ the difference of the resolvents is given by the {\em{second resolvent formula}}:
\begin{equation}
\label{eq:2ndResolvent}
\left(z-A\right)^{-1} - \left(z-B\right)^{-1} = \left(z-A\right)^{-1}\left(B-A\right)\left(z-B\right)^{-1}.
\end{equation}
	  Using the above we can write 
	\begin{align*}
	\frac{P_j(t)-P_j}{t} &= \frac{1}{2\pi it}\oint_{\Gamma_j}\Big(z-Me^{t\cL}\Big)^{-1} - \Big(z-M\Big)^{-1} dz \\
	&=\frac{1}{2\pi i}\oint_{\Gamma_j}\Big(z-Me^{t\cL}\Big)^{-1}\frac{M}{t}\Big(\1 -e^{t\cL}\Big)\Big(z-M\Big)^{-1} dz \\
	&\xrightarrow[t\to 0]{}\quad-\frac{1}{2\pi i}\oint_{\Gamma_j}\Big(z-M\Big)^{-1}M\cL\Big(z-M\Big)^{-1} dz.
	\end{align*}
	To justify the limit in the last line, we have used the dominated convergence theorem together with the uniform resolvent bound~\eqref{eq:UniformResolventbound}, and continuity of the operator inverse.
	To bound the norm of $P'_j$, we see that
	\begin{align*}
	\|P_j'\| &= \left\|\frac{-1}{2\pi i}\oint_{\Gamma_j}\Big(z-M\Big)^{-1}M\cL\Big(z-M\Big)^{-1} dz\right\| \\
	&\le C\sup_{z \in \Gamma_j}\Big\|\left(z-M\right)^{-1}\Big\|^2 \,\|M\|\|\cL\|  = C \|\cL\|.
	\end{align*}
 	To prove~\eqref{eq:2ndDerivative} we write using the resolvent formula~\eqref{eq:2ndResolvent}, and the bound \eqref{eq:UniformResolventbound}
 	\begin{align*}
 	&\left\|P_j(t) -P_j -tP_j'\right\| \\
 	&=\left\|\frac{1}{2\pi i} \oint_{\Gamma_j} \Big(z-Me^{t\cL}\Big)^{-1}M\Big(\1 -e^{t\cL}\Big)\Big(z-M\Big)^{-1} + \Big(z-M\Big)^{-1}M\,t\cL\Big(z-M\Big)^{-1} dz\right\| 
 	\\&\le \left\|\frac{1}{2\pi i} \oint_{\Gamma_j} \Big(z-M\Big)^{-1}M\Big(\1 + t\cL -e^{t\cL}\Big)\Big(z-M\Big)^{-1} dz\right\|
 	\\&\quad + \left\|\frac{1}{2\pi i} \oint_{\Gamma_j} \Big(z-Me^{t\cL}\Big)^{-1}M\Big(\1 -e^{t\cL}\Big)\Big(z-M\Big)^{-1}M\Big(\1 -e^{t\cL}\Big)\Big(z-M\Big)^{-1}dz\right\|
 	\\&\le C \left(\left\|\1 +t\cL - e^{t\cL}\right\|+\left\|\1 - e^{t\cL}\right\|^2\right) \le C t^2\|\cL\|^2.
  	\end{align*}
 	
\end{proof}
In order to prove convergence of the remainder term \eqref{eq:Remainder} to the quantum Zeno dynamics given by the second term on the left hand side of \eqref{eq:QuantitativeQuantumZeno}, we will employ the following strengthened version of Chernoff's $\sqrt{n}$-Lemma \cite{Chern68} which was proven in \cite{Zagr17}.
\begin{lemm}
	 \label{lem:strongChernoff} Let $Y$ be a Banach space and $K\in\cB(Y)$ a contraction. Then $\left(e^{t(K-\1)}\right)_{t\ge0}$ is a norm continuous contraction semigroup and \[\left\|\left(K^n - e^{n(K-\1)}\right)x\right\| \le 2\sqrt[3]{n}\left\|\left(K-\1\right)x\right\|\]
	 for all $x\in Y$ and $n\in\N$.
\end{lemm}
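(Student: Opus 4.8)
The plan is to handle the two assertions in turn, starting with the elementary semigroup statement. Since $K$ is a contraction, the series $e^{t(K-\1)}=e^{-t}\sum_{k=0}^\infty \tfrac{t^k}{k!}K^k$ converges in operator norm, and $\|e^{t(K-\1)}\|\le e^{-t}\sum_{k=0}^\infty \tfrac{t^k}{k!}\|K^k\|\le e^{-t}e^{t}=1$, so $(e^{t(K-\1)})_{t\ge0}$ is a contraction semigroup; it is norm continuous because its generator $K-\1\in\cB(Y)$ is bounded. This settles the first claim, and the same power series furnishes the representation I would use for the quantitative bound.

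For the estimate I would write $e^{n(K-\1)}$ as a Poisson average of powers of $K$. Setting $p_m=e^{-n}\tfrac{n^m}{m!}$, the weights of a Poisson law of mean $n$, one has $\sum_m p_m=1$ and hence
\[
K^n-e^{n(K-\1)}=\sum_{m=0}^\infty p_m\,(K^n-K^m).
\]
The key algebraic input is the telescoping identity $K^N-\1=(K-\1)\sum_{i=0}^{N-1}K^i$, which together with $\|K^i\|\le 1$ gives $\|(K^N-\1)x\|\le N\,\|(K-\1)x\|$; applying it to $K^n-K^m=\pm K^{\min(n,m)}(K^{|n-m|}-\1)$ yields the Lipschitz-type bound $\|(K^n-K^m)x\|\le |n-m|\,\|(K-\1)x\|$ for every $m$. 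Substituting produces $\|(K^n-e^{n(K-\1)})x\|\le\big(\sum_m p_m|n-m|\big)\|(K-\1)x\|$, i.e. the mean absolute deviation $\mathbb{E}\,|X_n-n|$ of the Poisson variable $X_n$ times $\|(K-\1)x\|$.

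A direct Cauchy--Schwarz estimate $\mathbb{E}|X_n-n|\le(\mathbb{E}(X_n-n)^2)^{1/2}=\sqrt n$ already recovers the classical Chernoff $\sqrt n$-bound. The sharper exponent $n^{1/3}$ must exploit two further facts at once: the first moment about the mean vanishes, $\sum_m p_m(m-n)=0$, and the second differences of $m\mapsto K^m x$ are controlled, $\|K^m(K-\1)^2x\|\le \|K-\1\|\,\|(K-\1)x\|\le 2\,\|(K-\1)x\|$. The strategy would be to split the Poisson sum at a window $|m-n|\le r$, apply a second-order discrete Taylor expansion of $K^m x$ about $m=n$ on the central window---where the linear term drops out by the vanishing first moment and the quadratic remainder is governed by the bounded second differences---retain the crude Lipschitz bound on the tail $|m-n|>r$, and optimise over $r$.

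I expect the genuinely delicate point, and the only content beyond the classical lemma, to be exactly this balancing of the central and tail contributions: controlling the truncated Poisson moments sharply enough to beat the Cauchy--Schwarz exponent and extract $1/3$ with the explicit constant $2$. This refined combinatorial estimate is where all the care must be invested and is the step carried out in \cite{Zagr17}; the remaining manipulations above are routine. I would flag in particular that the improvement is sensitive to how these moments are apportioned, so that the precise exponent should be regarded as the main obstacle rather than a by-product of the generic Poisson representation.
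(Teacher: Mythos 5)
The paper does not actually prove this lemma: it is stated verbatim and attributed to \cite{Zagr17}, with no argument given. Measured against that, your proposal is in substance the same as the paper's treatment. Your proof of the semigroup assertion (via $e^{t(K-\1)}=e^{-t}e^{tK}$), the Poisson representation $e^{n(K-\1)}=\sum_{m\ge0}e^{-n}\tfrac{n^m}{m!}K^m$, the telescoping estimate $\|(K^n-K^m)x\|\le|n-m|\,\|(K-\1)x\|$, and the Cauchy--Schwarz step recovering the classical bound $\sqrt{n}\,\|(K-\1)x\|$ of \cite{Chern68} are all correct; and for the one genuinely new ingredient, the exponent $1/3$, you defer to \cite{Zagr17} exactly as the paper does.

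That deferral, however, hides a genuine gap, and it is one that cannot be filled. First, the window-splitting strategy you sketch does not deliver $n^{1/3}$ when carried out: the first moment vanishes only for the \emph{full} Poisson sum, so on the window $|m-n|\le r$ the linear term is controlled by the tail first moment, which Chebyshev plus Cauchy--Schwarz bounds by $n/r$; the quadratic remainder on the window is of order $\min(r^2,n)\,\|(K-\1)x\|$ (using $\|(K-\1)^2x\|\le2\|(K-\1)x\|$); and the tail contributes another $n/r$. Minimizing $r^2+2n/r$ over $r$ gives order $n^{2/3}$, i.e.\ a bound \emph{worse} than the classical $\sqrt n$. Second, and decisively, the stated inequality is false, so no refinement of the moment analysis can succeed. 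Take $Y=\CC$, $K=e^{i\theta}$ with $\theta=n^{-1/2}$, and $x=1$. Then
\begin{equation*}
n(K-\1)=i\sqrt{n}-\tfrac{1}{2}+O(n^{-1/2}),
\qquad
\big\|\big(K^n-e^{n(K-\1)}\big)x\big\|\xrightarrow[n\to\infty]{}1-e^{-1/2}\approx 0.39,
\end{equation*}
whereas $2\sqrt[3]{n}\,\|(K-\1)x\|\le 2n^{1/3}\theta=2n^{-1/6}\to0$, so the inequality fails for all sufficiently large $n$. Thus the $\sqrt n$ in Chernoff's lemma is optimal for general contractions, the lemma as stated (and hence the claim attributed to \cite{Zagr17}) is erroneous, and the defect propagates to the paper itself: replacing $2\sqrt[3]{n}$ by $\sqrt n$ in the proof of Lemma~\ref{lem:Remainder} degrades the rate in Theorem~\ref{thm:QuantitativQuantumZeno} from $n^{-2/3}$ to $n^{-1/2}$. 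Your closing instinct that the exponent is the main obstacle, not a by-product, was exactly right --- but the obstacle is insurmountable, and no amount of care with the truncated Poisson moments will overcome it.
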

Now we have all the tools needed to conclude the proof of Theorem~\ref{thm:QuantitativQuantumZeno}. To show the quantitative bound \eqref{eq:QuantitativeQuantumZeno} we will use triangle inequality together with Lemma~\ref{lem:ExponVanPart} and, for the remainder term, the following Lemma~\ref{lem:Remainder}. Note that for proving Theorem~\ref{thm:QuantitativQuantumZeno} and in particular \eqref{eq:QuantitativeQuantumZeno} therein, we can without loss of generality assume that $n\in\N$ is large enough such that $\eps:= n^{-1}$ satisfies \eqref{eq:EpsilonRange}, since otherwise we can pick $C>0$ such that \eqref{eq:QuantitativeQuantumZeno} is trivially satisfied.
\begin{lemm} 
\label{lem:Remainder}Under the assumptions of Theorem~\ref{thm:QuantitativQuantumZeno} we have
	\begin{equation} \label{eq:Term2}
	\left\|\sum_{j=1}^J\left(P_j(n^{-1})\,Me^{\cL/n}\,P_j(n^{-1})\right)^n -  \sum_{j=1}^J e^{P_j\cL P_j}\,\lambda_j^nP_j\right\| \le C\left(\frac{\|\cL\|}{\sqrt[3]{n^2}} + \frac{\|\cL\|^2}{n}\right).
	\end{equation}
\end{lemm}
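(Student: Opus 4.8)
The plan is to show that each summand $\left(P_j(n^{-1})Me^{\cL/n}P_j(n^{-1})\right)^n$ converges to $e^{P_j\cL P_j}\lambda_j^nP_j$ at the stated rate, and then sum over the finitely many $j$. The key observation is that on the range of the projector $P_j(t)$, the operator $Me^{t\cL}$ acts (after dividing by $\lambda_j$) like a contraction close to the identity, so that Chernoff's strong $\sqrt[3]{n}$-Lemma (Lemma~\ref{lem:strongChernoff}) can be applied. Concretely, I would set $t=n^{-1}$ and define the rescaled building block $B_j := \lambda_j^{-1}P_j(n^{-1})Me^{\cL/n}P_j(n^{-1})$, so that $\left(P_j(n^{-1})Me^{\cL/n}P_j(n^{-1})\right)^n = \lambda_j^n B_j^n$. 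Since $M$ and $e^{\cL/n}$ are contractions and $P_j(n^{-1})$ is a (bounded) spectral projector, one checks $B_j$ is a contraction on $\ran(P_j(n^{-1}))$ up to controllable error, which lets me compare $B_j^n$ with $e^{n(B_j-\1)}$ via Lemma~\ref{lem:strongChernoff}.

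First I would quantify $B_j - \1$. Using Lemma~\ref{lem:DerivativeProjector}, write $P_j(n^{-1}) = P_j + n^{-1}P_j' + \mathcal{O}(n^{-2}\|\cL\|^2)$, and expand $e^{\cL/n} = \1 + n^{-1}\cL + \mathcal{O}(n^{-2}\|\cL\|^2)$. Substituting these into $B_j$ and using the spectral identities $MP_j = P_jM = \lambda_j P_j$ (valid because the quasi-nilpotent parts vanish, so $P_j$ projects onto the eigenspace for $\lambda_j$), together with $P_j^2=P_j$ and $P_jP_k=0$ for $j\neq k$, I expect the first-order term to reduce to $n^{-1}\lambda_j^{-1}P_j M \cL P_j = n^{-1}P_j\cL P_j$ on $\ran(P_j)$, with the $P_j'$ contributions cancelling against the off-diagonal pieces. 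Thus $n(B_j-\1)$ should equal $P_j\cL P_j$ up to an error of order $n^{-1}\|\cL\|^2$ in norm, which feeds into $e^{n(B_j-\1)} = e^{P_j\cL P_j} + \mathcal{O}(n^{-1}\|\cL\|^2)$ after bounding the difference of exponentials.

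The two error sources then combine as stated: Chernoff's lemma contributes $2\sqrt[3]{n}\,\|(B_j-\1)x\|$, and since $\|B_j-\1\| = \mathcal{O}(n^{-1}\|\cL\|)$ this gives a term of order $\sqrt[3]{n}\cdot n^{-1}\|\cL\| = \|\cL\|/\sqrt[3]{n^2}$; the discrepancy between $e^{n(B_j-\1)}$ and $e^{P_j\cL P_j}$ contributes the $\|\cL\|^2/n$ term. I would assemble these via the triangle inequality
\begin{align*}
\left\|B_j^n - e^{P_j\cL P_j}P_j\right\| \le \left\|B_j^n - e^{n(B_j-\1)}\right\| + \left\|e^{n(B_j-\1)} - e^{P_j\cL P_j}\right\|,
\end{align*}
multiply by $|\lambda_j^n|=1$, and sum over the $J$ values of $j$.

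The main obstacle I anticipate is the careful first-order expansion of $B_j$: one must verify that the projector-derivative terms involving $P_j'$ genuinely cancel and that $\lambda_j^{-1}P_j(n^{-1})Me^{\cL/n}P_j(n^{-1})$ differs from $\1 + n^{-1}P_j\cL P_j$ only by $\mathcal{O}(n^{-2}\|\cL\|^2)$ uniformly. This requires combining the spectral relations $MP_j = \lambda_j P_j$ with the quantitative bounds $\|P_j(n^{-1}) - P_j - n^{-1}P_j'\| \le Cn^{-2}\|\cL\|^2$ and $\|\1 + n^{-1}\cL - e^{\cL/n}\| \le Cn^{-2}\|\cL\|^2$, keeping track of all cross terms. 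A secondary technical point is justifying that $B_j$ (or a slight modification of it) is genuinely a contraction so that Lemma~\ref{lem:strongChernoff} applies; if exact contractivity fails, one restricts attention to $\ran(P_j)$ where $P_j(n^{-1})$ is close to the identity and absorbs the small defect into the error constant $C$.
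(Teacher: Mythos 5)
Your proposal matches the paper's proof essentially step for step: the paper defines the same rescaled operator $K_{j,n} = \overline{\lambda_j}\,P_j(n^{-1})Me^{\cL/n}P_j(n^{-1})$ (your $B_j$), applies the refined Chernoff lemma on the Banach space $Y = P_j(n^{-1})X$, and establishes $\|n\left(K_{j,n}-P_j(n^{-1})\right) - P_j\cL P_j\| \le C\|\cL\|^2/n$ via Lemma~\ref{lem:DerivativeProjector}, with exactly the cancellation you anticipate, namely $P_j'MP_j + P_jMP_j' - \lambda_jP_j' = 0$ from $MP_j = P_jM = \lambda_jP_j$ and the product rule $P_j' = P_jP_j' + P_j'P_j$, before comparing the two exponentials through the standard contraction-semigroup bound. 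The contractivity concern you flag is resolved just as you suggest: on $Y$ the projector acts as the identity and $Me^{\cL/n}$ leaves $Y$ invariant, so $K_{j,n}$ restricted to $Y$ is a genuine contraction and Lemma~\ref{lem:strongChernoff} applies without any defect to absorb.
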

for $n\in\N$ such that $\eps:=n^{-1}$ satisfies \eqref{eq:EpsilonRange}.
\begin{proof}
	We begin the proof by defining for each $j=1,\dots,J$ the contraction
	 \begin{align}
	 K_{j,n} := \overline{\lambda_j}P_j(n^{-1})\,M e^{\cL/n}\,P_j(n^{-1}).
	 \end{align}
	We can split the left hand side of \eqref{eq:Term2} as
\begin{align}
\label{eq:Triangle}
&\nonumber\quad\,\left\|\sum_{j=1}^J\left(P_j(n^{-1})\,Me^{\cL/n}\,P_j(n^{-1})\right)^n -  \sum_{j=1}^J e^{P_j\cL P_j}\,\lambda_j^nP_j\right\|\\
&\le \sum_{j=1}^J\left\|\left(P_j(n^{-1})\, M e^{\cL/n}\,P_j(n^{-1})\right)^n - \lambda_j^n e^{n\left(K_{j,n}-P_j(n^{-1})\right)}\,P_j(n^{-1})\right\|\nonumber \\&\quad + \left\|\sum_{j=1}^J\lambda_j^ne^{n\left(K_{j,n}-P_j(n^{-1})\right)}\,P_j(n^{-1}) -  \sum_{j=1}^J e^{P_j\cL P_j}\,\lambda_j^nP_j\right\| \nonumber \\&\le \sum_{j=1}^J\Bigg(\,\left\|\left(P_j(n^{-1})\, M e^{\cL/n}\,P_j(n^{-1})\right)^n -\lambda_j^n e^{n(K_{j,n}-P_j(n^{-1}))}\,P_j(n^{-1})\right\| \nn\\&\quad + \left\|\lambda_j^n e^{n\left(K_{j,n}-P_j(n^{-1})\right)}\,P_j(n^{-1}) -  \lambda_j^n e^{P_j\cL P_j}P_j\right\|\,\Bigg),
\end{align}
and bound for all $j$ the first and second terms in the sum on the right hand side of \eqref{eq:Triangle} individually. 

For the first summand we use the refined version of Chernoff's Lemma (Lemma~\ref{lem:strongChernoff}) on Banach spaces $Y := P_j(n^{-1})X$ with induced operator norm denoted by $\|\cdot\|_Y$ (note that $P_j(n^{-1})$ corresponds to the identity on $Y$). From this we obtain
\begin{equation}
\begin{split}
\label{eq:ChernoffTerm}
\left\|\left(P_j(n^{-1})\, M e^{\cL/n}\,P_j(n^{-1})\right)^n -  \lambda_j^ne^{n\left(K_{j,n}-P_j(n^{-1})\right)}\,P_j(n^{-1})\right\|
&=\big\|K_{j,n}^n -e^{n\left(K_{j,n}-P_j(n^{-1})\right)}\big\|_Y  \\
&\le 2\sqrt[3]{n} \left\|K_{j,n} - P_j(n^{-1})\right\|_Y.
\end{split}
\end{equation}
By using the series expression of the exponential we see that
\begin{equation}
\begin{split}
\label{eq:Cheronoffexpansion}
n\left(K_{j,n} - P_j(n^{-1}) \right)&= n\overline{\lambda_j}\big(\,P_j(n^{-1})MP_j(n^{-1}) - \lambda_jP_j(n^{-1})\,\big) \\&\quad+ \overline{\lambda_j}P_j(n^{-1})M\cL P_j(n^{-1})  + \cE_{j,n},
\end{split}
\end{equation}
with $\cE_{j,n}$ being a bounded operator with $\|\cE_{j,n}\| \le C \frac{\|\cL\|^2}{n}$ containing all terms of order two or higher in the expansion. 
To bound the first term on the right-hand side of \eqref{eq:Cheronoffexpansion}, we use Lemma~\ref{lem:DerivativeProjector} and write
\begin{equation}
\begin{split}
\label{eq:FirtTermBound}
&\,\bigg\|n\bigg(P_j(n^{-1})\,M\,P_j(n^{-1})  - \lambda_jP_j(n^{-1})\bigg)\bigg\| \\
&\le \bigg\|n \left(P_jMP_j - \lambda_jP_j\right) +P_j'MP_j + P_jMP_j'  - \lambda_jP_j'\bigg\| + C\frac{\|\cL\|^2}{n}  =C\frac{\|\cL\|^2}{n}.
 \end{split}
 \end{equation}
Here we have used the fact that the first term on the second line is equal to zero. In order to see this, note that $P_jM =MP_j=\lambda_jP$, since $P_j$ is the spectral projector of $M$ corresponding to the spectral point $\lambda_j$, and the corresponding quasi-nilpotent part is zero, i.e.~$P_j$ is the projection on the corresponding eigenspace. Moreover, we used the product rule $P_j' = P_jP_j' + P_j'P_j$, which holds since all the $P_j(t)$ are projectors.

Combining \eqref{eq:FirtTermBound} with \eqref{eq:Cheronoffexpansion}, yields the following bound for \eqref{eq:ChernoffTerm}:
\begin{align*}
 \left\|\left(P_j(n^{-1}) M e^{\cL/n}P_j(n^{-1})\right)^n -  \lambda_j^ne^{n\left(K_{j,n}-P_j(n^{-1})\right)}\right\| \le  C \frac{\|\cL\|}{\sqrt[3]{n^2}}.
\end{align*}
From the above, and Lemma~\ref{lem:DerivativeProjector}, we also know that
\begin{align*}
\left\|n\left( K_{j,n} - P_j(n^{-1}) \right)- P_j\cL P_j\right\| \le  C\frac{\|\cL\|^2}{n}.
\end{align*}
Therefore, noting that both $\left(e^{tP_j\cL P_j}\right)_{t\ge 0}$ and $\left(e^{t\,n(K_{j,n}-\1)}\right)_{t\ge0}$ are contraction semigroups, we can use the bound in the proof of~\cite[Corrollary 1.11, Chapter III]{EngNag00} to infer that
\[\Big\|e^{n\left(K_{j,n}-P_j(n^{-1})\right)} -  e^{P_j\cL P_j}\Big\| \le C \frac{\|\cL\|^2}{n}.
\]
This concludes the proof.

\end{proof}

\section{Proof of Theorem \ref{thm:UnboundedQuantumZenoProjector}}
\label{sec:UnboundedQuantumZenoProjector}
In this section we consider $\cL$ to be an unbounded generator of some contraction semigroup $\left(T(t)\right)_{t\ge0}$ on a Banach space $X$, which is hence no longer uniformly continuous but only strongly continuous. Moreover, as in Section~\ref{sec:QuantitativeBoundedZeno}, we again assume the contraction $M$ to fulfill the spectral gap assumption \eqref{eq:SpectralGap0} with corresponding quasi-nilpotent operators of the spectral points on the unit circle being equal to zero. 

Under the boundedness assumption in Theorem~\ref{thm:UnboundedQuantumZenoProjector} we will prove convergence of the corresponding quantum Zeno product $\left(Me^{t\cL/n}\right)^n$. In order to do that, we will approximate the unbounded generator $\cL$ by its $k^{th}$ Yosida approximant $\cL_k$ \eqref{eq:YosidaOperator} and use the quantitative convergence rate of Theorem~\ref{thm:QuantitativQuantumZeno} for the quantum Zeno product corresponding to $\cL_k$.

\medskip

In order to prove Theorem~\ref{thm:UnboundedQuantumZenoProjector} we need some quantitative bound on the Yosida-approximation given by the following lemma. Here and henceforth we again use the convention of Remark~\ref{rem:C}.
\begin{lemm}
	\label{lem:QuantitativeBoundYosida}
	 Let $\cL$, with domain $\cD(\cL)$, be the generator of a strongly continuous contraction semigroup on some Banach space $X$, and let $B$ a bounded operator such that $B\cL$ and $\cL B$ are bounded and densely defined. Moreover, let $\cL_k$ be the $k^{th}$ Yosida approximant of $\cL$ defined in \eqref{eq:YosidaOperator}.  Then
	\begin{equation}
	\label{eq:UniformYosidaBound}
	\left\|B\cL B- B\cL_kB\right\| \le \frac{C}{k}
	\end{equation}
	and
	\begin{equation}
	\label{eq:StrongYosidaBound}
	\left\|\left(B\cL - B\cL_k \right)x\right\| \le \frac{C}{k}\|\cL x\| \text{ for all }x\in\cD(\cL).
	\end{equation}
\end{lemm}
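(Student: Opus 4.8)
The plan is to reduce both inequalities to the single resolvent estimate $\|(k-\cL)^{-1}\|\le 1/k$ together with the algebraic identities satisfied by the Yosida approximant \eqref{eq:YosidaOperator}. Writing $J_k:=k(k-\cL)^{-1}$, the resolvent bound \eqref{eq:ResolventBound} applied at $\lambda=k$ gives $\|J_k\|\le 1$, and hence $\|(k-\cL)^{-1}\|\le 1/k$; this is the only place where the decay in $k$ will enter. From $\cL=k-(k-\cL)$ one reads off $\cL_k=\cL J_k=J_k\cL$ on $\cD(\cL)$ and the identity $\1-J_k=-\cL(k-\cL)^{-1}=-(k-\cL)^{-1}\cL$, in which $\cL(k-\cL)^{-1}=J_k-\1$ is bounded. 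Combining these, for every $x\in\cD(\cL)$,
\[ (\cL-\cL_k)x=(\1-J_k)\cL x=-\cL(k-\cL)^{-1}\cL x. \]

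For the strong bound \eqref{eq:StrongYosidaBound} I would left-multiply by $B$ and, crucially, regroup the leftmost $\cL$ with $B$: for $x\in\cD(\cL)$,
\[ (B\cL-B\cL_k)x=-B\cL(k-\cL)^{-1}\cL x=-(B\cL)\big((k-\cL)^{-1}\cL x\big). \]
Since $(k-\cL)^{-1}\cL x\in\cD(\cL)$, the bounded operator $B\cL$ genuinely acts on it, and submultiplicativity gives
\[ \|(B\cL-B\cL_k)x\|\le\|B\cL\|\,\|(k-\cL)^{-1}\|\,\|\cL x\|\le\frac{\|B\cL\|}{k}\|\cL x\|, \]
which is \eqref{eq:StrongYosidaBound} with $C=\|B\cL\|$.

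For the uniform bound \eqref{eq:UniformYosidaBound} I would sandwich the same identity between two copies of $B$, pairing the left $\cL$ with the left $B$ and the right $\cL$ with the right $B$, so that both $B\cL$ and $\cL B$ appear as bounded operators while only the central resolvent carries the factor $1/k$. On the dense set of $z$ with $Bz\in\cD(\cL)$ (dense because $\cL B$ is densely defined) one obtains
\[ B(\cL-\cL_k)Bz=-(B\cL)\,(k-\cL)^{-1}\,(\cL B)z, \]
whence $\|B(\cL-\cL_k)Bz\|\le\frac{\|B\cL\|\,\|\cL B\|}{k}\|z\|$. Since $B\cL_k B$ is bounded and $B\cL B=B\,\overline{\cL B}$ is bounded, this estimate on a dense domain extends to all of $X$, yielding \eqref{eq:UniformYosidaBound} with $C=\|B\cL\|\,\|\cL B\|$.

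The main obstacle is purely one of domain bookkeeping rather than hard analysis: because $\cL$ is unbounded, the factorizations above are only meaningful once one checks that each intermediate vector lies in $\cD(\cL)$ and that the groupings $B\cL$ and $\cL B$ coincide with the bounded extensions hypothesized in the statement. I would therefore first establish the commutation $\cL(k-\cL)^{-1}=(k-\cL)^{-1}\cL$ and the displayed identity for $\cL-\cL_k$ on $\cD(\cL)$, and only then perform the regroupings, restricting the uniform estimate to the dense domain $\{z:Bz\in\cD(\cL)\}$ before extending by continuity.
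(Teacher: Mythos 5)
Your proof is correct and takes essentially the same route as the paper's: both rest on the identity $\cL_k-\cL=\cL(k-\cL)^{-1}\cL$ together with the Hille--Yosida bound $\|(k-\cL)^{-1}\|\le 1/k$, and both obtain the estimates by grouping the factors as $(B\cL)\,(k-\cL)^{-1}\,(\cL B)$ for \eqref{eq:UniformYosidaBound} and $(B\cL)\,(k-\cL)^{-1}\,\cL x$ for \eqref{eq:StrongYosidaBound}. The only difference is presentational: you spell out the domain bookkeeping and the density extension, which the paper compresses into the remark that bounded operators are identified with their unique bounded extensions.
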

\begin{proof}
	As $\cL$ generates a contraction semigroup, we have \cite[Theorem \@3.5, Chapter \@II]{EngNag00} 
\[\left\|\left(k-\cL\right)^{-1}\right\| \le \frac{1}{k}, \text{ for all } k \in \NN.\]
	We start by proving the uniform bound \eqref{eq:UniformYosidaBound} by first observing that
	\begin{align*}
	B\cL_kB &= Bk\cL \left(k-\cL\right)^{-1}B = B \cL\left(k-\cL\right)^{-1}\cL B + B\cL B .
	\end{align*}
	Here, we identified all bounded operators with their unique bounded extensions, i.e. $T\equiv\overline{T\vert_{\cD(T)}}$ for $T$ bounded on $\cD(T)$.  
	Using $$\left\|B \cL\left(k-\cL\right)^{-1}\cL B\right\| \le \|B\cL\| \left\|\left(k-\cL\right)^{-1}\right\| \|\cL B\| \le C/k,$$
	this shows \eqref{eq:UniformYosidaBound}. To see \eqref{eq:StrongYosidaBound}, we use the fact that
	\begin{align*}
	B\cL_k &= Bk\cL \left(k-\cL\right)^{-1} = B \cL\left(k-\cL\right)^{-1}\cL  + B\cL,
	\end{align*}
	which gives for any $x\in\cD(\cL)$
\[\left\|B \cL\left(k-\cL\right)^{-1}\cL x\right\| \le \big\|B\cL\big\|\left\|\left(k-\cL\right)^{-1}\right\|\big\|\cL x\big\| \le \frac{C}{k} \|\cL x\|.\]

\end{proof}
Using Lemma~\ref{lem:QuantitativeBoundYosida} we can prove a similar result for the corresponding semigroups.
\begin{lemm}
	\label{lem:QuantitativeBoundSemigroup}
	Let $\left(T(t)\right)_{t\ge0}$ be a strongly continuous contraction semigroup on a Banach space $X$ with generator $\cL$, and let $B$ be a bounded operator such that $B\cL$ and $\cL B$ are bounded and densely defined. Moreover, let $\cL_k$ denote the $k^{th}$ Yosida approximant of $\cL$ and $\left(T_k(t)\right)_{t\ge 0}$, with $T_k(t):= e^{t\mathcal L_k}$, be the corresponding contraction semigroup. 
	\begin{align}
	\label{eq:UniformTruncDynamicsBound}
	\Big\|BT(t)B - BT_k(t)B\Big\| \le \frac{t C}{k},
	\end{align}
	and for all $x\in\cD(\cL)$
	\begin{align}
	\label{eq:StrongTruncDynamicsBound}
	\Big\|\left(BT(t) - BT_k(t)\right)x\Big\| \le \frac{t C}{k} \|\cL x\|.
	\end{align}
\end{lemm}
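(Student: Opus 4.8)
The plan is to compare the two contraction semigroups by the standard Duhamel (variation-of-constants) telescoping identity and then to reduce the entire estimate to the generator-level bound \eqref{eq:StrongYosidaBound} already established in Lemma~\ref{lem:QuantitativeBoundYosida}. Concretely, for $x\in\cD(\cL)$ I would fix $t>0$ and differentiate the map $s\mapsto T_k(t-s)T(s)x$ on $[0,t]$. Since $T$ is strongly continuous and $x\in\cD(\cL)$, the map $s\mapsto T(s)x$ is strongly differentiable with derivative $T(s)\cL x$ and remains in $\cD(\cL)$; since $\cL_k$ is bounded, $s\mapsto T_k(t-s)$ is norm-differentiable and $\cL_k$ commutes with $T_k$. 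The product rule then gives
\[\frac{d}{ds}\,T_k(t-s)T(s)x = T_k(t-s)\left(\cL-\cL_k\right)T(s)x,\]
and integrating from $0$ to $t$ yields the identity
\[T(t)x-T_k(t)x=\int_0^t T_k(t-s)\left(\cL-\cL_k\right)T(s)x\,ds,\qquad x\in\cD(\cL).\]

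Next I would exploit that $T_k(t-s)$, being a function of $\cL$, commutes with both $\cL$ and $\cL_k$ on $\cD(\cL)$, so the integrand can be rewritten as $B\left(\cL-\cL_k\right)T_k(t-s)T(s)x$, with the difference $\cL-\cL_k$ now adjacent to $B$. Applying \eqref{eq:StrongYosidaBound} with $\chi:=T_k(t-s)T(s)x\in\cD(\cL)$ gives $\|B(\cL-\cL_k)\chi\|\le \frac{C}{k}\|\cL\chi\|$, and since $\cL\chi=T_k(t-s)T(s)\cL x$ and both semigroups are contractions, $\|\cL\chi\|\le\|\cL x\|$. Integrating the resulting pointwise bound $\frac{C}{k}\|\cL x\|$ over $s\in[0,t]$ produces the extra factor $t$ and establishes the strong estimate \eqref{eq:StrongTruncDynamicsBound}.

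For the uniform estimate \eqref{eq:UniformTruncDynamicsBound} I would run the same computation with $x$ replaced by $By$, where $y$ ranges over the dense domain $\cD(\cL B)=\{y:By\in\cD(\cL)\}$ so that all the manipulations above remain legitimate. The only change is in the final step: now $\cL\chi=T_k(t-s)T(s)\cL By$, and using that $\cL B$ is bounded one estimates $\|\cL\chi\|\le\|\cL By\|\le\|\cL B\|\,\|y\|$. This gives $\|B(T(t)-T_k(t))By\|\le \frac{tC}{k}\|y\|$ for all $y$ in a dense set, and since $B(T(t)-T_k(t))B$ is a bounded operator the bound extends to all of $X$ by density, which is \eqref{eq:UniformTruncDynamicsBound}.

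The quantitative gain is thus entirely inherited from Lemma~\ref{lem:QuantitativeBoundYosida}, and the main obstacle is the domain bookkeeping needed to invoke the product rule and the commutation relations rigorously: one must verify that $T(s)x$ and $T_k(t-s)T(s)x$ stay in $\cD(\cL)$, that $\cL-\cL_k$ genuinely acts as in \eqref{eq:StrongYosidaBound} on these vectors, and, for the uniform bound, that restricting to the dense set $\cD(\cL B)$ and extending by continuity is harmless. Once these points are settled, the two displayed estimates follow immediately.
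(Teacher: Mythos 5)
Your proposal is correct and takes essentially the same route as the paper's proof: the same Duhamel telescoping identity (differentiating the mixed product $T_k(t-s)T(s)x$, the paper merely writes the factors in the opposite order), the same reduction to the generator-level bound \eqref{eq:StrongYosidaBound} via commutation of the semigroups with $\cL$, $\cL_k$ and contractivity, and the same final step of evaluating at $By$ for $y\in\cD(\cL B)$ and extending by density to get \eqref{eq:UniformTruncDynamicsBound}. Your extra attention to the domain bookkeeping is consistent with, and slightly more explicit than, what the paper records.
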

\begin{proof} Since for all $k,l\in\mathbb{N}$ and $t,s\ge 0$ the operators $T_k(t)$ and $T_l(s)$ commute, it follows that 
$[T(t), T(s)]=0$. Hence, $T_k(t)$ leaves $\cD(\cL)$ invariant and commutes with $\cL$ as well. Therefore, for $x\in\cD(\cL)$ we obtain
	\begin{align*}
	\left(BT(t) - BT_k(t)\right)x &= B\int_0^t \frac{d}{d s}\,T(s)T_k(t-s)x\, d s \\
	&=\int_0^t  B\left(\cL - \cL_k\right)T(s)T_k(t-s) x\, ds.\\
	\end{align*}
	Now using~\eqref{eq:StrongYosidaBound} and the fact that $T(t)$ and $T_k(t-s)$ are contractions, we obtain \eqref{eq:StrongTruncDynamicsBound}
	\begin{align*}
	\left\|\left(BT(t) - BT_k(t)\right)x\right\| &\le  \frac{C }{k}\int_0^t \left\|\cL T(s)T_k(t-s)x\right\| ds  \le\frac{C }{k}\int_0^t \left\|\cL x\right\| ds 
	=\frac{C t}{k} \|\cL x\|.
	\end{align*} 
If we now apply this result to $x\in\cD(\cL B)$, i.e. $Bx \in \cD(\cL)$, we obtain 
	\begin{align*}
	\Big\|\left(BT(t)B - BT_k(t)B\right)x\Big\| \le \frac{t C}{k}\|\cL Bx\| \le  \frac{t C}{k} \|x\|,
	\end{align*}
	since $\cL B$ is bounded. As $\cD(\cL B)$ is dense, the uniform bound \eqref{eq:UniformTruncDynamicsBound} follows.
\end{proof}
Now we have everything needed to prove Theorem~\ref{thm:UnboundedQuantumZenoProjector}:
\begin{proof}[Proof of Theorem~\ref{thm:UnboundedQuantumZenoProjector}]
We first observe that if $M\cL$ and $\cL M$ are densely defined and bounded. This also applies to $P_j\cL, \cL P_j$, for all $j=1,\dots,J$ by using the fact that $P_j=\lambda_j^{-1} MP_j=  MP_j\lambda_j^{-1},$ since the quasi-nilpotent parts vanish. 

	Let $x\in\cD(\cL)$. Then we get by using the triangle inequality
\begin{equation}
\begin{split}
\label{eq:TriangleUnoundedcase}
&\left\|\left(\left(Me^{\cL/n}\right)^n - \sum_{j=1}^Je^{P_j\cL P_j}\lambda_j^n P_j\right)x\right\|\\ &\le \left\|\left(\left(Me^{\cL/n}\right)^n - \left(Me^{\cL_k/n}\right)^n\right)x\right\| + \left\|\left(\left(Me^{\cL_k/n}\right)^n -  \sum_{j=1}^Je^{P_j\cL_k P_j}\lambda_j^nP_j\right)x\right\| \\&\,+ 
\left\| \sum_{j=1}^J\left(e^{P_j\cL_k P_j} - e^{P_j\cL P_j}\right)\lambda_j^n P_jx\right\|.
\end{split}
\end{equation}
To bound the last term we see that for all $j\in\{1,\cdots,J\}$
\begin{align*}
&\quad\,e^{P_j\cL_k P_j} - e^{P_j\cL P_j} = \int_0^1 \frac{d}{d s}\,\left(e^{sP_j\cL_k P_j} \,e^{(1-s)P_j\cL P_j}\right)  d s \\
&= \int_0^1 e^{sP_j\cL_k P_j} \,\big(P_j\cL_k P_j- P_j\cL P_j\big)\,e^{(1-s)P_j\cL P_j} ds,
\end{align*}
and hence, using \eqref{eq:UniformYosidaBound} with $B=P_j$, we obtain $\left\| e^{P_j\cL_k P_j} - e^{P_j\cL P_j} \right\| \le C/k.$ 
Now we consider the first term on the right-hand side of \eqref{eq:TriangleUnoundedcase}. Using the fact that both $M$ and the evolution are contractions, we first notice that 
\begin{align*}
\left\|\left(\left(Me^{\cL/n}\right)^n - \left(Me^{\cL_k/n}\right)^n\right)x\right\| &\le 
\left\|\left(Me^{\cL/n} - Me^{\cL_k/n}\right)\left(Me^{\cL/n}\right)^{n-1}x\right\| \\ &\,+ \left\|\left(Me^{\cL/n}\right)^{n-1} - \left(Me^{\cL_k/n}\right)^{n-1}x\right\| \\
&\le \sum_{j=0}^{n-1} \left\|\left(Me^{\cL/n} - Me^{\cL_k/n}\right)\left(Me^{\cL/n}\right)^{j}x\right\|.
\end{align*}
For each term with $j\ge0$ in the summation above we use the uniform bound \eqref{eq:UniformTruncDynamicsBound} with $B=M$ to get
\begin{align*}
\left\|\left(Me^{\cL/n} - Me^{\cL_k/n}\right)\left(Me^{\cL/n}\right)^{j}x\right\| \le \frac{C}{nk}\|x\|.
\end{align*}
For the term with $j=0$ we use \eqref{eq:StrongTruncDynamicsBound} and get
\begin{align*}
\left\|\left(Me^{\cL/n} - Me^{\cL_k/n}\right)x\right\| \le \frac{C}{nk}\|\cL x\|.
\end{align*}
Putting these together, we obtain
\begin{align*}
\left\|\left(\left(Me^{\cL/n}\right)^n - \left(Me^{\cL_k/n}\right)^n\right)x\right\| &\le \frac{C}{nk}\left(\left\|\cL x\right\| + \sum_{j=1}^{n-1} \|x\|\right) \le \frac{C}{k} \left(\,\|x\| + \frac{\|\cL x\|}{n}\,\right).
\end{align*}
Now using Theorem~\ref{thm:QuantitativQuantumZeno} with $\cL = \cL_k$ for the middle term in \eqref{eq:TriangleUnoundedcase}, and the fact that $\|\cL_k\| \le k$, we see that for all $n\in\mathbb{N}$ 
we have
\[\left\|\left(\left(Me^{\cL/n}\right)^n - \sum_{j=1}^Je^{P_j\cL P_j}\lambda_j^nP_j\right)x\right\|\le C\Bigg(\left(\frac{1}{k} + \frac{k}{\sqrt[3]{n^2}} + \frac{k^2}{n}+\tilde\delta^{n+1}\right)\|x\| + \frac{\|\cL x\|}{nk}\Bigg),\]
for some $C>0$ and any $0<\delta<\tilde\delta<1$.
Choosing the optimal $k =\sqrt[3]{n}$, yields the bound
\[\left\|\left(\left(Me^{\cL/n}\right)^n - \sum_{j=1}^Je^{P_j\cL P_j}\lambda_j^nP_j\right)x\right\| \le C\Bigg(\left(\frac{1}{\sqrt[3]{n}} + \tilde\delta^{n+1}\right)\|x\| + \frac{\|\cL x\|}{\sqrt[3]{n^4}}\Bigg),\]
which proves \eqref{eq:UnboundedAsymptotic}. As $\left(Me^{\cL/n}\right)^n$ is a uniformly bounded sequence (in fact even a sequence of contractions) and $\cD(\cL)$ dense, this gives the strong limit stated in \eqref{eq:UnboundedLimit}.
\end{proof}

\comment{The following theorem implies that the quantum Zeno product tends to an approximate Zeno dynamics, given in terms of Yosida approximations of the full generator $\mathcal L$ under an additional constraint condition $\eqref{eq:EnergyConstraint}.$ In particular, this condition holds for any $x\in \mathcal D(\mathcal L)$ with $C_x:=1$ if $\Vert\mathcal L^{\alpha} Mx \Vert \le \Vert \mathcal L^{\alpha}x \Vert$ for all $x \in \mathcal D(\mathcal L).$}

It is desirable to find a generalisation of Theorem~\ref{thm:UnboundedQuantumZenoProjector} without the constraint that $\cL M,\,M\cL$ are bounded. A natural replacement of this uniform boundedness assumption would be a pointwise boundedness assumption depending on the corresponding element $x\in X$ on which the Zeno product is evaluated. However, in this setting it is not clear what the right candidate for the effective Zeno dynamics is since the operators $P_j\cL P_j$ are, in general, not generators of a strongly continuous contraction semigroups.

Using similar techniques as in the proof of Theorem~\ref{thm:UnboundedQuantumZenoProjector}, we  show that the Zeno product tends to an approximate Zeno dynamics, given in terms of the Yosida approximants of the full generator $\cL$. This is the statement of the following corollary, which incorporates the pointwise boundedness assumption \eqref{eq:EnergyConstraint}.
In particular, this condition holds for any $\alpha\in(0,2]$ and $x\in \mathcal D(\mathcal L^\alpha)$ with $C_x:=1$ if $\Vert\mathcal L^{\alpha} Mx \Vert \le \Vert \mathcal L^{\alpha}x \Vert$ (cf.~\cite{H06} for an overview of the theory of fractional powers of operators).
\begin{corr}
\label{theo:fakeconstraint}
	Let $\cL$, with domain $\cD(\cL)$, be a generator of a strongly continuous contraction semigroup $\left(e^{t\cL}\right)_{t\ge0}$ and $M\in\cB(X)$ be a contraction satisfying the spectral gap assumption \eqref{eq:SpectralGap0} with all corresponding quasi-nilpotent operators \eqref{eq:Nilpotent} being equal to zero. Moreover, let $\alpha \in(0,2]$ and $x\in\cD(\cL^\alpha)$ such that
	\begin{align}
	\label{eq:EnergyConstraint}
\sup_{\substack{n\in\N,\\j\in\N,\,j\le n}}\|\cL^\alpha \left(Me^{\cL/n}\right)^jx\| \le C_x,
	\end{align}
	for some finite $C_x >0$ dependent only on $x$.
	Then 
	\begin{align}
	\label{eq:UnboundedToYosida}
	\left\|\left(\left(Me^{t\cL/n}\right)^n - \sum_{j=1}^Je^{tP_j\cL_kP_j}\lambda_j^nP_j\right)x\right\| \le C_x\left(\frac{n^{1-\alpha/2}}{k^{\alpha/2}}+\frac{k}{\sqrt[3]{n^2}} + \frac{k^2}{n} + \tilde\delta^{n+1}\right),	\end{align}
	with $0<\delta<\tilde\delta<1$. In particular for $\alpha\in (4/3,\,2]$ and $x\in\cD(\cL^{\alpha})$ satisfying \eqref{eq:EnergyConstraint} we get the optimal asymptotic behaviour
	\begin{align}
	\left\|\left(\left(Me^{t\cL/n}\right)^n - \sum_{j=1}^Je^{tP_j\cL_{n^\beta}P_j}\lambda_j^nP_j\right)x\right\| \le C_x n^{-\gamma},
	\end{align}
	with $\beta =\frac{4-\alpha}{4+\alpha}$ and $\gamma = \frac{3\alpha-4}{4+\alpha}>0$ and $n$ large enough.
\end{corr}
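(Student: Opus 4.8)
The plan is to mirror the proof of Theorem~\ref{thm:UnboundedQuantumZenoProjector}, but to avoid the uniform boundedness assumption on $M\cL$ and $\cL M$ by instead exploiting the pointwise energy constraint \eqref{eq:EnergyConstraint} on the trajectory $\left(Me^{\cL/n}\right)^jx$. The target estimate \eqref{eq:UnboundedToYosida} differs from \eqref{eq:UnboundedAsymptotic} in two ways: the effective dynamics is now the Yosida-approximated $e^{tP_j\cL_kP_j}$ rather than $e^{tP_j\cL P_j}$ (so the final term of the earlier triangle inequality disappears), but in exchange the first term---the replacement of $e^{\cL/n}$ by $e^{\cL_k/n}$ along the product---must now be controlled without uniform operator bounds. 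First I would set up the telescoping decomposition exactly as before,
\begin{align*}
\left\|\left(\left(Me^{t\cL/n}\right)^n - \left(Me^{t\cL_k/n}\right)^n\right)x\right\| \le \sum_{j=0}^{n-1}\left\|\left(Me^{t\cL/n}-Me^{t\cL_k/n}\right)\left(Me^{t\cL/n}\right)^{j}x\right\|,
\end{align*}
and then use the semigroup difference $T(t/n)-T_k(t/n)=\int_0^{t/n}T(s)(\cL-\cL_k)T_k(t/n-s)\,ds$ applied to each $\left(Me^{t\cL/n}\right)^jx$, since the latter lies in $\cD(\cL^\alpha)\subset\cD(\cL)$ by the constraint.

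The key new ingredient I would establish is a fractional Yosida estimate replacing \eqref{eq:StrongYosidaBound}: using $(\cL-\cL_k)=-\cL(k-\cL)^{-1}\cL$ together with the resolvent bound $\|(k-\cL)^{-1}\|\le 1/k$ and interpolation of fractional powers, one gets a bound of the form
\begin{align*}
\left\|\left(e^{t\cL/n}-e^{t\cL_k/n}\right)y\right\| \le \frac{C}{n}\,\frac{n^{?}}{k^{\alpha/2}}\|\cL^\alpha y\|
\end{align*}
when $y\in\cD(\cL^\alpha)$; the point is that each factor of $(k-\cL)^{-1}$ applied to $\cL$ costs $k^{-1}$ but against a fractional power $\cL^{\alpha/2}$ one only extracts $k^{-\alpha/2}$, and the remaining $\cL^{\alpha/2}$ is absorbed by the constraint $C_x$. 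Summing the telescope over $j=0,\dots,n-1$, each of the $n$ terms carries a factor $1/n$ from the time-step length, and after inserting $\|\cL^\alpha\left(Me^{t\cL/n}\right)^jx\|\le C_x$ the total contributes $C_x\,n^{1-\alpha/2}/k^{\alpha/2}$, matching the first term of \eqref{eq:UnboundedToYosida}. The middle term is then handled verbatim by Theorem~\ref{thm:QuantitativQuantumZeno} applied with the bounded generator $\cL_k$, using $\|\cL_k\|\le k$, which produces $k/\sqrt[3]{n^2}+k^2/n+\tilde\delta^{n+1}$.

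I expect the main obstacle to be the rigorous version of the fractional Yosida estimate and, in particular, getting the exponent of $n$ in the first term to come out as exactly $n^{1-\alpha/2}$ rather than something weaker. The subtlety is that along the trajectory one controls $\|\cL^\alpha(\cdots)x\|$, but the semigroup-difference identity naturally produces $\cL$ to the first power hitting $T(s)$; one must commute or interpolate the fractional power through the contraction semigroups $T(s),T_k(t/n-s)$---which is legitimate since Yosida approximants commute with $\cL$ and leave $\cD(\cL^\alpha)$ invariant---and carefully track that only $\cL^{\alpha/2}$ worth of smoothness is spent on the resolvent while the complementary $\cL^{\alpha/2}$ stays attached to $x$. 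Once the bound \eqref{eq:UnboundedToYosida} is in hand, the optimisation is routine: substituting $k=n^\beta$ balances $n^{1-\alpha/2}/k^{\alpha/2}=n^{1-\alpha/2-\alpha\beta/2}$ against $k^2/n=n^{2\beta-1}$, and solving $1-\alpha/2-\alpha\beta/2 = 2\beta-1$ gives $\beta=\frac{4-\alpha}{4+\alpha}$ with common rate $\gamma=\frac{3\alpha-4}{4+\alpha}$, positive precisely when $\alpha>4/3$, which also dominates the $k/\sqrt[3]{n^2}=n^{\beta-2/3}$ term for $\alpha$ in the stated range.
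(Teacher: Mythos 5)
Your proposal follows essentially the same route as the paper: the same two-term triangle-inequality split (with the $e^{tP_j\cL P_j}$ comparison term absent because the target is already the Yosida-approximated dynamics), the same telescoping of $\left(Me^{t\cL/n}\right)^n-\left(Me^{t\cL_k/n}\right)^n$ combined with the energy constraint \eqref{eq:EnergyConstraint}, the same invocation of Theorem~\ref{thm:QuantitativQuantumZeno} with $\|\cL_k\|\le k$ for the middle term, and the identical optimisation $k=n^\beta$. The one ingredient you flag as the main obstacle --- the fractional Yosida estimate --- is exactly what the paper obtains by citing \cite[Corollary 1.4]{GT14}, which gives the per-step bound $\left\|\left(e^{\cL/n}-e^{\cL_k/n}\right)y\right\|\le C(nk)^{-\alpha/2}\|\cL^\alpha y\|$ for $y\in\cD(\cL^\alpha)$, $\alpha\in(0,2]$; this matches the exponent bookkeeping you guessed, and your interpolation sketch (trivial bound on $X$ against the bound $\frac{s}{k}\|\cL^2\cdot\|$ on $\cD(\cL^2)$, combined via the moment inequality) is indeed the standard way such results are proved, so modulo that citation your argument is complete.
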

\label{sec:channelconstraints}

\begin{proof}
	As before, we will absorb in the factor $t$ in the generator $\cL$.
	Firstly, using the triangle inequality, we split the left-hand side of \eqref{eq:UnboundedToYosida} as
	\begin{align}
	\label{eq:TriangleMiangle}
	\nonumber	\left\|\left(\left(Me^{\cL/n}\right)^n - \sum_{j=1}^Je^{P_j\cL_kP_j}\lambda_j^n P_j\right)x\right\| &\le \left\|\left(\left(Me^{\cL/n}\right)^n - \left(Me^{\cL_k/n}\right)^n\right)x\right\| \\&\quad +\left\|\left(\left(Me^{\cL_k/n}\right)^n- \sum_{j=1}^Je^{P_j\cL_kP_j}\lambda_j^nP_j\right)x\right\|.
	\end{align}
	For the first term in \eqref{eq:TriangleMiangle} we proceed as in the proof of Theorem~\ref{thm:UnboundedQuantumZenoProjector} to obtain
	\begin{align*}
	\left\|\left(\left(Me^{\cL/n}\right)^n - \left(Me^{\cL_k/n}\right)^n\right)x\right\| \le 
	\sum_{j=0}^{n-1} \left\|\left(Me^{\cL/n} - Me^{\cL_k/n}\right)\left(Me^{\cL/n}\right)^{j}x\right\|.
	\end{align*}
	Using \cite[Corollary 1.4]{GT14} this gives
	\begin{align}
	\label{eq:QuantYosBound}
	\left\|\left(\left(Me^{\cL/n}\right)^n - \left(Me^{\cL_k/n}\right)^n\right)x\right\| \le \frac{C}{(n k)^{\alpha/2} }\sum_{j=0}^{n-1} \left\|\cL^\alpha\left(Me^{\cL/n}\right)^{j}x\right\|.
	\end{align}
	Using now \eqref{eq:EnergyConstraint} we see that 
	\[\left\|\left(\left(Me^{\cL/n}\right)^n - \left(Me^{\cL_k/n}\right)^n\right)x\right\| \le \frac{C_x n^{1-\alpha/2}}{k^{\alpha/2}}.\]
	Using now for the second term in \eqref{eq:TriangleMiangle} the bound in Theorem~\ref{thm:QuantitativQuantumZeno} with bounded generator being the Yosida approximant $\cL_k$ and noting that $\|\cL_k\|\le k$, yields~\eqref{eq:UnboundedToYosida}. 
\end{proof}
\comment{To illustrate Corollary \ref{theo:fakeconstraint}, we consider the following example:
\begin{example}
We consider 
\[\mathcal L(\rho)=a \rho a^{\dagger}-\tfrac{1}{2}(N\rho+\rho N)\]
the generator of the quantum attenuator channel and for $M$ the generalized depolarising channel with parameter $p \in (0,1)$ defined as
\[ M(\rho)=(1-p) \ket{0}\bra{0}+p\rho.\]
Since $\mathcal L(\ket{0}\bra{0})=0$ it follows that for all $x \in \mathcal D(\cL)$
\[ \Vert (\cL M)(x) \Vert_1 =(1-p) \Vert \cL x \Vert_1 \le \Vert \cL x \Vert_1\]
and thus Theorem \ref{theo:fakeconstraint} applies to this setting.

Moreover, note that for the quantum attenuator channel the assumption \eqref{eq:EnergyConstraint} and hence the statement of Theorem~\ref{theo:fakeconstraint} can be inferred from an energy constraint in the following sense: The canonical Hamiltonian for describing the energy of the system is (up to choice of units and a constant shift) given by the number operator
\begin{align*}
N =a^\dagger a.
\end{align*}
Let now $\rho$ be an initial state with finite energy, i.e. $\tr(N\rho) <\infty$
such that the energy of the system stays bounded during the process of alternately applying the quantum operation $M$ and letting it freely evolve for a time $1/n$ with respect to the evolution generated by $\cL$, i.e.
\begin{align*}
\sup_{\substack{n\in\N,\\j\in\N,\,j\le n}}\tr\left(N\left(Me^{\cL/n}\right)^j(\rho)\right) \le C_\rho < \infty.
\end{align*}
Then we see that we can already conclude \eqref{eq:EnergyConstraint} as 
\begin{align*}
\sup_{\substack{n\in\N,\\j\in\N,\,j\le n}}\left\|\cL\left(Me^{\cL/n}\right)^j(\rho)\right\|_1 &\le \sup_{\substack{n\in\N,\\j\in\N,\,j\le n}}\left\|a\left(Me^{\cL/n}\right)^j(\rho)a^\dagger\right\|_1 + \sup_{\substack{n\in\N,\\j\in\N,\,j\le n}}\left\|N\left(Me^{\cL/n}\right)^j(\rho)\right\|_1 \\ &=2\sup_{\substack{n\in\N,\\j\in\N,\,j\le n}}\tr\left(N\left(Me^{\cL/n}\right)^j(\rho)\right) \le 2C_\rho.
\end{align*} 

\end{example}}
\section{Proof of Theorem \ref{thm:StrongZenoGeneral}}
\label{sec:withoutspecgap}

In order to prove Theorem~\ref{thm:StrongZenoGeneral} we first introduce for all $n\in N$, $k \in [n]:=\{1,\dots,n\},$ and $\mathbf N:=(N_1,\cdots,N_{k+1})\in\N^{k+1}$ the following simplexes
\begin{equation}
    \begin{split}
    \label{eq:Ink}
I_{n,k}(\mathbf N) := \left\{i\in\N^k\Big|i_l\ge N_l\,\,\forall l\in[k], \,\sum_{l=1}^k i_l \le n-N_{k+1}\right\}
    \end{split}
\end{equation}
and analyse the asymptotic behaviour of the cardinalities of these sets, illustrated in Figure \ref{fig:simplex}, as $n \to \infty$, in the following lemma. 
\begin{figure}[t]
	 	\centering
	 		\includegraphics[width=0.55\linewidth]{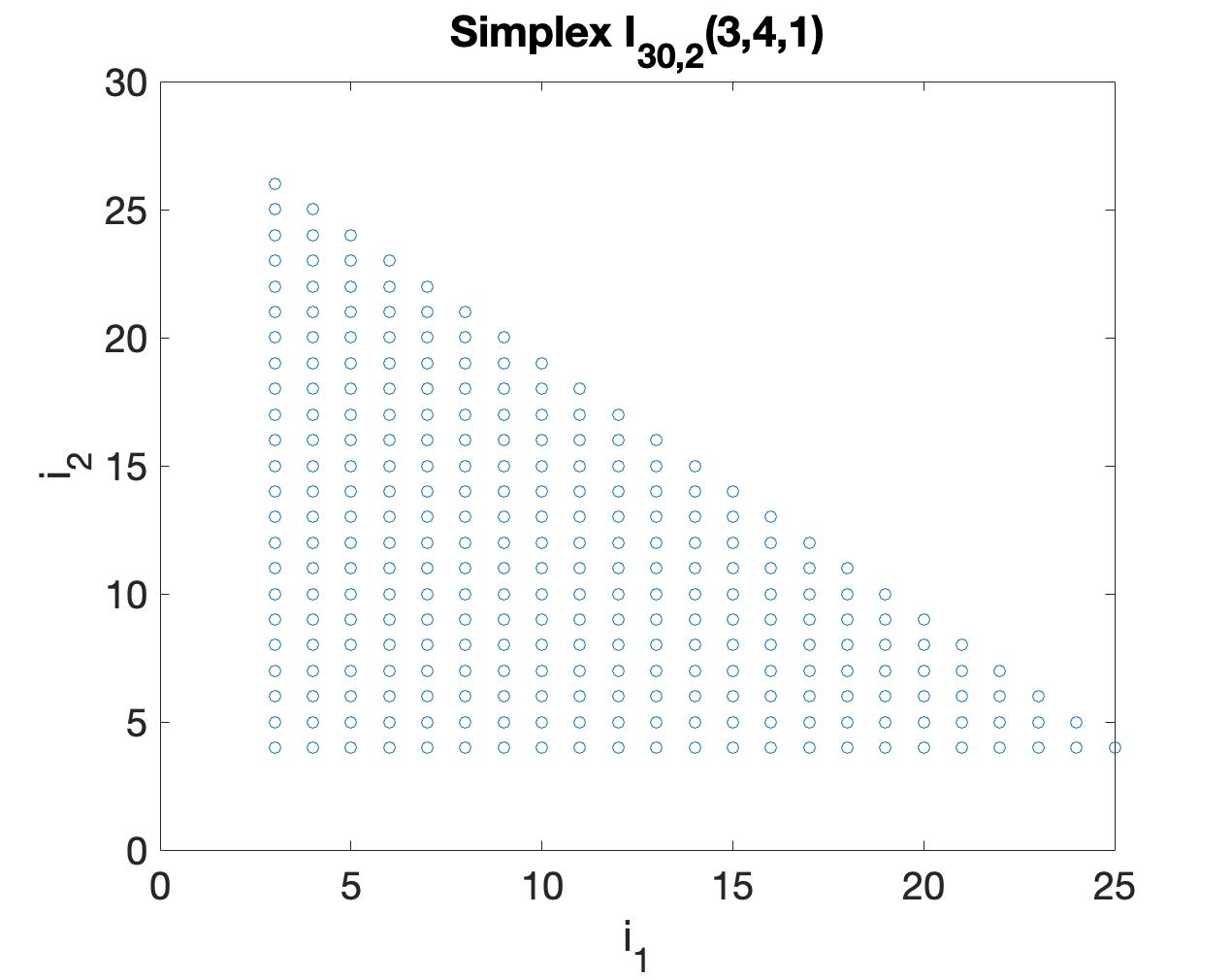}
	 	\caption{Sets $I_{n,k}(\mathbf N)$, defined in \eqref{eq:Ink} take the form of discrete simplexes.}
	 	\label{fig:simplex}
	 \end{figure}
\begin{lemm}
\label{lem:cardinality}
	For all $k\in\N$ and $\mathbf N \in\N^{k+1}$ we have
\begin{align}
\lim_{n\to\infty}\frac{|I_{n,k}(\mathbf N)|}{n^k} =\frac{1}{k!}.
\end{align}
\end{lemm}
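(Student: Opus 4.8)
The plan is to reduce this lattice-point count to a standard stars-and-bars problem by an affine shift of variables, evaluate it in closed form as a single binomial coefficient, and then read off the leading-order asymptotics in $n$.

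First I would substitute $j_l := i_l - N_l$ for each $l = 1,\dots,k$. The constraint $i_l \ge N_l$ becomes $j_l \ge 0$, and the constraint $\sum_{l=1}^k i_l \le n - N_{k+1}$ becomes $\sum_{l=1}^k j_l \le m$, where $m := n - N_{k+1} - \sum_{l=1}^k N_l$. This is a bijection between $I_{n,k}(\mathbf N)$ and the set of $(j_1,\dots,j_k)\in\N^k$ (with $0$ allowed) satisfying $\sum_{l=1}^k j_l \le m$, so $|I_{n,k}(\mathbf N)|$ equals the number of lattice points in the standard dilated $k$-simplex of size $m$.

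Next I would count these points. Adjoining a slack variable $j_{k+1}\ge 0$ converts the inequality $\sum_{l=1}^k j_l \le m$ into the equation $\sum_{l=1}^{k+1} j_l = m$, whose number of nonnegative integer solutions is $\binom{m+k}{k}$ by stars and bars. Hence, for all $n$ large enough that $m \ge 0$,
\[
|I_{n,k}(\mathbf N)| = \binom{m+k}{k} = \frac{1}{k!}\prod_{r=1}^{k}(m+r),
\]
which is a polynomial in $m$ of degree $k$ with leading coefficient $1/k!$.

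Finally I would take the limit. Since $m = n - N_{k+1} - \sum_{l=1}^k N_l$ differs from $n$ only by a constant depending on $k$ and $\mathbf N$, the displayed expression is a degree-$k$ polynomial in $n$ with leading coefficient $1/k!$; dividing by $n^k$ and letting $n\to\infty$ yields the claim. There is essentially no obstacle here, the computation being elementary and combinatorial; the only point meriting a word of care is that $m$ may be negative for small $n$, in which case the set is empty and the cardinality is $0$, but this is immaterial to the limit since $m\to\infty$ as $n\to\infty$.
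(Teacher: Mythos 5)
Your proof is correct, and it takes a genuinely different route from the paper's. The paper writes $|I_{n,k}(\mathbf N)|/n^k$ as a nested sum and identifies its limit, as $n\to\infty$, with a Riemann sum for the volume of the standard $k$-simplex, $\int_{\Delta^k}1\,dt=1/k!$. You instead compute the cardinality exactly: after the shift $j_l=i_l-N_l$, stars and bars gives $|I_{n,k}(\mathbf N)|=\binom{m+k}{k}=\frac{1}{k!}\prod_{r=1}^{k}(m+r)$ with $m=n-N_{k+1}-\sum_{l=1}^{k}N_l$ (and $|I_{n,k}(\mathbf N)|=0$ when $m<0$, which you correctly flag as immaterial), from which the limit is immediate. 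Each approach buys something. Yours is more elementary and more quantitative: it yields the exact count, hence an explicit rate $|I_{n,k}(\mathbf N)|/n^k=1/k!+\mathcal O(1/n)$ for fixed $k$ and $\mathbf N$, and in the special case $\mathbf N=(1,\dots,1,0)$ the clean identity $|I_{n,k}(1,\dots,1,0)|=\binom{n}{k}\le n^k/k!$, valid for every $n$ rather than only asymptotically; this gives at once (and slightly sharpens) the uniform bound on $\|y_{n,k}\|$ invoked in the dominated-convergence step of the proof of Theorem~\ref{thm:StrongZenoGeneral}. The paper's Riemann-sum argument is less explicit but more geometric: it exhibits the constant $1/k!$ as the volume of the simplex, which is conceptually aligned with how these terms recombine into the exponential series defining $e^{P\cL P}$, and it would adapt unchanged to limiting regions for which no closed-form lattice count is available.
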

\begin{proof}
	First we note that
	\begin{align*}
	\frac{|I_{n,k}(\mathbf N)|}{n^k} = \frac{1}{n^k} \sum_{i_1=N_1}^{n}\sum_{i_2=N_2}^{n-i_1}\cdots\sum_{i_{k-1}=N_{k-1}}^{n-\sum_{l=1}^{k-2}i_l}\sum_{i_k=1}^{n-\sum_{l=1}^{k-1}i_l-N_{k+1}}1.
	\end{align*}
	If we denote by $$\Delta^k = \left\{(t_1,\dots,t_{k})\in\mathbb{R}^{k}~\big|~\sum_{i = 1}^{k} t_i = 1 \text{ and } t_i \ge 0 \text{ for all } i\right\}$$ the $k$-simplex, then we obtain as a limiting expression, for the limit $n\to \infty$, the volume of the $k$-simplex
\begin{align*}
\lim_{n\to\infty}\frac{|I_{n,k}(\mathbf N)|}{n^k} &= \int_0^1\int^{1-t_1}_0\int_0^{1-t_1-t_2}\cdots\int_0^{1-\sum_{l=1}^{k-1} t_{l}} 1\,d t_k d t_{k-1}\cdots d t_1 \\ &=\int_{\Delta^k} 1 \ dt =\frac{1}{k!}.
\end{align*} 

\end{proof}  

For $n \in \mathbb N$ and $k \in \mathbb N$ we denote the discrete simplex by
$$\Delta_{\text{disc}}^k(n) = \left\{(i_1,\dots,i_{k})\in\mathbb{N}^{k}~\big|~\sum_{l = 1}^{k} i_l\le n \right\} = I_{n,k}(1,\cdots, 1,0).$$
We prove Theorem~\ref{thm:StrongZenoGeneral} by using a perturbation series approach. Here we split the Zeno product into a sum consisting of terms corresponding to different powers of $1/n$. In order to show convergence towards the Zeno dynamics, we need a convergence result for each of these summands which is given by the following lemma.
\begin{lemm}
	\label{lem:GeneralizedErgodic} 
		Let $\left(\cL_n\right)_{n\in\N}\subset \cB(X)$ such that
	\begin{align}
	\lim_{n\to\infty}\cL_n =\cL 
	\end{align}
	in operator norm for some $\cL\in\cB(X)$. Then for all $k\in\N$
\begin{align}
\label{eq:limitpimit}
\lim_{n\to\infty}\frac{1}{n^k}\sum_{i \in \Delta_{\text{disc}}^k(n)} M^{n+1-\sum_{l=1}^k i_{l}}\cL_nM^{i_k}\cL_n\cdots M^{i_2}\cL_nM^{i_1-1}x   = \frac{(P\cL P)^k}{k!} x
\end{align}
\end{lemm}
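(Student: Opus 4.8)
The plan is to prove the identity by induction on $k$, after two harmless reductions. Reading the leftmost‑to‑rightmost exponents of a summand as $e_0=n+1-\sum_l i_l,\ e_1=i_k,\ \dots,\ e_{k-1}=i_2,\ e_k=i_1-1$, the summand is $M^{e_0}\cL_n M^{e_1}\cL_n\cdots \cL_n M^{e_k}x$ with $e_0+\cdots+e_k=n$ and all $e_j\ge 1$ except $e_k\ge 0$. First I would relax every constraint to $e_j\ge 0$: the omitted boundary faces contain only $O(n^{k-1})$ multi-indices, each summand has norm at most $(\sup_n\|\cL_n\|)^k\|x\|$ because $M$ is a contraction, so after division by $n^k$ their total contribution is $O(1/n)$. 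Second, I would replace every $\cL_n$ by $\cL$: telescoping the $k$ replacements one at a time and bounding the untouched factors by contractivity gives an error of order $\|\cL_n-\cL\|\to 0$. It then suffices to study, for a fixed bounded $\cL$,
\[ T_k^{(n)}(x):=\frac{1}{n^k}\sum_{\substack{e_0+\cdots+e_k=n\\ e_j\ge 0}}M^{e_0}\cL M^{e_1}\cdots \cL M^{e_k}x. \]

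For the base case $k=1$ I would write $T_1^{(n)}(x)=\frac1n\sum_{j=0}^{n}M^{n-j}\cL M^{j}x$ and split $M^{j}x=Px+r_j$ with $r_j\to 0$ by \eqref{eq:StrongPowerConv}. The first piece is the Cesàro mean $\frac1n\sum_{j=0}^n M^{n-j}(\cL Px)$ of the convergent sequence $M^{i}(\cL Px)\to P\cL Px$, hence tends to $P\cL Px=(P\cL P)x$; the second piece has norm at most $\frac{\|\cL\|}{n}\sum_{j}\|r_j\|\to 0$ since Cesàro means of a null sequence vanish.

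For the inductive step ($k\ge 2$) I would peel off the leftmost block $M^{e_0}\cL$: fixing $e_0=n-m$ identifies the remaining sum with $m^{k-1}T_{k-1}^{(m)}(x)$, so that
\[ T_k^{(n)}(x)=\frac{1}{n^k}\sum_{m=0}^{n}m^{k-1}\,M^{n-m}\,\cL\,T_{k-1}^{(m)}(x). \]
By the induction hypothesis $v_m:=\cL\,T_{k-1}^{(m)}(x)\to v:=\cL\,\tfrac{(P\cL P)^{k-1}}{(k-1)!}Px$, and $(v_m)$ is uniformly bounded (each $T_{k-1}^{(m)}(x)$ is an average of $O(m^{k-1})$ contraction‑bounded terms). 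The core is then a weighted ergodic statement: if $v_m\to v$ is bounded and $r\ge 0$, then $\frac{1}{n^{r+1}}\sum_{m=0}^n m^r M^{n-m}v_m\to \tfrac{1}{r+1}Pv$. Applying it with $r=k-1$ gives $T_k^{(n)}(x)\to\frac1k Pv=\frac{1}{k!}P\cL(P\cL P)^{k-1}Px$, and using $P^2=P$ (whence $P\cL(P\cL P)^{k-1}P=(P\cL P)^k$ and $(P\cL P)^kP=(P\cL P)^k$) this equals $\frac{(P\cL P)^k}{k!}x$, closing the induction.

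The weighted ergodic statement is where the real work sits, and I expect it to be the main obstacle, precisely because $M^a$ converges to $P$ only strongly, not in norm, so one cannot estimate $\|M^a-P\|$. I would prove it in two steps via the Toeplitz/Silverman regular‑weights principle. First replace $v_m$ by $v$: the weights $w_{m,n}=m^r/n^{r+1}$ satisfy $\sum_m w_{m,n}\to\frac1{r+1}$ and $\max_m w_{m,n}=1/n\to0$, so the weighted average of the null sequence $\|v_m-v\|$ tends to $0$ (split at a threshold $J$ beyond which $\|v_m-v\|<\varepsilon$). Then, with $v$ fixed, set $u_i=M^i v$ and split $u_j=Pv+f_j$ with $f_j\to0$: the $Pv$‑part gives $Pv\cdot\frac{1}{n^{r+1}}\sum_{l}l^r\to\frac{1}{r+1}Pv$ (a Riemann sum), while the $f_j$‑part vanishes again by the regular‑weights argument with weights $(n-j)^r/n^{r+1}$. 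The remaining points — keeping all operator norms uniformly bounded (guaranteed by $\|M\|\le1$ and $\sup_n\|\cL_n\|<\infty$) and checking that the discrete‑simplex constraints perturb only lower‑dimensional strata — are routine.
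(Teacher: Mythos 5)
Your proposal is correct, but it takes a genuinely different route from the paper's proof. The paper argues directly on the $k$-dimensional simplex: given $\eps>0$, it uses strong power-convergence to choose thresholds $N_1,\dots,N_{k+1}$ adapted to the finitely many fixed vectors $\cL(P\cL P)^{l-2}x$, shows by an iterated (right-to-left) triangle-inequality replacement of $M^{i_l}$ by $P$ and of $\cL_n$ by $\cL$ that \emph{every} summand indexed by the bulk simplex $I_{n,k}(\mathbf N)$ lies within $\eps\,\gamma(k,P,\cL,x)$ of $(P\cL P)^k x$, and then concludes with the cardinality asymptotics $|I_{n,k}(\mathbf N)|/n^k \to 1/k!$ of Lemma~\ref{lem:cardinality}, the boundary strata being negligible exactly as in your first reduction. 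You instead perform the $\cL_n\to\cL$ replacement once and for all by telescoping, and then run an induction on $k$ that collapses the simplex average to the one-dimensional weighted average $\frac{1}{n^{k}}\sum_{m\le n} m^{k-1}M^{n-m}\cL\, T^{(m)}_{k-1}(x)$, so that all the ergodic content is isolated in a standard Toeplitz/regular-summability lemma; the factor $1/k!$ then emerges as the product of the elementary limits $\frac{1}{n^{r+1}}\sum_{m\le n}m^{r}\to\frac{1}{r+1}$ rather than as the volume of the $k$-simplex. Both arguments rest on the same three ingredients --- $\|M\|\le 1$, strong power-convergence applied to finitely many fixed vectors, and $\sup_n\|\cL_n\|<\infty$ --- but yours is more modular (it avoids Lemma~\ref{lem:cardinality} entirely and quarantines the only genuinely ergodic step in a classical summability statement), while the paper's yields a single uniform estimate over the bulk of the simplex, which is closer in spirit to a quantitative bound. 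One cosmetic point: in your peel-off identity the term $m=0$, where $T^{(0)}_{k-1}$ is ill-normalized, should be treated separately; since its weight $m^{k-1}$ vanishes for $k\ge2$, this is immaterial.
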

\begin{proof}
Let $\eps>0$. From the existence of the strong limit $\lim_{l\to\infty}M^lx=Px,$ we know that there exists a $N_1\in\N$ such that for all $i_1\ge N_1$ we have $\left\|M^{i_1-1}x-Px\right\| \le \eps.$
Using the fact that by definition $P$ is necessarily a projection, we can pick, by the same argument, for each $l\in[k]$ a $N_l(\varepsilon)\in\N$ such that for all $i_l\ge N_l(\varepsilon)$
\begin{align}
\left\|\Big(M^{i_l}\cL P\big(P\cL P\big)^{l-2}-P\big(P\cL P\big)^{l-1}\Big)x\right\| = \left\|\Big(M^{i_l}-P\Big)\cL\big(P\cL P\big)^{l-2}x\right\|\le \eps.
\end{align}
In addition, there exists a $N_{k+1}\in\N$ such that for $n$ large enough satisfying $n+1-\sum_{l=1}^k i_l \ge N_{k+1}$, i.e. $\sum_{l=1}^k i_l\le n -N_{k+1}$, we have that
\begin{align}
\left\|\Big(M^{n+1-\sum_{l=1}^k i_l}\cL\big(P\cL P\big)^{k-2}-\big(P\cL P\big)^{k-1}\Big)x\right\| \le \eps.
\end{align}
Moreover, as $\cL_n\xrightarrow[n\to\infty]{}\cL$ in operator norm, we can pick $N_\cL\in\N$ such that for all $n\ge N_\cL$ we have $\left\|\cL_n-\cL\right\| \le \eps$
and hence for each $l\in[k-1]$,
\begin{align}
\left\|\cL_n\left(P\cL P\right)^{l-1} -\cL\left(P\cL P\right)^{l-1}\right\| \le \eps \|P\cL P\|^{l-1}.
\end{align}
Now, combining the above inequalities and using $\|M\|\le1$, $\sup_{n}\|\cL_n\|\le C$ for some finite $C>0$, and the triangle inequality we get for all $(i_1,\cdots,i_k)\in I_{n,k}(\mathbf N)$, $n \ge N_\cL$, and $\mathbf N=(N_1,\cdots, N_{k+1})$ the following:
\begin{equation}
    \begin{split}
&\left\|\Big(M^{n+1-\sum_{l=1}^k i_l}\Pi_{m=k}^2(\cL_n M^{i_m}) \cL_n M^{i_1-1} -\big(P\cL P\big)^{k}\Big)x\right\| \\&\le \left\|\Big(M^{n+1-\sum_{l=1}^k i_l}\Pi_{m=k}^2(\cL_n M^{i_m}) \cL_n (M^{i_1-1} - P)\Big)x\right\| \\&\quad + \left\|\Big(M^{n+1-\sum_{l=1}^k i_l}\Pi^2_{m=k}(\cL_n M^{i_m}) \cL_n P -\big(P\cL P\big)^{k}\Big)x\right\| \\&\le C^k \eps +\left\|\Big(M^{n+1-\sum_{l=1}^k i_l}\Pi^2_{m=k}(\cL_n M^{i_m}) \cL_n P -\big(P\cL P\big)^{k}\Big)x\right\|
\end{split}
\end{equation}
Estimating the second term in the last line by using the triangle inequality and the uniform bound on operators $\cL_n$ yields
\begin{equation}
    \begin{split}
&\left\|\Big(M^{n+1-\sum_{l=1}^k i_l}\Pi^2_{m=k}(\cL_n M^{i_m}) \cL_n P -\big(P\cL P\big)^{k}\Big)x\right\|\\
&\le \left\|\Big(M^{n+1-\sum_{l=1}^k i_l}\Pi^2_{m=k}(\cL_n M^{i_m})(\cL_n - \cL )\Big)P x\right\| \\
&\quad + \left\|\Big(M^{n+1-\sum_{l=1}^k i_l} \Pi^2_{m=k}(\cL_n M^{i_m}) \cL P -\big(P\cL P\big)^{k}\Big)x\right\| \\&\le C^{k-1}\|Px\|\eps + \left\|\Big(M^{n+1-\sum_{l=1}^k i_l} \Pi^2_{m=k}(\cL_n M^{i_m}) \cL P -\big(P\cL P\big)^{k}\Big)x\right\| 
\end{split}
\end{equation}
By iterating these estimates, we obtain
\begin{equation}
\label{eq:Estimates}
    \begin{split}
&\left\|\Big(M^{n+1-\sum_{l=1}^k i_l}\Pi_{m=k}^2(\cL_n M^{i_m}) \cL_n M^{i_1-1} -\big(P\cL P\big)^{k}\Big)x\right\|\\
&\le \eps\sum_{l=1}^k\left( C^l+C^{l-1}\| (P\cL P)^{k-l}x\|\right).
    \end{split}
\end{equation}
Denoting the $n$-independent constant on the right hand side of~\eqref{eq:Estimates} as 
$$\gamma(k,P,\cL,x) = \sum_{l=1}^k\left( C^l+C^{l-1}\| (P\cL P)^{k-l}x\|\right),$$ 
we obtain, for all $n\ge N_\cL$,
\begin{align*}
&\left\|\Big(\,\frac{1}{n^k}\sum_{i \in \Delta_{\text{disc}}^k(n)} M^{n+1-\sum_{l=1}^k i_{l}}\Pi^2_{m=k}(\cL_n M^{i_m})\cL_nM^{i_1-1} - \frac{(P\cL P)^k}{k!}\Big) x\right\| \\
&=\left\|\Big(\,\frac{1}{n^k}\sum_{i\in I_{n,k}(1,\cdots,1,0)} M^{n+1-\sum_{l=1}^k i_{l}}\Pi^2_{m=k}(\cL_n M^{i_m})\cL_nM^{i_1-1} - \frac{(P\cL P)^k}{k!}\Big) x\right\| \\
&\le\left\|\,\frac{1}{n^k}\sum_{i \in I_{n,k}(\mathbf N)}\Big( M^{n+1-\sum_{l=1}^k i_{l}}\Pi_{m=k}^2 (\cL_nM^{i_m})\cL_n M^{i_1-1} - (P\cL P)^k\Big) x\right\| \\&\quad\quad+\left|\frac{|I_{n,k}(\mathbf N)|}{n^k}-\frac{1}{k!}\right|\|(P\cL P)^kx\|+\frac{|I_{n,k}(1,\cdots,1,0)|- |I_{n,k}(\mathbf N)|}{n^k}C^k\|x\| \\
&\le\frac{|I_{n,k}(\mathbf N)|}{n^k}\gamma(k,P,\cL,x) \eps +o(1).
\end{align*}
Using the fact that $\eps>0$ was arbitrary, and Lemma~\ref{lem:cardinality}, yields~\eqref{eq:limitpimit} and hence finishes the proof. 
\end{proof}
We can now give the proof of Theorem~\ref{thm:StrongZenoGeneral}:
\begin{proof}[Proof of Theorem~\ref{thm:StrongZenoGeneral}]
	As before, we can omit the time factor $t$ by absorbing it into the generator $\cL$.
	First we note that for all $n\in\N$ we can write
	\begin{align*}
	e^{\cL/n} = \1 + \frac{\cL_n}{n},
	\end{align*}
	for some sequence $\left(\cL_n\right)_{n\in\N}\subset \cB(X)$ which satisfies in operator norm
	\begin{align}
	\lim_{n\to\infty}\cL_n = \cL.
	\end{align}
	Hence for $x\in X$,
	\begin{align*}
	\left(Me^{\cL/n}\right)^nx &= \left(M\left(\1+\frac{\cL_n}{n}\right)\right)^nx = M^nx \\&\quad + \sum_{k=1}^n\frac{1}{n^k}\sum_{i \in \Delta_{\text{disc}}^k(n)} M^{n+1-\sum_{l=1}^k i_{l}}\cL_nM^{i_k}\cL_n\cdots M^{i_2}\cL_nM^{i_1-1}x.
	\end{align*}
	For each $k\in[n]$, defining
	\begin{align*}
	y_{n,k} := \frac{1}{n^k}\sum_{i \in \Delta_{\text{disc}}^k(n)} M^{n+1-\sum_{l=1}^k i_{l}}\cL_nM^{i_k}\cL_n\cdots M^{i_2}\cL_nM^{i_1-1}x,
	\end{align*}
	we see by Lemma~\ref{lem:GeneralizedErgodic} that
	\begin{align*}
	\lim_{n\to\infty} y_{n,k} = \frac{\left(P\cL P\right)^k}{k!}x.
	\end{align*}
	Moreover, by using the facts that $\|M\|\le1$ and $\sup_{n\in\N}\|\cL_n\|\le C$ for some finite $C>0$, together with the argument in the proof of Lemma~\ref{lem:cardinality} we see that
	\begin{align*}
	\|y_{n,k}\| \le \frac{|I_{n,k}(1,\cdots,1,0)|}{n^k} C^k \|x\| \le 2\frac{C^k}{k!}.
	\end{align*}
	Hence, by the dominated convergence theorem we obtain
	\begin{align*}
	\lim_{n\to\infty}\left(Me^{\cL/n}\right)^nx &=\lim_{n\to\infty}\Big( M^nx + \sum_{k=1}^n y_{n,k} \Big)=Px + \sum_{k=1}^\infty \frac{\left(P\cL P\right)^k}{k!}x = e^{P\cL P}Px, 
	\end{align*}
	which finishes the proof.

\end{proof}
\section{Methods to prove strong power-convergence in trace norm}
\label{sec:strongpowconv}
\label{sec:HSnorm}
In this section, we study two different conditions on quantum channels $M$ which obey the power-convergence assumption \eqref{eq:StrongPowerConv} in Theorem \ref{thm:StrongZenoGeneral}. We complete this theoretical study by proving that this strong power-convergence property is satisfied by a variety of physically relevant examples of quantum channels $M$. 

For a quantum Markov semigroup (QMS) $(T(t))_{t \ge 0}$ acting on bounded linear operators (Heisenberg picture), we want to study when the channel $M:=T_*(t_0)$ of the associated predual semigroup $(T_*(t))_{t \ge 0}$, acting on density operators (Schrödinger picture), evaluated at a fixed time $t_0>0$ is strongly power-convergent.

Since the large $n$ limit of $M^n(\rho)$ for a density operator $\rho$ is equivalent to the study of the large $t$ limit of $T_*(t)(\rho),$ we study two methods that imply strong pointwise convergence of the predual semigroup to an invariant state.  

The first approach in Section \ref{sec:HS} embeds the QMS into the space of Hilbert-Schmidt operators and uses Hilbert space techniques to analyze the ergodic properties of the semigroup. We then show in Lemma \ref{lem:getthatstrongconv} how this study on Hilbert-Schmidt operators can be extended to trace-class operators.

The second approach in Section \ref{sec:SEQ} relies on ergodic methods for von Neumann algebras and establishes that, under conditions on the commutant of Lindblad operators and the Hamiltonian, the predual semigroup is strongly ergodic.

\subsection{Embedding into Hilbert-Schmidt operators}
\label{sec:HS}
The first method we will discuss relies on the approach developed in \cite{CF00}. Here we use an embedding into the Hilbert space of Hilbert-Schmidt operators and then infer from convergence results in that Hilbert space, convergence results for trace-class operators.

We say that a state $\rho$ is faithful if $\operatorname{tr}(\rho x)=0 $ for $x\ge0$ implies $x=0.$ Let $\rho$ be a faithful state then we can define the embedding 
\[ i_{\rho}: \cB(\cH) \rightarrow \operatorname{HS}(\cH), \quad i_{\rho}(x):=\rho^{\frac{1}{4}} x \rho^{\frac{1}{4}}.\]
For any operator $T:\cB(\cH) \rightarrow \cB(\cH)$ satisfying the Schwartz property, i.e.
\begin{align}
T(x^*)T(x) \le T(x^* x),\quad \forall x\in\cB(\cH),
\end{align}
and 
\begin{align}
\tr\left(\rho T(x)\right) \le \tr\left(\rho x\right), \quad\forall x\in\cB(\cH),
\end{align}
we can define an operator $T^{\operatorname{HS}}$ on the dense subspace $i_{\rho}(\cB(\cH)) \subset \operatorname{HS}(\cH)$ by
\begin{align}
\label{eq:HSembedding}
 T^{\operatorname{HS}} \circ i_{\rho} = i_{\rho} \circ T
\end{align}
 and then uniquely extend it to a contraction $ T^{\operatorname{HS}}: \operatorname{HS}(\cH) \rightarrow \operatorname{HS}(\cH)$ (cf. \cite[Proposition 2.1, Proposition 2.2]{CF00}). Consequently, for $\left(T(t)\right)_{t\ge 0}$ being a semigroup of completely positive operators on $\cB(\cH)$ with invariant state $\rho$, i.e.~$\tr\left(\rho T(t)(x)\right)=\tr\left(\rho x\right)$ for all $x\in \cB(\cH)$, we can define the contraction semigroup $\big( T^{\operatorname{HS}}(t)\big)_{t\ge 0}$ on $\operatorname{HS}(\cH)$ by \eqref{eq:HSembedding}. Moreover, $\big( T^{\operatorname{HS}}(t)\big)_{t\ge 0}$ is strongly continuous if $\left(T(t)\right)_{t\ge 0}$ is weak$^*$ continuous \cite[Theorem 2.3]{CF00}.

Note also that this approach is equivalent to extending the semigroup onto the weighted $L^2(\rho)$ space \cite{Ko84,OZ99,KT13} given by the completion of $\cB(\cH)$ in the norm $\|i_\rho(\cdot)\|_2$ .
 
 The following lemma shows that we can conclude convergence to the invariant subspace of the semigroup $\left(T(t)\right)_{t\ge 0}$ on the bounded operators from the corresponding semigroup $\left(T^{\operatorname{HS}}(t)\right)_{t\ge 0}$ on $\operatorname{HS}(\cH).$ 
 
\comment{
\begin{lemm}
 Let $\left(T(t)\right)_{t\ge0}$ be a semigroup of completely positive contractions on $\cB(\cH)$ with faithful invariant state $\rho$. Denote by  $\big(\tilde T(t)\big)_{t\ge 0}$ the corresponding contraction semigroup on $\operatorname{HS}(\cH)$ uniquely defined by $$\tilde T(t)\circ i_\rho = i_\rho\circ T(t).$$ If for all $\sigma\in\operatorname{HS}(\cH)$ \begin{align}w-\lim_{t\to\infty}\tilde T(t)(\sigma) = \tilde P(\sigma)
 \end{align}
 in $\operatorname{HS}(\cH)$ for some operator $\tilde P\in\cB(\operatorname{HS}(\cH))$ which satisfies
 \begin{align}
\operatorname{ran}\left(\tilde P\circ i_\rho\right) \subset i_\rho\left(\cB(\cH)\right)
 \end{align}
 then for all $x\in\cB(\cH)$
 \begin{align}
 \label{eq:wstarconv}
 w^*-\lim_{t\to\infty}T(t)(x)= P(x)
 \end{align}
 in $\cB(\cH)$, with $P\in\cB(\cB(\cH))$ being uniquely defined by
 \begin{align*}
 i_\rho\circ P =  \tilde P\circ i_\rho 
 \end{align*}
 Consequently if, moreover, for each $t\ge0$ the operator $T(t)$ is unital and weak$^*$ continuous, we have that the predual semigroup $\left(T_*(t)\right)_{t\ge0}$ on $\cT(\cH)$ converges strongly, i.e.
 \begin{align}
 \label{eq:strongpredual}
\lim_{t\to\infty}T_*(t)(x) = P_*(x)
 \end{align}
 for all $x\in\cT(\cH)$ with $P_*\in\cB(\cT(\cH))$ being the predual of $P.$
\end{lemm}
\begin{proof}
Define the bounded projection $\tilde P$ on $\operatorname{HS}(\cH)$ by $\tilde P(\sigma) = \tr(\sqrt{\rho}\sigma)\sqrt{\rho}$ for any $\sigma\in\operatorname{HS}(\cH)$ and moreover the bounded projection $P$ on $\cB(\cH)$ by $P(x) = \tr(\rho x) \1$ for any $x\in\cB(\cH)$. Note that we have
\begin{align}
i_\rho\circ P = \tilde P \circ i_\rho.
\end{align}
Let $y$ be a finite rank operator on $\cH$ such that the vectors $\{e_n\}_{n=1}^N,\{f_n\}_{n=1}^N\subset\cH$ in its singular value decomposition
\begin{align}
\label{eq:singval}
y = \sum_{n=1}^N \mu_n \ket{e_n}\bra{f_n}
\end{align}
are in the dense domain of $\rho^{-1/4}$ denoted by $\cD(\rho^{-1/4})$, i.e.
\begin{align}
\label{eq:rhovierteldomain}
\{e_n\}_{n=1}^N,\{f_n\}_{n=1}^N\subset\cD(\rho^{-1/4}).
\end{align}
This shows that $\tilde y := \rho^{-1/4}y\rho^{-1/4}$ is well-defined and has finite rank. Hence, we see that for each $x\in\cB(\cH)$
\begin{align*}
\tr\Big(y\, T(t)(x)\Big) &= \tr\Big(\tilde y\, i_\rho\big( T(t)(x)\big)\Big) = \tr\Big( \tilde y \,\tilde T(t)\big(i_\rho(x)\big)\Big) \\  &\xrightarrow[t\to\infty]{}\tr\Big( \tilde y \,\tilde P\big(i_\rho(x)\big)\Big)  = \tr\Big(\tilde y\, i_\rho\big(P(x)\big)\Big) = \tr\Big(y\,P(x)\Big).
\end{align*}
 as $\tilde T(t)(i_\rho(x))$ is weakly convergent to $\tilde P(i_\rho(x))$. Since the set of all finite rank operators $y$ satisfying \eqref{eq:singval} and \eqref{eq:rhovierteldomain} is dense in the Banach space of trace-class operators $\cT(\cH)$, $\left(T(t)\right)_{t\ge0}$ is a semigroup of contractions and $P$ is bounded on $\cB(\cH)$, we can infer~\eqref{eq:wstarconv}.
 
 Next, noting that the operator on $\cT(\cH)$ defined by $P_*(x) := \tr(x)\rho$ is the predual of $P$, \eqref{eq:wstarconv} directly implies the weak convergence of the predual semigroup,
 i.e.
 \begin{align*}
 w-\lim_{t\to\infty}T_*(t)(x) = P_*(x) = \tr(x)\rho
 \end{align*}
 for all $x\in\cT(H).$ Moreover, now let $x\in\cT(\cH)$ be positive semi.definite, i.e $x\ge0$. As for each $t\ge0$ 
 Since the operators $T_*(t)$ and $P_*$ are completely positive and trace-preserving this gives
 \begin{align*}
\|T_*(t)(x)\|_1 = \tr\big(T_*(t)(x)\big) = \tr\big(x\big) = \tr\big(P_* (x)\big) = \|P_*(x)\|_1.
 \end{align*}
 Hence, by \cite{A81} we can conclude that for all positive semi-definite $x\in\cT(\cH)$,
 \begin{align*}
     \lim_{t\to\infty} T_*(t)(x) = P_*(x)
     \end{align*}
 in trace-norm. As every trace-class operator can be written as a linear combination of four positive semidefinite trace-class operators, this shows \eqref{eq:strongpredual}.
\end{proof}
}

\begin{lemm}
\label{lem:getthatstrongconv}
 Let $\left(T(t)\right)_{t\ge0}$ be a semigroup of completely positive contractions on $\cB(\cH)$ with faithful invariant state $\rho$. Denote by  $\big( T^{\operatorname{HS}}(t)\big)_{t\ge 0}$ the corresponding contraction semigroup on $\operatorname{HS}(\cH)$ uniquely defined by $$ T^{\operatorname{HS}}(t)\circ i_\rho = i_\rho\circ T(t).$$ If for all $x\in\operatorname{HS}(\cH)$ $$w-\lim_{t\to\infty} T^{\operatorname{HS}}(t)(x) = \tr\left(\sqrt{\rho}\,\sigma\right)\sqrt{\rho}$$ in $\operatorname{HS}(\cH)$ then for all $x\in\cB(\cH)$
 \begin{align}
 \label{eq:wstarconv}
 w^*-\lim_{t\to\infty}T(t)(x)= \tr\left(\rho x\right)\1,
 \end{align}
 in $\cB(\cH).$ Consequently if moreover for each $t\ge0$ the operator $T(t)$ is unital and weak$^*$ continuous, we have that the predual semigroup $\left(T_*(t)\right)_{t\ge0}$ on $\cT(\cH)$ converges strongly, i.e.
 \begin{align}
 \label{eq:strongpredual}
\lim_{t\to\infty}T_*(t)(x) = \tr(x)\rho
 \end{align}
 for all $x\in\cT(\cH).$
\end{lemm}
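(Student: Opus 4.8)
The plan is to transfer the assumed weak convergence on $\operatorname{HS}(\cH)$ to the two target statements via the intertwining relation \eqref{eq:HSembedding} that defines $T^{\operatorname{HS}}$, and then to upgrade the resulting weak convergence of the predual to trace-norm convergence using the characterization cited in \cite{A81}. First I would introduce the two candidate limit operators explicitly: the projection $\tilde P$ on $\operatorname{HS}(\cH)$ given by $\tilde P(\sigma)=\tr(\sqrt\rho\,\sigma)\sqrt\rho$, and the projection $P$ on $\cB(\cH)$ given by $P(x)=\tr(\rho x)\1$. A short computation with the cyclicity of the trace and the identity $\sqrt\rho=\rho^{1/4}\rho^{1/4}$ shows the intertwining relation $i_\rho\circ P=\tilde P\circ i_\rho$, which is the algebraic backbone of the whole argument.

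To establish \eqref{eq:wstarconv} I would test $T(t)(x)$ against a dense family of trace-class operators. Concretely, take finite-rank $y\in\cT(\cH)$ whose singular vectors all lie in the dense domain $\cD(\rho^{-1/4})$, so that $\tilde y:=\rho^{-1/4}y\rho^{-1/4}$ is a well-defined finite-rank, hence Hilbert--Schmidt, operator. For such $y$,
\[
\tr\big(y\,T(t)(x)\big)=\tr\big(\tilde y\,i_\rho(T(t)(x))\big)=\tr\big(\tilde y\,T^{\operatorname{HS}}(t)(i_\rho(x))\big),
\]
using $i_\rho\circ T(t)=T^{\operatorname{HS}}(t)\circ i_\rho$. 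Since $\tilde y\in\operatorname{HS}(\cH)$, the assumed weak convergence of $T^{\operatorname{HS}}(t)(i_\rho(x))$ to $\tilde P(i_\rho(x))$ lets me pass to the limit and, via the intertwining relation, recover $\tr(y\,P(x))$. Because $T(t)$ is a contraction on $\cB(\cH)$, the functionals $y\mapsto\tr(y\,T(t)(x))$ are uniformly bounded in $t$, so a standard $\eps/3$ argument extends the convergence from this dense family to all $y\in\cT(\cH)$, yielding \eqref{eq:wstarconv}.

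For the final assertion I would first dualize: weak$^*$ continuity of each $T(t)$ provides a predual $T_*(t)$ on $\cT(\cH)$, and \eqref{eq:wstarconv} translates into weak convergence $T_*(t)(y)\rightharpoonup P_*(y)=\tr(y)\rho$ for every $y\in\cT(\cH)$, where $P_*$ is the predual of $P$. To promote this to trace-norm convergence I would restrict to $y\ge0$ and use that unitality of $T(t)$ makes $T_*(t)$ trace-preserving, whence $\|T_*(t)(y)\|_1=\tr(y)=\|P_*(y)\|_1$; thus the norms converge trivially, and the $L^1$-type characterization of \cite{A81} upgrades weak convergence together with convergence of norms to convergence in trace norm. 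Decomposing a general $y\in\cT(\cH)$ into four positive parts then gives \eqref{eq:strongpredual} by linearity.

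I expect the main obstacle to be the density step involving $\rho^{-1/4}$: since $\rho$ is faithful but in general has unbounded inverse, $\rho^{-1/4}$ is only densely defined, and one must verify carefully both that finite-rank operators with singular vectors in $\cD(\rho^{-1/4})$ are dense in $\cT(\cH)$ and that the conjugated operators $\tilde y$ genuinely land in $\operatorname{HS}(\cH)$. Once this is in place, the remainder is bookkeeping built on the intertwining relation, the uniform contraction bound, and the cited Radon--Riesz-type property for the trace norm.
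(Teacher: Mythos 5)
Your proposal is correct and follows essentially the same route as the paper's own proof: the same pair of intertwined projections $P$ and $\tilde P$ (the paper calls the latter $P^{\operatorname{HS}}$), the same dense family of finite-rank operators $y$ with singular vectors in $\cD(\rho^{-1/4})$ conjugated to $\tilde y=\rho^{-1/4}y\rho^{-1/4}$, the same transfer of the limit through $i_\rho\circ T(t)=T^{\operatorname{HS}}(t)\circ i_\rho$, and the same final upgrade via trace-preservation and the criterion of \cite{A81} applied to positive operators, then decomposition into four positive parts. The only difference is cosmetic: you spell out the uniform-boundedness/$\eps/3$ density step and flag the $\cD(\rho^{-1/4})$ density issue explicitly, both of which the paper leaves implicit.
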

\begin{proof}
Define the bounded projection $P^{\operatorname{HS}}$ on $\operatorname{HS}(\cH)$ by $ P^{\operatorname{HS}}(\sigma) = \tr(\sqrt{\rho}\sigma)\sqrt{\rho}$ for any $\sigma\in\operatorname{HS}(\cH)$ and moreover the bounded projection $P$ on $\cB(\cH)$ by $P(x) = \tr(\rho x) \1$ for any $x\in\cB(\cH)$. Note that we have
\begin{align}
i_\rho\circ P = P^{\operatorname{HS}} \circ i_\rho.
\end{align}
Let $y$ be a finite rank operator on $\cH$ such that the vectors $\{e_n\}_{n=1}^m,\{f_n\}_{n=1}^m\subset\cH$ in its singular value decomposition
\begin{align}
\label{eq:singval}
y = \sum_{n=1}^m \mu_n \ket{e_n}\bra{f_n}
\end{align}
are in the dense domain of $\rho^{-1/4}$ denoted by $\cD(\rho^{-1/4})$, i.e.
\begin{align}
\label{eq:rhovierteldomain}
\{e_n\}_{n=1}^m,\{f_n\}_{n=1}^m\subset\cD(\rho^{-1/4}).
\end{align}
This shows that $\tilde y = \rho^{-1/4}y\rho^{-1/4}$ is well-defined and has finite rank. Hence, we see that for each $x\in\cB(\cH)$
\begin{align*}
\tr\Big(y\, T(t)(x)\Big) &= \tr\Big(\tilde y\, i_\rho\big( T(t)(x)\big)\Big) = \tr\Big( \tilde y \, T^{\operatorname{HS}}(t)\big(i_\rho(x)\big)\Big) \\  &\xrightarrow[t\to\infty]{}\tr\Big( \tilde y \, P^{\operatorname{HS}}\big(i_\rho(x)\big)\Big)  = \tr\Big(\tilde y\, i_\rho\big(P(x)\big)\Big) = \tr\Big(y\,P(x)\Big).
\end{align*}
 as $ T^{\operatorname{HS}}(t)(i_\rho(x))$ is weakly convergent to $ P^{\operatorname{HS}}(i_\rho(x))$. As the set of all finite rank operators $y$ satisfying \eqref{eq:singval} and \eqref{eq:rhovierteldomain} is dense in the trace-class operators $\cT(\cH)$, $\left(T(t)\right)_{t\ge0}$ is a semigroup of contractions and $P$ a bounded operator on $\cB(\cH)$, this already shows \eqref{eq:wstarconv}.
 
 Noting now that the operator on $\cT(\cH)$ defined by $P_*(x) := \tr(x)\rho$ is the predual of $P$, \eqref{eq:wstarconv} directly gives the weak convergence of the predual semigroup,
 i.e.
 \begin{align*}
 w-\lim_{t\to\infty}T_*(t)(x) = P_*(x) = \tr(x)\rho
 \end{align*}
 for all $x\in\cT(H).$ Moreover, let now $x\in\cT(\cH)$ be positive semidefinite, i.e. $x\ge0$. As for each $t\ge0$ the operators $T_*(t)$ and $P_*$ are completely positive and trace-preserving this gives
 \begin{align*}
\|T_*(t)(x)\|_1 = \tr\big(T_*(t)(x)\big) = \tr\big(x\big) = \tr\big(P_* (x)\big) = \|P_*(x)\|_1.
 \end{align*}
 Hence, by \cite{A81} we can conclude that for all positive semidefinite $x\in\cT(\cH)$
 \begin{align*}
     \lim_{t\to\infty} T_*(t)(x) = P_*(x)
     \end{align*}
 in trace-norm. As every trace-class operator can be written as a linear combination of four positive semidefinite trace-class operators this shows \eqref{eq:strongpredual}.
\end{proof}
We will use the construction above to show strong power-convergence for quantum channels $M$. The main idea here is to start with the dual channel in the Heisenberg picture and transform it into a contraction on the space of Hilbert-Schmidt operators using the embedding $i_\rho$. On the latter space one can then show uniform convergence towards the invariant subspace if a certain spectral gap condition is satisfied.

\comment{ In particular we will use the following proposition to show convergence (cf.~\cite[Theorem 2.3]{L89}, \cite[Proposition 2.5]{CF00} and \cite[equation 9]{CFGQ08} for proof and discussion).
\begin{prop}
Let $\cL$, with domain $\cD(\cL)$, be the generator of a strongly continuous contraction semigroup $\left(T(t)\right)_{t\ge 0}$ on a Hilbert space $\mathcal{K}$. We define the spectral gap of $\cL$ by
\begin{align}
\operatorname{gap}\left(\cL\right) = \inf\left\{ -\Re\langle x,\cLx\rangle \Big|\, x\in\cD(\cL),\,\|x\|=1,\,x\in\left(\operatorname{ker}(\cL)\right)^\perp\right\}. 
\end{align}
Then $\operatorname{gap}(\cL)$ is the biggest number $\lambda$ for which for every element $x\in\mathcal{K}$
\begin{align}
 \left\|T(t)x - Px\right\| \le e^{-\lambda t} \left\|x - Px\right\|, 
\end{align}
where $P$ is the orthogonal projection onto $\operatorname{ker}(\cL).$
\end{prop}
}

\begin{example}[Quantum Ornstein-Uhlenbeck semigroup,\cite{CFL00}] 
\label{ex:qOU}
The quantum Ornstein-Uhlenbeck semigroup (qOU) in the Heisenberg picture, i.e. on $\cB(\cH)$, is generated by 
\begin{align}
\cL x = -\frac{\mu^2}{2}\left(a^*a x +x a^*a - 2a^*xa\right) - \frac{\nu^2}{2}\left(aa^*x + xaa^* - 2axa^*\right)
\end{align}
and will be denoted by $\left(T(t)\right)_{t\ge0}.$
Here $0<\lambda<\mu$ and $a^*$ and $a$ are the creation and annihilation operators satisfying the canonical commutation relations $[a,a^*]= \1$. The qOU semigroup arises in quantum optics models of masers and lasers, and in weak-coupling models of open quantum systems. The faithful invariant state of the qOU semigroup is given by 
\[ \rho = (1-\nu) \sum_{n \ge 0} \nu^n \vert n \rangle \langle n \vert \]
where $\nu= \frac{\lambda^2}{\mu^2}$ and $\left\{\ket{n}\right\}_{n\ge 0}$ denotes the eigenbasis of the number operator $N =a^*a.$

Let $\left(T^{\operatorname{HS}}(t)\right)_{t\ge0}$ be the corresponding contraction semigroup on the Hilbert-Schmidt operators defined by
\begin{align*}
T^{\operatorname{HS}}(t)\circ i_\rho = i_\rho\circ T(t).
\end{align*}
The generator of $T^{\operatorname{HS}}(t)$ denoted by $\cL^{\operatorname{HS}}$ is self-adjoint, has compact resolvent, and can be explicitly diagonalized as
\[ \cL^{\operatorname{HS}} =- \left(\frac{\mu^2-\lambda^2}{2} \right) \sum_{n\ge 0} n P_{E_n} \]
where $P_{E_n}$ is the orthogonal projection onto 
\[ E_n:=\operatorname{span}\{ \rho^{1/4} p_n(Q_z) \rho^{1/4}: \vert z \vert=1\}\]
with $Q_z= 2^{-1/2}\left( \bar z a + za^*\right)$, polynomials $p_n$ given by 
\[ p_n(t) = \sum_{2r\le n} \left(-\frac{\mu^2+\lambda^2}{4(\mu^2-\lambda^2)} \right)^r \frac{n!}{r!(n-2r)!}t^{n-2r},\]
and eigenvalues $\left(-n\left(\frac{\mu^2-\lambda^2}{2}\right)\right)_{n \ge 0}.$ 

By the spectral mapping theorem we have  $$\Spec\left(T^{\operatorname{HS}}(t)\right)=\Spec\left(e^{  \cL^{\operatorname{HS}}t} \right)= \exp\left(-nt\left(\frac{\mu^2-\lambda^2}{2}\right) \right)_{n \ge 0}.$$ Using Corollary~\ref{cor:uniformpowerconv} and the fact that $T^{HS}(t)$ is self-adjoint and hence the corresponding quasi-nilpotent operator \eqref{eq:Nilpotent} at the isolated spectral point 1 is zero, this implies that 
\begin{align*}
\lim_{t\to\infty} T^{\operatorname{HS}}(t) = P_{E_0} 
\end{align*}
uniformly in $\operatorname{HS}(\cH)$, with $P_{E_0}$ being explicitly given by $P_{E_0}(x)= \tr\left(\sqrt{\rho}x\right)\sqrt{\rho}$. Hence, using Lemma~\ref{lem:getthatstrongconv} we see that for the qOU semigroup in the Heisenberg picture we get 
\begin{align*}
w^*-\lim_{t\to\infty}T(t)(x) = \tr\left(\rho x\right) \1 
\end{align*}
for all $x\in\cB(\cH)$, and strong convergence of the corresponding qOU semigroup in the Schrödinger picture given by the predual $T_*(t)$
\begin{align*}
\lim_{t\to\infty}T_*(t)(x) = \tr\left( x\right)\rho.
\end{align*}
Hence, for any fixed $t_0>0$ the quantum channel $M:=T_*(t_0)\in\cB\left(\cT(\cH)\right)$ is strongly power-convergent and satisfies the condition \eqref{eq:StrongPowerConv} in Theorem~\ref{thm:StrongZenoGeneral}.
\end{example}
\subsection{Strongly ergodic quantum Markov semigroups}
\label{sec:SEQ}
In this section we discuss the ergodic approach of \cite{FV82,DFR10} to identify strongly ergodic predual QMS.

Consider the minimal Quantum Markov semigroup $( T(t))_{t \ge 0}$ (see \cite{DFR10} for definitions) acting on the space of bounded linear operators $\mathcal B(\mathcal H)$, 
whose Lindblad operators $(L_l)$ are closed and have domains $D(L_l) \subset D(G)$, where $G$ is a generator of some $C_0$-semigroup such that 
\begin{itemize}
    \item For all $u,v \in D(G)$
    \[\langle Gv, u \rangle +\langle v, Gu \rangle + \sum_{l \ge 1 } \langle L_l v,L_l u \rangle=0.\]
    \item There exists a dense linear subspace $D$ of $\cH$ such that $D \subset D(G) \cap D(G^*) \cap D(L_l) \cap D(L_l^*)$ such that
    \begin{itemize}
        \item The operator $H=(G-G^*)/2$ is essentially self-adjoint and the unitary group $(e^{itH})_{t \in \mathbb R}$ satisfies $e^{itH}(D) \subset D(G)$ for all times $t \in \mathbb{R}.$
        \item The operator $G_0$ defined on $u \in D$ by $G_0=(G+G^*)/2$ is essentially self-adjoint and $D(G) \subset D(G_0) \subset D(L_l)$ for all $l \ge 1.$
    \end{itemize}
\end{itemize}

We define the fixed-point algebra of bounded linear operators left invariant by the QMS \[\mathcal F(T):=\{ X \in \cB(\cH): T(t)(X)=X \text{ for all } t \ge 0 \}\] and the decoherence-free subalgebra 
\begin{equation}
    \begin{split}
        \mathcal N(T):=\{&X \in \cB(\cH):T(t)(X^*X)=T(t)(X^*)T(t)(X) \text{ \& }\\ &T(t)(XX^*)=T(t)(X)T(t)(X^*)\}.
    \end{split}
\end{equation}

Let $\rho$ be a faithful normal invariant state of the predual semigroup $(T_*(t))_{t\ge 0}$. If the fixed-point algebra and decoherence-free algebra coincide, i.e. $\mathcal F(T)=\mathcal N(T),$ then the predual semigroup satisfies by a theorem due Frigerio and Verri, \cite{FV82}, that
\begin{equation}\lim_{t \rightarrow \infty} T_*(t)(\sigma)=\rho \text{ for all states } \sigma.
\end{equation}

A useful commutator condition to verify $\mathcal F(T)=\mathcal N(T)$ for practical examples of Quantum Markov semigroups has been identified in \cite[Theo $3.3$]{DFR10}.

In fact the above criterion can be applied to identify the following physically relevant strongly convergent quantum dynamical semigroups (QDS) \cite[4.5,4.6]{FR}, by which we mean the predual semigroup (Schr\"odinger picture) associated to the minimal QMS.

\begin{example}[Jaynes-Cummings model]
\label{ex:Jaynes-Cumming}
The quantum Markov semigroup for the Jaynes-Cummings model is defined using Lindblad operators $L_1=\mu a, L_2= \lambda a^*, L_3=R \cos(\phi \sqrt{aa^*})$, and $L_4=Ra^*\frac{\sin(\phi \sqrt{a^*a})}{\sqrt{a^*a}}$ with parameters $\varphi,R \ge 0$ and $\lambda< \mu.$

The semigroup then has a stationary state given by 
\[\rho_{\infty}:=\sum_{n=0}^{\infty} \pi_n \vert e_n \rangle \langle e_n \vert\]
where 
\[ \pi_n=c \prod_{k=1}^n \frac{\lambda^2k + R^2 \sin^2(\phi \sqrt{k})}{\mu^2 k},\] 
with normalization constant $c>0.$
\end{example}

\begin{example}[Emission-Absorption process]\label{ex:EmAbs}
The emission-absorption model is defined using Lindblad operators $L_1 = \nu a^*a$, $L_2=\mu a$ and Hamiltonian $H=\xi(a+a^*)$ where $\mu,\nu>0$ and $\xi \in \RR$. It follows from \cite[Corrollary $6.3$]{FR} that this QDS is strongly convergent to a unique invariant state.
\end{example}

\begin{example}[Two-photon absorption and emission process,\cite{FQ05,CFGQ08}] 
\label{ex:photonabs}
The two photon absorption process is the simultaneous absorption of two photons by molecules or atoms. In the Heisenberg picture, the coupling of the one-mode electromagnetic field with a bosonic gaussian positive temperature reservoir of two-photon absorbing atoms is described by the following generator in terms of operators $b:=a^2$, where $a$ is the usual annihilation operator
\begin{equation}
\mathcal Lx = i \kappa[b^{*}b,x]- \frac{\mu^2}{2}(b^*bx  + x b^*b- 2 b^*x b )-\frac{\lambda^2}{2}(bb^*x  + x b^*b - 2 bx b^*)
\end{equation}
with $\mu^2 = e^{\beta \omega}/(e^{\beta \omega}-1)$ and $\lambda^2 = 1/(e^{\beta \omega}-1)$ are the absorption and emission rates with characteristic frequency $\omega$ and inverse temperature $\beta.$
Writing $\left(T(t)\right)_{t\ge0}$ for the semigroup on $\cB(\cH)$ and $\left(T_*(t)\right)_{t\ge0}$ for the corresponding predual semigroup in the Schrödinger picture, it was shown in \cite[Proposition 7.1]{FQ05} that
\begin{align}
\lim_{t\to\infty}T_*(t)(x) = \tr\left(\Pi_e x\right)\rho_e + \tr\left(\Pi_{\rho_o} x\right)\rho_o.
\end{align}
Here $\rho_e = (1-\nu^2) \sum_{k \ge 0} \nu^{2k} \vert 2k \rangle \langle 2k \vert$ and $\rho_0 = (1-\nu^2) \sum_{k \ge 0} \nu^{2k} \vert 2k+1 \rangle \langle 2k+1 \vert$ with $\nu=\lambda/\mu$ form a basis of the invariant subspace of $T_*(t)$ and $\Pi_e$ and $\Pi_o$ denote the projections onto their support. Hence, for any fixed time $t_0>0$ the quantum channel $M := T_*(t_0)\in\cB(\cT(\cH))$ is strongly power-convergent and hence satisfies the assumption \eqref{eq:StrongPowerConv} in Theorem~\ref{thm:StrongZenoGeneral}.

\comment{
For $\alpha \in (0,1)$ we then define
\[ \rho= \alpha \rho_e+ (1-\alpha)\rho_0\]
with $\rho_e = (1-\nu^2) \sum_{k \ge 0} \nu^{2k} \vert 2k \rangle \langle 2k \vert$ and $\rho_0 = (1-\nu^2) \sum_{k \ge 0} \nu^{2k} \vert 2k+1 \rangle \langle 2k+1 \vert$ where $\nu=\lambda/\mu.$

The spectral gap of the generator on Hilbert-Schmidt operators, $ L$, is for any set of parameters $0 < \lambda \le \mu$ given in terms of the emission rate only and is equal to $\lambda^2>0.$ In particular, for $x \in \operatorname{HS}(\mathcal H)$ let $Px= \langle \sqrt{\rho_0}, x \rangle \sqrt{\rho_0}+ \langle \sqrt{\rho_e}, x \rangle \sqrt{\rho_e}$ be the projection onto $\ker(L)$ then
\[ \Vert (e^{Lt}-P)x \Vert_{2} \le  e^{-\lambda^2 t} \Vert x-Px \Vert_2. \]
}
\end{example}

Further examples of quantum dynamical semigroups that converge strongly to an invariant state include the quadratic open quantum harmonic oscillator \cite{DFY20}.

\comment{
\begin{example}[Quantum Brownian motion, \cite{CFL00}]
\label{ex:qBM}
In the above Example \ref{ex:qOU}, if we choose $\lambda=\mu$, the invariant state $\rho$ ceases to exist. However, one can still assign a generator to this configuration on Hilbert-Schmidt operators by considering the Dirichlet form associated to $-L$, defined in Example \ref{ex:qOU}, given by 
\begin{equation}
    \mathcal E(\xi):=\frac{1}{2} \left( \Vert \mu a\xi - \lambda \xi a\Vert^2 + \Vert \mu a \xi^* - \lambda a \xi^* \Vert^2 \right).
\end{equation}
For $\lambda=\mu$ this Dirichlet form reduces to 
\[ \mathcal E(\xi):= \lambda \Vert a\xi-\xi a \Vert^2.\]
Thus, there is a positive self-adjoint operator $L_{\operatorname{BM}}$ associated to this Dirichlet form on Hilbert-Schmidt operator. The operator $-L_{\operatorname{BM}}$ generates a contraction semigroup associated to \emph{quantum Brownian motion}. The spectrum of $L_{\operatorname{BM}}$ can be explicitly computed \cite[Theo $8.1$]{CFL00} and is \[ \Spec(L_{\operatorname{BM}})=[0,\infty). \]
\end{example}
}

\section{Open problems}
\label{sec:open}
In the following, we list various questions that have only been partly addressed in this article or would require tools beyond the scope of this article:

\begin{enumerate}
\item Throughout the entire article, we only consider quantum channels $M$ with finite point spectrum on the unit circle. It would be desirable to develop tools which would also allow the study of the quantum Zeno effect for quantum channels with continuous point spectrum on the unit circle. 

\item In finite dimensions it is well-known that quantum channels can only have discrete spectrum on the unit circle with vanishing nilpotent parts. It would be interesting to see whether this property also holds for quantum channels acting on infinite-dimensional spaces or whether there exist examples which violate this property.

\item While our results are the first to provide quantitative convergence rates for infinite-dimensional quantum channels, it would be natural to investigate whether these convergence rates are optimal.

\item It would also be interesting to see whether Theorem \ref{thm:StrongZenoGeneral} could be extended to the case of unbounded generators.

\item We saw that the generator studied in Example \ref{ex:qOU} \comment{and \ref{ex:qBM}} is a self-adjoint operator on the Hilbert space of Hilbert-Schmidt operators. In particular, for such generators we have the following convergence result \cite[Corr. $1$]{MaShvi03}:
\begin{prop}\
\label{prop:conv}
Let $(-A)$ be the generator of a holomorphic strongly continuous semigroup on a Hilbert space $\cH$, where 
\[ \Vert e^{-z A} \Vert \le 1 \text{ for all } z \in \{\xi \in \mathbb C \setminus \{0\}; \vert \operatorname{arg}(\xi)\vert < \tau\}, \tau \in (0,\pi/2].\]
Let $P$ be an orthogonal projection, then there exists a continuous degenerate semigroup $(S(t))_{t \geq 0}$, i.e.~$S(0)$ is a bounded projection such that $S(t)$ is strongly continuous on $S(0)\cH,$
\[ S(t)x = \lim_{n \rightarrow \infty} (e^{-tA/n}P)^n x \text{ for all } x \in \cH.\]
\end{prop}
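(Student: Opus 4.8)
The plan is to realise the limit operator through the sectorial form attached to $A$ and then to deduce convergence from a (degenerate) Chernoff product theorem, with holomorphy supplying the indispensable smoothing.

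First I would use that $(-A)$ generating a holomorphic contraction semigroup on $\cH$ is equivalent to $A$ being m-sectorial \cite{Kato}: there is a closed, densely defined, sectorial sesquilinear form $a$ with form domain $V=\cD(a)\supset\cD(A)$ and $a(u,v)=\langle Au,v\rangle$ for $u\in\cD(A)$, $v\in V$, whose numerical range lies in a sector of half-angle $\pi/2-\tau$. Holomorphy also yields the smoothing bounds $\|A^{\theta}e^{-sA}\|\le C_\theta s^{-\theta}$ and $\|(\1-e^{-sA})A^{-\theta}\|\le C_\theta s^{\theta}$ for $s>0$ and $0<\theta\le1$ \cite{EngNag00,H06}, which I would record for later use.

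Next I would construct the limit object. Restricting $a$ to $V_P:=V\cap\ran(P)$ defines a form $a_P$ on the Hilbert space $\ran(P)$; because $P$ is an \emph{orthogonal} projection, restriction to the closed subspace $\ran(P)$ cannot enlarge the numerical range, so $a_P$ is again closed and sectorial. Let $A_P$ be the associated m-sectorial operator on $\ran(P)$, generating a holomorphic contraction semigroup $(e^{-tA_P})_{t\ge0}$ there. Setting $S(t):=e^{-tA_P}P$ for $t>0$ and $S(0):=P$, and using $Pe^{-sA_P}=e^{-sA_P}$ on $\ran(P)$, one verifies $S(t)S(s)=S(t+s)$, so $(S(t))_{t\ge0}$ is a degenerate semigroup with $S(0)=P$ a bounded projection and $t\mapsto S(t)x$ strongly continuous on $S(0)\cH=\ran(P)$.

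For the convergence, note first that each factor $e^{-tA/n}P$ is a contraction, so $\|(e^{-tA/n}P)^n\|\le1$ uniformly in $n$ and, by equicontinuity, it is enough to prove convergence on a dense set. Writing $(e^{-tA/n}P)^nx=e^{-tA/n}\,(Pe^{-tA/n})^{n-1}Px$ and using $e^{-tA/n}\to\1$ strongly, I reduce to the restricted product on $\ran(P)$: with $F(s):=Pe^{-sA}|_{\ran(P)}$, a contraction with $F(0)=\1_{\ran(P)}$, one has $(Pe^{-tA/n})^{n}|_{\ran(P)}=F(t/n)^{n}$. By the Chernoff product theorem it then suffices to show that the strong derivative $\lim_{s\downarrow0}s^{-1}(\1-F(s))y=A_Py$ holds on a core of $A_P$; granting this, $F(t/n)^{n}\to e^{-tA_P}$ strongly on $\ran(P)$, and combined with the reduction above this gives $(e^{-tA/n}P)^{n}x\to e^{-tA_P}Px=S(t)x$ for every $x\in\cH$. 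Alternatively one may telescope $F(t/n)^n$ against $e^{-tA_P}$ and sum the second-order single-step errors using the smoothing bounds of the first step.

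The step I expect to be the main obstacle is exactly the identification of the infinitesimal generator of the product with $A_P$ on a core, i.e.\ the verification of the Chernoff hypothesis. The difficulty is that $A_P$ lives only on the form level, so the product $PAP$ has no meaning as a composition of operators; the set $\cD(A)\cap\ran(P)$ on which $s^{-1}(\1-F(s))$ converges in operator sense may be far too small to be a core, and one must instead argue that $V\cap\ran(P)$ is the correct form domain and pass from form convergence to the strong derivative. It is precisely here that the hypotheses are used in full: holomorphy provides the smoothing $\|A^{\theta}e^{-sA}\|\le C_\theta s^{-\theta}$ that converts form information into strong estimates, while orthogonality of $P$ guarantees the preservation of sectoriality under compression. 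The analogous statement is known to fail for general $C_0$-contraction semigroups or for oblique projections, which is why neither hypothesis can be dropped.
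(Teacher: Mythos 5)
Your overall strategy---construct the limit via the sectorial form of $A$ and then prove convergence by a Chernoff-type argument---starts from the right object (this is indeed how the cited source proceeds; note the paper itself gives no proof of this proposition but quotes it as \cite[Corr. 1]{MaShvi03}, whose self-adjoint core goes back to Kato's 1978 theorem on the degenerate Trotter formula). However, your construction of the limit is incorrect as stated. The restricted form $a_P$ with domain $V\cap\ran(P)$ is closed and sectorial, but it need \emph{not} be densely defined in $\ran(P)$; the representation theorem therefore produces an m-sectorial operator $A_P$ only on the smaller Hilbert space $\overline{V\cap\ran(P)}$, and the true limit is $S(t)=e^{-tA_P}P_0$ with $S(0)=P_0$ the orthogonal projection onto $\overline{V\cap\ran(P)}$, which can be a \emph{proper} subspace of $\ran(P)$. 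Concretely, take $\cH=L^2(\RR)$, $A=-\Delta$ (form domain $H^1(\RR)$), and $P=\kb{f}$ with $\|f\|=1$ and $f\notin H^1(\RR)$. Then
\begin{equation*}
(e^{-tA/n}P)^n x=\langle f,x\rangle\,\langle f,e^{-tA/n}f\rangle^{n-1}e^{-tA/n}f,
\qquad
n\left(1-\langle f,e^{-tA/n}f\rangle\right)\;\uparrow\; t\int_{\RR}|\xi|^2|\hat f(\xi)|^2\,d\xi=+\infty
\end{equation*}
by monotone convergence, so $\langle f,e^{-tA/n}f\rangle^{n-1}\le e^{-(n-1)(1-\langle f,e^{-tA/n}f\rangle)}\to0$ and the Zeno product converges to $0$ for every $t>0$: here $S(t)\equiv 0$ and $S(0)=0\neq P$. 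So your claim $S(0)=P$, and with it the asserted limit $e^{-tA_P}Px$, is false in general; the proposition is only true because it does not insist that $S(0)=P$.

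The second, deeper problem is the step you yourself flag: the Chernoff hypothesis is not merely hard to verify, it fails. Chernoff's theorem can only produce a non-degenerate $C_0$-semigroup on all of $\ran(P)$, so it cannot yield the limit above, which annihilates $\ran(P)\ominus\overline{V\cap\ran(P)}$. Even when $V\cap\ran(P)$ is dense in $\ran(P)$, the domain $\cD(A_P)$ is defined purely at form level and in general contains no vectors of $\cD(A)$ (in the example, $\cD(A)\cap\ran(P)=\{0\}$); for $y$ in the form domain only, the difference quotient $s^{-1}(\1-e^{-sA})y$ does not converge strongly (for self-adjoint $A\ge0$ it converges iff $y\in\cD(A)$), and membership in $\cD(A_P)$ only controls the scalar quantity $s^{-1}\langle y,(\1-e^{-sA})y\rangle$. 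No smoothing bound of the type $\|A^{\theta}e^{-sA}\|\le C_\theta s^{-\theta}$ repairs this, because the obstruction sits at $s\downarrow0$ applied to a fixed vector outside $\cD(A)$; the same obstruction defeats your alternative telescoping argument, whose single-step errors are not $o(1/n)$ on such vectors. This is precisely why Kato's proof, and its sectorial extension in \cite{MaShvi03}, do not use Chernoff's theorem at all, but instead establish strong resolvent convergence of $s^{-1}(\1-Pe^{-sA}P)$ on $\ran(P)$ via monotone convergence of quadratic forms (and its sectorial refinement), which automatically detects the degenerate subspace. Your write-up correctly isolates where the difficulty lies, but as it stands the central convergence step is missing and the identification of the limit is wrong.
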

This result overcomes the issue of explicitly identifying the generator of the quantum Zeno dynamic. It would be interesting to see if a similar result also holds on spaces of trace-class operators for quantum dynamical semigroups.
\end{enumerate}

\end{document}